\title{Long paths make pattern-counting hard, and deep trees make it harder}
\author{Vít Jelínek}
{Computer Science Institute, Charles University, Prague, Czechia}
{jelinek@iuuk.mff.cuni.cz}{https://orcid.org/0000-0003-4831-4079}{Supported by project 18-19158S of 
  the Czech Science 
  Foundation.}
\author{Michal Opler}
{Computer Science Institute, Charles University, Prague, Czechia}
{opler@iuuk.mff.cuni.cz}{https://orcid.org/0000-0002-4389-5807}{Supported by project 21-32817S of 
  the Czech Science 
  Foundation and by project SVV–2020–260578.}
\author{Jakub Pekárek}
{Department of Applied Mathematics, Charles University, Prague, Czechia}
{pekarej@kam.mff.cuni.cz}{https://orcid.org/0000-0002-5409-3930}{}
\authorrunning{V. Jelínek, M. Opler and J. Pekárek} 
\keywords{Permutation pattern matching, subexponential algorithm, conditional lower bounds,
  tree-width} 
\newif\ifdraft
\newcommand{\Grid}{\text{Grid}}
\newcommand{\Geom}{\text{Geom}}
\newcommand{\PPM}{\textsc{PPM} }
\newcommand{\PPMx}{\textsc{PPM}}
\newcommand{\SPPM}{\textsc{\#PPM} }
\newcommand{\SPPMx}{\textsc{\#PPM}}
\newcommand{\PPPM}[1]{\textsc{#1-Pattern PPM}}
\newcommand{\SCPPPM}[1]{\textsc{#1-Pattern SCPPM}}
\newcommand{\APPPM}[1]{\textsc{#1-Pattern APPM}}
\newcommand{\PSPPM}[1]{\textsc{#1-Pattern \#PPM}}
\newcommand{\PSI}{\textsc{PSI} }
\newcommand{\PSIx}{\textsc{PSI}}
\newcommand{\cC}{\mathcal{C}}
\newcommand{\cD}{\mathcal{D}}
\newcommand{\cF}{\mathcal{F}}
\newcommand{\cL}{\mathcal{L}}
\newcommand{\cP}{\mathcal{P}}
\newcommand{\cQ}{\mathcal{Q}}
\newcommand{\cT}{\mathcal{T}}
\newcommand\vj[1]{\todo[color=green!60!white]{\emph{VJ: #1}}}
\newcommand\mo[1]{\todo[color=orange!60!white]{\emph{MO: #1}}}
\newcommand\jp[1]{\todo[color=red!60!white]{\emph{JP: #1}}}
\newcommand\vj[1]{}
\newcommand\mo[1]{}
\newcommand\jp[1]{}
\def\todo{}{}
\newcommand{\cM}{\mathcal{M}}
\newcommand{\cN}{\mathcal{N}}
\DeclareMathOperator{\tw}{tw}
\DeclareMathOperator{\Av}{Av}
\DeclareMathOperator{\red}{red}
\DeclareMathOperator{\St}{St}
\newcommand{\Inc}{\mathrel{
		\begin{tikzpicture}[line cap=round, line join=round]
      \useasboundingbox (0.9ex, 0.9ex) rectangle (2.6ex,2.6ex);
			\draw (1ex, 1ex) rectangle (2.5ex, 2.5ex)
			(1ex, 1ex) -- (2.5ex, 2.5ex);
		\end{tikzpicture}
}}
\newcommand{\Dec}{\mathrel{
		\begin{tikzpicture}[line cap=round, line join=round]
      \useasboundingbox (0.9ex, 0.9ex) rectangle (2.6ex,2.6ex);
			\draw (1ex, 1ex) rectangle (2.5ex, 2.5ex)
			(2.5ex, 1ex) -- (1ex, 2.5ex);
		\end{tikzpicture}
}}
\newcommand{\oset}[3][1.75ex]{%
  \mathrel{\mathop{#3}\limits^{
      \vbox to#1{\kern-2\ex@
        \hbox{$\tiny#2$}\vss}}}}
\newcommand{\oto}[1]{
\oset{\makebox[0pt]{\text{#1}}}{\to}
}
\newcommand{\idtile}[1]{\textsf{Id}(#1)}
\newcommand{\patterntile}[1]{\textsf{Pattern}(#1)}
\newcommand{\anchortile}{\textsf{Anchor}}
\newcommand{\assigntile}{\textsf{Assign}}
\newcommand{\branchtile}[2]{\textsf{Branch}(#1, #2)}
\newcommand{\testtile}[2]{\textsf{Test}(#1, #2)}
\newcommand{\mergetile}[2]{\textsf{Merge}(#1, #2)}
\newcommand{\msfid}{\mathsf{id}}
\newcommand{\msfa}{\mathsf{a}}
\newcommand{\msfb}{\mathsf{b}}
\newcommand{\msft}{\mathsf{t}}
\newcommand{\msfm}{\mathsf{m}}
\begin{document}

\maketitle

\begin{abstract}
We study the counting problem known as \#PPM, whose input is a pair of permutations $\pi$ and  $\tau$ 
(called \emph{pattern} and \emph{text}, respectively), and the task is to find the number of 
subsequences of $\tau$ that have the same relative order as~$\pi$. A simple brute-force approach 
solves \#PPM for a pattern of length $k$ and a text of length $n$ in time $O(n^{k+1})$, while 
Berendsohn, 
Kozma and Marx have recently shown that under the exponential time hypothesis (ETH), it cannot be 
solved in time $f(k) n^{o(k/\log k)}$ for any function~$f$. In this paper, we consider the 
restriction of \#PPM, known as \PSPPM{$\cC$}, where the pattern $\pi$ must belong to a hereditary 
permutation class~$\cC$. Our goal is to identify the structural properties of $\cC$ that determine 
the complexity of  \PSPPM{$\cC$}.

We focus on two such structural properties, known as the \emph{long path property} (LPP) and the 
\emph{deep tree property} (DTP). Assuming ETH, we obtain these results:
\begin{enumerate}
\item If $\cC$ has the LPP, then \PSPPM{$\cC$} cannot be solved in time 
$f(k)n^{o(\sqrt{k})}$ for any function $f$, and
\item if $\cC$ has the DTP, then \PSPPM{$\cC$} cannot be solved in time 
$f(k)n^{o(k/\log^2 k)}$ for any function~$f$.
\end{enumerate}
Furthermore, when $\cC$ is one of the so-called monotone grid classes, we show that if $\cC$ has the 
LPP but not the DTP, then \PSPPM{$\cC$} can be solved in time $f(k)n^{O(\sqrt k)}$. In particular, 
the lower bounds above are tight up to the polylog terms in the exponents.
\end{abstract}
\vfill
\pagebreak
\section{Introduction}
One of the most frequently studied algorithmic problems related to permutations is known as 
\textsc{Permutation pattern matching} (or \PPMx). The input of \PPM is a pair of permutations $\tau$ 
(the `text') of length $n$ and $\pi$ (the `pattern') of length $k$, and the goal is to determine 
whether $\tau$ contains $\pi$ as a subpermutation (see Section~\ref{sec-prelims} for formal 
definitions).

In full generality, \PPM is NP-complete, as shown by Bose et al.~\cite{BBL}. Thus most research 
into \PPM focuses either on improved exact algorithms, or on identifying special types of inputs 
for which the \PPM can be solved in polynomial time, or at least in subexponential time. Note 
that a direct brute-force approach solves \PPM in time $O(n^{k+1})$.

A particularly fruitful technique to solving \PPM has been proposed by Ahal and 
Rabinovich~\cite{AR08_subpattern}, who showed that \PPM can be solved in time $n^{O(\tw(\pi))}$, 
where $\tw(\pi)$ denotes the tree-width of the so-called \emph{incidence graph} of the 
pattern~$\pi$. The bound was subsequently tightened to $n^{\tw(\pi)+1}$ by Berendsohn, Kozma and 
Marx~\cite{Berendsohn19}, who have used it to show that \PPM can be solved in time $n^{k/4+o(k)}$. 

Another approach to \PPMx, due to Guillemot and Marx~\cite{Guillemot2014} (with a slight 
improvement by Fox~\cite{Fox}) shows that the problem can be solved in time $n\cdot 2^{O(k^2)}$, 
implying that the problem is fixed-parameter tractable with parameter~$k$.

Closely related to \PPM is its counting version \SPPMx, whose goal is to compute the number of 
occurrences of the pattern $\pi$ in the text~$\tau$. Berendsohn et al.~\cite{Berendsohn19} show 
that their bounds of $O(n^{\tw(\pi)+1})$ and $n^{k/4+o(k)}$ for \PPM also apply to solving \SPPMx. 
In contrast, the FPT result for \PPM by Guillemot and Marx~\cite{Guillemot2014} likely does not 
extend to \SPPMx, since Berendsohn et al.~\cite{Berendsohn19} show that, under the exponential time 
hypothesis (ETH), \SPPM cannot be solved in time $f(k)n^{o(k/\log k)}$, for any function~$f$.

Given that both \PPM and \SPPM are hard in general, it is natural to consider their complexity on 
restricted inputs. A common approach is to fix a hereditary class $\cC$ of permutations, and study 
the restriction of \PPM or \SPPM to inputs where the pattern $\pi$ belongs to~$\cC$. Such 
restriction is known as \PPPM{$\cC$} and \PSPPM{$\cC$}, respectively. It follows from the results 
of Ahal and Rabinovich~\cite{AR08_subpattern} and Berendsohn et al.~\cite{Berendsohn19}, that the 
restricted problems are polynomial whenever the function $\tw(\pi)$ is bounded on the class~$\cC$. 
This idea is the basis for previous results establishing sharp thresholds between polynomial and 
NP-hard cases of \PPPM{$\cC$}~\cite{JeKy,Jelinek2020}. In fact, in all the known cases when 
\PPPM{$\cC$} and \PSPPM{$\cC$} are polynomial, the class $\cC$ has bounded tree-width.

While distinguishing the polynomial cases of \PPPM{$\cC$} from the NP-hard ones is obviously the 
main focus of research, it is also of interest to distinguish subexponential cases from those cases 
which (under suitable complexity assumptions, such as the ETH) require exponential or 
near-exponential time. Here again, the tree-width plays a key role. It is convenient to associate to 
a 
class $\cC$ its \emph{tree-width growth function}
\[
\tw_\cC(k)=\max\{\tw(\pi);\; \pi\in\cC\land |\pi|=k\}.
\]
Indeed, Berendsohn et 
al.~\cite{Berendsohn19}, extending previous results by Guillemot and Vialette~\cite{GV09_321}, have 
shown that when $\cC$ is the class of 2-monotone permutations (i.e., the permutations merged from 
two monotone sequences), then $\tw_\cC(k)=O(\sqrt{k})$, and consequently \PSPPM{$\cC$} can be solved 
in the subexponential time $n^{O(\sqrt{k})}$. They show, however, that for the class of 3-monotone 
permutations, the tree-width growth is of order $\Omega(k/\log k)$. Later 
Berendsohn~\cite[Theorem 4.1]{BerendsohnMs} showed that for the class $\cC=\Av(654321)$, consisting 
of permutations that can be merged from 5 increasing subsequences, \PSPPM{$\cC$} cannot be solved in 
time $f(k)n^{o(k/\log^4 k)}$ for any function $f$, unless ETH fails.

In the context of \PPPM{$\cC$} and \PSPPM{$\cC$}, most of the research focuses on the cases when $\cC$ is a 
\emph{principal class}, i.e., the class $\Av(\sigma)$ of all the permutations that avoid a single forbidden pattern~$\sigma$. 
Unfortunately, principal classes seldom admit a suitable structural characterisation of their elements, and even in 
those cases where such characterisations exist, they are very different from one class to another. This makes it hard to
obtain general results that apply uniformly to a large set of principal classes. 

To sidestep this issue, we mostly avoid dealing with individual principal classes directly, and instead 
we primarily focus on a different type of permutation classes, the so-called monotone grid classes. We then consider  two structural properties of a general permutation class $\cC$, called the \emph{long path 
property} (LPP) and the \emph{deep tree property} (DTP). Both these properties can be viewed as stating that 
$\cC$ contains monotone grid subclasses of a particular type. We establish lower bounds for the complexity of
\PSPPM{$\cC$} applicable to any class $\cC$ with LPP or DTP. The definitions of LPP and DTP are somewhat 
technical (see Section~\ref{sec-grid}); however, it is usually not too hard to verify whether a given class 
has these properties. Indeed, we are able to identify all the principal 
classes that have LPP, as well as all those that have DTP; see Subsection~\ref{ssec-principal}.

The LPP has already played a central part in a dichotomy result of the authors~\cite{Jelinek2020}, and implicitly also in the work of Berendsohn~\cite{BerendsohnMs} and Berendsohn et al.~\cite{Berendsohn19}.
These previous results imply that for a monotone grid class $\cC$ these properties are equivalent (assuming 
$P\neq NP$): (i) $\cC$ has LPP, (ii) $\tw_\cC(k)$ is unbounded, (iii) $\tw_\cC(k)=\Omega(\sqrt k)$, and (iv) \PPPM{$\cC$} is NP-complete.
For all we know, the equivalence might hold for an arbitrary hereditary class~$\cC$, i.e., not 
just a monotone grid class. However, we do not even know whether every class of unbounded tree-width 
has LPP. 

The DTP is a strengthening of LPP, which we introduce in this paper, with the aim of distinguishing 
the cases of \PSPPM{$\cC$} that can be solved in the subexponential time $f(k) n^{O(\sqrt k)}$ from 
those that cannot be solved in time $f(k)n^{o(k/\log k)}$. While LPP forces tree-width growth of 
order $\Omega(\sqrt k)$, DTP forces tree-width growth of order $\Omega(k/\log k)$.

Our main results show that the lower bounds on tree-width imposed by LPP and DTP are accompanied by 
the corresponding complexity lower bounds for \PSPPM{$\cC$}. With mild technical assumptions, 
we show that under ETH, the following holds for a permutation class $\cC$ (see 
Theorem~\ref{thm:main} for the precise statement):
\begin{itemize}
\item If $\cC$ has the LPP, then \PSPPM{$\cC$} cannot be solved in time 
$f(k)n^{o(\sqrt{k})}$ for any function $f$, and
\item if $\cC$ has the DTP, then \PSPPM{$\cC$} cannot be solved in time 
$f(k)n^{o(k/\log^2 k)}$ for any function~$f$.
\end{itemize}

In addition, we show that for classes with LPP, the Ahal--Rabinovich \PPM algorithm with complexity 
$n^{O(\tw(\pi))}$ is asymptotically optimal. More precisely, we show that if ETH holds, then
for a class $\cC$ with LPP, no algorithm may solve \PPPM{$\cC$} in time $f(t)n^{o(t)}$ 
for any function $f$, where $t=\tw(\pi)$ (see Theorem~\ref{thm:tw-bound}).
All these complexity lower-bounds are presented in Section~\ref{sec-hardness}.

Recall that by a result of Berendsohn et al.~\cite{Berendsohn19}, the class $\cC=\Av(321)$ has 
tree-width growth $\tw_\cC(k)=O(\sqrt{k})$, and therefore \PSPPM{$\cC$} can be solved in time $n^{O(\sqrt k)}$. 
It turns out that this class has LPP, which implies, by our results above, that $\tw_\cC(k)=\Omega(\sqrt k)$ and that \PSPPM{$\cC$} cannot be solved in time 
$f(k)n^{o(\sqrt{k})}$ for any function~$f$. In particular, both the tree-width bound and the complexity bound are tight. 

For any class $\cC$ with DTP, the tree-width lower bound $\Omega(k/\log k)$ and the complexity lower-bound 
$f(k)n^{o(k/\log^2 k)}$ both match, up to the logarithmic terms, the trivial upper bounds of $k$ and $n^{O(k)}$, respectively.

As we mentioned before, we mostly focus on monotone grid classes. We will show that for a monotone grid class $\cC$, 
both LPP and DTP can be easily characterised in terms of a certain graph associated to a monotone grid class $\cC$, called the cell graph, and that these two properties 
asymptotically determine~$\tw_\cC(\cdot)$. An earlier paper of the authors~\cite{Jelinek2020} shows 
that a monotone grid class has bounded tree-width (and hence neither LPP nor DTP) if and only if its cell graph is acyclic. We 
extend this result as follows (see Corollary~\ref{cor:grid-dichotomy}):
\begin{itemize}
\item If the cell graph of a monotone grid class $\cC$ is not acyclic but has at most one cycle in 
each component, then $\cC$ has LPP but not DTP, and $\tw_\cC(k) \in \Theta(\sqrt k)$.
\item If the cell graph of a monotone grid class $\cC$ has a component with at least two cycles, 
then $\cC$ has DTP and $\tw_\cC(k) \in \Omega(k/\log k)$.
\end{itemize}

\section{Preliminaries}\label{sec-prelims}
A \emph{permutation of length $n$} is a sequence in which each element of the set $[n] = \lbrace 1, 2, \dots, n\rbrace$ appears exactly once. When writing out short permutations explicitly, we shall 
omit all punctuation and write, e.g., $15342$ for the permutation $1,5,3,4,2$.  The \emph{permutation diagram} of $\pi$ is the set of points $S_\pi = \{(i,\pi_i);\;i\in[n]\}$ in the plane. Observe that no two points from $S_\pi$ share the same $x$- or $y$-coordinate. We say that such a set is in \emph{general position}.

For a point $p$ in the plane, we let $p.x$ denote its horizontal coordinate, and $p.y$ its vertical
coordinate. Two finite sets $S, R \subseteq \mathbb{R}^2$ in general position are 
\emph{isomorphic} if there is a bijection $f\colon S \to R$ such that for any pair of points $p \neq 
q$ of $S$ we have $f(p).x < f(q).x$ if and only if $p.x < q.x$, and $f(p).y < f(p).y$ if and only if 
$p.y < q.y$. The \emph{reduction} of a finite set $S \subseteq \mathbb{R}^2$ in general position is 
the unique permutation $\pi$ such that $S$ is isomorphic to $S_\pi$. We write $\pi = \red(S)$.

We say that a permutation $\tau$ \emph{contains} a permutation $\pi$, written $\pi\le\tau$, if 
the diagram of $\tau$ contains a subset that is isomorphic to the diagram of~$\pi$. If $\tau$ does 
not contain $\pi$, we say that it \emph{avoids}~$\pi$. A \emph{permutation class} is a set $\cC$ of 
permutations which is \emph{hereditary}, i.e., for every $\sigma\in\cC$ and every $\pi\le\sigma$, 
we have $\pi\in\cC$. For a permutation $\pi$, we let $\Av(\pi)$ denote the set of all the 
permutations that avoid $\pi$; this is clearly a permutation class. The class $\Av(21)$ of all the 
increasing permutations and the class $\Av(12)$ of all the decreasing permutations are denoted by 
the symbols $\Inc$ and $\Dec$, respectively.

We will frequently refer to symmetries that transform permutations into other permutations. For our 
purposes, it is convenient to describe these symmetries geometrically, as transformations of the 
plane acting on permutation diagrams.  We define the \emph{$m$-box} to be the set 
$(\frac{1}{2},m+\frac{1}{2}) \times (\frac{1}{2},m+\frac{1}{2})$. Observe that for every permutation 
$\pi$ of length at most $m$, the permutation diagram $S_\pi$ is a subset of the $m$-box. We view 
permutation symmetries as bijections acting on the whole $m$-box. There are eight such
symmetries, generated by:
\begin{description}
	\item[reversal] which reflects the $m$-box horizontally, i.e. the image of point $p$ is $(m + 1 - p.x, p.y)$,
	\item[complement] which reflects the $m$-box vertically, i.e. the image of point $p$ is $(p.x, m + 1 - p.y)$,
	\item[inverse] which reflects the $m$-box through its main diagonal, i.e. the image of point $p$ is $(p.y, p.x)$.
\end{description}
In particular, the reversal of a permutation $\pi=\pi_1,\dotsc,\pi_n$ is the permutation 
$\pi^r=\pi_n\pi_{n-1},\dotsc,\pi_1$, the complement of $\pi$ is the permutation 
$\pi^c=n+1-\pi_1,n+1-\pi_2,\dotsc,n+1-\pi_n$, and the inverse $\pi^{-1}$ is the permutation 
$\sigma=\sigma_1,\dotsc,\sigma_n$ such that $\sigma_i=j \iff \pi_j=i$. We also apply these 
symmetries to sets of permutations, in an obvious way: if $\Psi$ is one of the eight symmetries 
defined above and $\cC$ is a permutation class, we define $\Psi(\cC)$ as $\{\Psi(\pi);\; 
\pi\in\cC\}$.

The \emph{incidence graph} $G_\pi$ of a permutation $\pi=\pi_1,\dotsc,\pi_n$ is the graph whose vertices 
are the $n$ entries $\pi_1,\dotsc,\pi_n$, with two entries $\pi_i$ and $\pi_j$ connected by an edge 
if $|i-j|=1$ or $|\pi_i-\pi_j|=1$. In particular, the graph $G_\pi$ is a union of two paths, one of 
them visiting the entries of $\pi$ in left-to-right order, and the other in top-to-bottom order. We 
let $\tw(\pi)$ denote the tree-width of~$G_\pi$. 

\paragraph*{Monotone grid classes} 
An important type of permutation classes are the so-called monotone grid-classes, which we now 
define. A \emph{gridding matrix of size $k\times\ell$} is a matrix $\cM$ with $k$ columns and $\ell$ 
rows, whose every entry is a permutation class. A \emph{monotone gridding matrix} is a gridding 
matrix whose every entry is one of the three classes $\emptyset$, $\Inc$ or $\Dec$. Note that to be 
consistent with the Cartesian coordinates that we use to describe permutation diagrams, we will 
number the rows of a matrix from bottom to top, and we give the column coordinate as the first one. 
In particular, $\cM_{i,j}$ denotes the entry in column $i$ and row $j$ of the matrix $\cM$, with 
$1\le i\le k$ and $1\le j\le \ell$.

Let $\pi$ be a permutation of length~$n$. A \emph{$(k\times\ell)$-gridding} of $\pi$ is a pair of 
weakly increasing sequences $1=c_1\le c_2\le\dotsb\le c_{k+1}=n+1$ and $1=r_1\le r_2\le\dotsb\le 
r_{\ell+1}=n+1$. For $i\in[k]$ and $j\in[\ell]$, the \emph{$(i,j)$-cell} of the gridding of $\pi$ is 
the set of points $p\in S_\pi$ satisfying $c_i\le p.x<c_{i+1}$ and $r_j\le p.y< r_{j+1}$. Note that 
each point of the diagram $S_\pi$ belongs to a unique cell of the gridding. A permutation $\pi$ 
together with a gridding $(c,r)$ forms a \emph{gridded permutation}.

Let $\cM$ be a gridding matrix of size $k\times\ell$. We say that the gridding of $\pi$ is 
an \emph{$\cM$-gridding} if for every $i\in[k]$ and $j\in [\ell]$, the subpermutation of $\pi$ 
induced by the points in the $(i,j)$-cell of the gridding of $\pi$ belongs to the class $\cM_{i,j}$.

We let $\Grid(\cM)$ denote the set of permutations that admit an $\cM$-gridding. This is clearly a 
permutation class. A \emph{monotone grid class} is any permutation class $\Grid(\cM)$ for a monotone 
gridding matrix~$\cM$. 

The \emph{cell graph} of a gridding matrix $\cM$, denoted $G_\cM$, is the graph whose vertices are 
all the pairs $(i,j)$ for which $\cM_{i,j}$ is an infinite permutation class. Two vertices are 
adjacent if they appear in the same row or the same column of $\cM$, and there is no other cell 
containing an infinite class between them.
 See Figure~\ref{fig:grid-class}.
A \emph{proper-turning path} in $G_\cM$ is a path $P$ such that no three vertices of $P$ share the same row or column.

\begin{figure}
  \centering
  \raisebox{-0.5\height}{\includegraphics[width=0.45\textwidth]{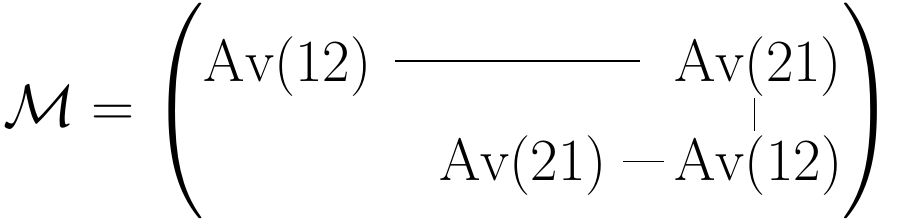}}
  \hspace{0.5in}
  \raisebox{-0.5\height}{\includegraphics[scale=0.5]{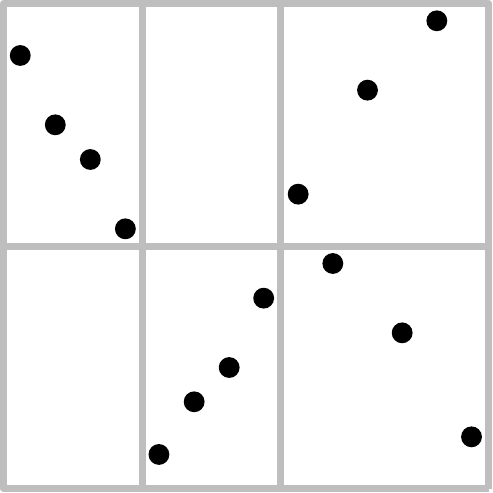}}
  \caption{A monotone gridding matrix $\cM$ on the left and a permutation equipped with an $\cM$-gridding on the right. Empty 
    entries of $\cM$ are omitted and the edges of $G_\cM$ are drawn in $\cM$.}
  \label{fig:grid-class}
\end{figure}

\paragraph*{Grid transforms and orientations} Let $\pi$ be a permutation of length $n$ with a 
$(k\times\ell)$-gridding $(c,r)$, where $c=(c_1,\dotsc,c_{k+1})$ and $r=(r_1,\dotsc,r_{\ell+1})$. 
The \emph{reversal of the $i$-th column} of $\pi$ is the operation that transforms $\pi$ into a new 
permutation $\pi'$ by taking the rectangle $[c_i,c_{i+1}-1]\times[1,n]$ and flipping it along its
vertical axis, thus producing the diagram of a new permutation $\pi'$. Equivalently, $\pi'$ is 
created from $\pi$ by reversing the order of the entries of $\pi$ at positions $c_i, c_i+1,\dotsc,
c_{i+1}-1$.  We view $\pi'$ as a gridded permutation, with the same gridding $(c,r)$ as~$\pi$.

Similarly, the \emph{complementation of the $j$-th row} transforms the diagram of $\pi$ 
by flipping the rectangle $[1,n]\times[r_j,r_{j+1}-1]$ along its horizontal axis, 
producing the diagram of a new gridded permutation~$\pi'$.

We may similarly apply reversals to the columns of a gridding matrix $\cM$ and complements to its 
rows. Reversing the $i$-th column of $\cM$ produces a new gridding matrix, in which all the classes 
in the $i$-th column of $\cM$ are replaced by their reversals. Row complementation of a gridding 
matrix is defined analogously. Note that a column reversal or a row complementation in a gridded 
permutation or in a gridding matrix is an involution, i.e., repeating the same operation twice 
restores the original permutation or matrix. Note also that when we perform a sequence of column 
reversals and row complementations, then the end result does not depend on the order in which the 
operations were performed.

To describe succinctly a sequence of row and column operations, we introduce the notion of 
\emph{$(k\times\ell)$-orientation}, which is a pair of functions $\cF=(f_c,f_r)$ with 
$f_c\colon[k]\to\{-1,1\}$ and $f_r\colon[\ell]\to\{-1,1\}$. \emph{Applying} the orientation $\cF$ to a 
$(k\times\ell)$-gridded permutation $\pi$ produces a new gridded permutation $\cF(\pi)$ with the 
same gridding as~$\pi$, obtained by reversing each column $i$ such that $f_c(i)=-1$ and 
complementing each row $j$ such that $f_r(j)=-1$. The application of $\cF$ to a gridding matrix 
$\cM$ is defined analogously, and produces a gridding matrix denoted $\cF(\cM)$. Note that
$(c,r)$ is an $\cM$-gridding of $\pi$ if and only if it is an $\cF(\cM)$-gridding of $\cF(\pi)$.

An orientation $\cF$ is a \emph{consistent orientation} of a monotone gridding matrix $\cM$, if 
every nonempty entry of $\cF(\cM)$ is equal to~$\Inc$. As an example,  the matrix 
$\left(\begin{smallmatrix} \Dec & \Inc\\ \Inc &\Dec \end{smallmatrix}\right)$ has a consistent 
orientation acting by reversing the first column and complementing the first row. On the other hand, 
the matrix $\left(\begin{smallmatrix} \Dec & \Inc\\ \Inc &\Inc \end{smallmatrix}\right)$ has no 
consistent orientation, since applying any orientation to this matrix yields a matrix with an odd 
number of $\Dec$-entries. 

The following lemma, due to Vatter and Waton~\cite{Vatter2011}, will later be useful.
\begin{lemma}\label{lem-acyclictrans}
Every monotone gridding matrix whose cell graph is acyclic has a consistent orientation.
\end{lemma}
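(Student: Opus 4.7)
The plan is to recast the statement as a $\{-1,+1\}$-labeling problem on a bipartite graph derived from $\cM$, then transfer the acyclicity of $G_\cM$ to that graph and solve the labeling tree by tree. Encode $\Inc$ by the sign $+1$ and $\Dec$ by the sign $-1$, so each non-empty cell $(i,j)$ carries a sign $s(i,j)\in\{-1,+1\}$, and identify a candidate orientation with a pair of sign functions $f_c\colon[k]\to\{-1,+1\}$ and $f_r\colon[\ell]\to\{-1,+1\}$. A direct check of how reversing the $i$th column and complementing the $j$th row flip $\Inc/\Dec$ in cell $(i,j)$ shows that the orientation $(f_c,f_r)$ is consistent if and only if $f_c(i)\,f_r(j)=s(i,j)$ for every non-empty cell.

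Form the bipartite graph $H_\cM$ whose vertex classes are $\{c_1,\dots,c_k\}$ (for the columns) and $\{r_1,\dots,r_\ell\}$ (for the rows) of $\cM$, and whose edges, each labeled by $s(i,j)$, are the non-empty cells $(i,j)$. A consistent orientation is precisely a vertex labeling $\phi\colon V(H_\cM)\to\{-1,+1\}$ satisfying $\phi(c_i)\,\phi(r_j)=s(i,j)$ for every non-empty cell $(i,j)$, and on any tree such a labeling is produced trivially by rooting the tree, assigning the root the label $+1$, and propagating along edges. The lemma therefore reduces to showing that $H_\cM$ is a forest whenever $G_\cM$ is. This I would do by comparing cyclomatic numbers: letting $m$ denote the total number of non-empty cells and $R$ (resp.\ $C$) the number of rows (resp.\ columns) containing at least one such cell, we have $|V(G_\cM)|=m$ and $|E(G_\cM)|=2m-R-C$ (since the cells in any fixed row or column form a path in $G_\cM$), while $|V(H_\cM)|=R+C$ and $|E(H_\cM)|=m$. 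Two non-empty cells sharing a row or a column lie in a common component of both $G_\cM$ and $H_\cM$, and this easily gives a bijection between the components of the two graphs; hence $G_\cM$ and $H_\cM$ share the same cyclomatic number, and one is a forest iff the other is.

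The only step requiring any real care is the bijection between components underlying the cyclomatic-number comparison; the rest is routine propagation. A self-contained alternative that avoids $H_\cM$ altogether is to BFS directly on each tree component of $G_\cM$, propagating $f_c$ and $f_r$ as cells are discovered and the corresponding row or column values are forced by the constraint $f_c(i)\,f_r(j)=s(i,j)$. Should a conflict arise at a newly visited cell $(i,j)$, the earlier cell responsible for the pre-assigned row or column value, together with the BFS parent of $(i,j)$, would supply two distinct routes to $(i,j)$ in $G_\cM$, producing a cycle and contradicting acyclicity.
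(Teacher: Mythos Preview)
Your argument is correct. Note, however, that the paper does not actually prove this lemma: it is stated with attribution to Vatter and Waton and used as a black box. Your bipartite graph $H_\cM$ is precisely the \emph{row--column graph} that is standard in the grid-class literature; indeed, acyclicity of a gridding matrix is often \emph{defined} via $H_\cM$ rather than via the cell graph $G_\cM$, and your cyclomatic-number comparison is a clean way to see that the two notions of acyclicity coincide. One cosmetic point: you introduce $H_\cM$ on all $k+\ell$ row and column vertices but then write $|V(H_\cM)|=R+C$, implicitly discarding rows and columns with no non-empty cell. This is harmless---isolated vertices affect neither forestness nor the labeling problem---but for the component bijection and the cyclomatic count to line up you should say explicitly that you restrict $H_\cM$ to the rows and columns that actually contain a non-empty cell.
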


\paragraph*{Tile assembly} In the hardness reductions that we are about to present, we frequently 
need to construct permutations whose diagrams have a natural $k\times\ell$ grid-like structure. We 
describe such a diagram by taking each cell individually and describing the points inside it. For 
such a description, it is often convenient to assume that each cell has its own coordinate system 
whose origin is near the bottom-left corner of the cell. This allows us to describe the coordinates 
of the points inside the cell without referring to the position of the cell within the whole 
permutation diagram. In effect, we describe the diagram of the gridded permutation by first 
constructing a set of independent `tiles' $\cT_{i,j}$ for $i\in[k]$ and $j\in[\ell]$ of the same 
size, and then translating each tile $\cT_{i,j}$ to column $i$ and row $j$ of the diagram.
On top of that, we often need to apply an orientation to the gridded permutation whose diagram we 
constructed.

We now describe the whole procedure more formally. Fix an integer $m$ and recall that an $m$-box is 
a square of the form $(\frac12,m+\frac12)\times (\frac12, m+\frac12)$. An $m$-tile is a finite set 
of points inside the $m$-box. Note that the coordinates of the points in the tile may not be 
integers. A $(k\times \ell)$-family of $m$-tiles is a collection $(\cT_{i,j};\; i\in[k], 
j\in[\ell])$ where each $\cT_{i,j}$ is an $M$-tile. Let $\cF$ be a $(k\times\ell)$-orientation. The 
\emph{$\cF$-assembly} of the family $(\cT_{i,j};\; 
i\in[k], j\in[\ell])$ is the gridded permutation obtained as follows. 

First, we translate each tile $\cT_{i,j}$ by adding $m(i-1)$ to each horizontal coordinate and 
$m(j-1)$ to each vertical coordinate. Thus, the $m$-tiles will be disjoint. If the union of the 
translated tiles is not in general position, we rotate it slightly clockwise to reach general 
position. Notice that we can do so without changing the relative position of any pair of points that were already in general position. This yields a point set isomorphic to a unique permutation~$\pi$.
See Figure~\ref{fig:F-assembly}.
Additionally, $\pi$ has 
a natural gridding whose cells correspond to the translated tiles. To finish the construction, we 
apply the orientation $\cF$ to $\pi$, obtaining the gridded permutation $\cF(\pi)$, which is the 
$\cF$-assembly of the family of tiles $(\cT_{i,j};\; i\in[k], j\in[\ell])$.

\begin{observation}
Let $(\cT_{i,j};\; i\in[k], j\in[\ell])$ be a family of tiles, let $\cF$ be an orientation, and let 
$\cM$ be a gridding matrix such that $\cT_{i,j}$ is isomorphic to a permutation from the 
class $\cM_{i,j}$. Then the $\cF$-assembly of the family of tiles  $(\cT_{i,j};\; i\in[k], 
j\in[\ell])$ is a permutation from the class $\Grid(\cF(\cM))$. 
\end{observation}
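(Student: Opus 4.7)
The plan is to chase the definitions and verify that the natural gridding arising from the tile placement is first an $\cM$-gridding of the pre-orientation permutation, and then invoke the already-stated fact that $\cF$ transports $\cM$-griddings into $\cF(\cM)$-griddings.

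More concretely, I would proceed in three short steps. First, let $\pi$ be the permutation whose diagram is obtained by translating each tile $\cT_{i,j}$ by $(m(i-1),m(j-1))$ and (if necessary) slightly rotating the resulting point set to reach general position, before $\cF$ is applied. The translated tiles live in pairwise disjoint axis-aligned squares, so picking $c_i = m(i-1)+1$ and $r_j = m(j-1)+1$ (in the rotated coordinates, transferred through the natural bijection between $S_\pi$ and the translated union of tiles) yields a $(k\times\ell)$-gridding $(c,r)$ of $\pi$ whose $(i,j)$-cell consists precisely of the points coming from $\cT_{i,j}$.

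Second, I would check that this is an $\cM$-gridding. The subset of $S_\pi$ in the $(i,j)$-cell is isomorphic to $\cT_{i,j}$: the translation preserves all horizontal and vertical comparisons, and the mild clockwise rotation used to attain general position was explicitly chosen so as not to flip any pair of points that was already separated, so within a single tile the two orderings agree with those of $\cT_{i,j}$. By the hypothesis that $\cT_{i,j}$ is isomorphic to a permutation of $\cM_{i,j}$, the reduction of the $(i,j)$-cell belongs to $\cM_{i,j}$. Hence $\pi\in\Grid(\cM)$ with witness $(c,r)$.

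Third, I would invoke the remark made just after the definition of orientations: $(c,r)$ is an $\cM$-gridding of $\pi$ if and only if it is an $\cF(\cM)$-gridding of $\cF(\pi)$. Since the $\cF$-assembly is by definition $\cF(\pi)$, this gives $\cF(\pi)\in\Grid(\cF(\cM))$, as required. The only point that needs any care at all is step two, where one must confirm that the rotation trick used to enforce general position does not disturb the isomorphism between a cell and its tile; this is immediate from the stated property that the rotation preserves the relative position of any pair already in general position, and a pair of points coming from the same tile is in general position already because $\cT_{i,j}$ itself is (being isomorphic to a permutation).
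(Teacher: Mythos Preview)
Your argument is correct. The paper states this result as an Observation without proof, treating it as immediate from the definitions of $\cF$-assembly, $\cM$-gridding, and the already-noted fact that $(c,r)$ is an $\cM$-gridding of $\pi$ iff it is an $\cF(\cM)$-gridding of $\cF(\pi)$; your three-step verification is precisely the unpacking the reader is expected to perform. One small presentational remark: the explicit formulas $c_i=m(i-1)+1$, $r_j=m(j-1)+1$ refer to coordinates in the translated point set rather than in the reduced permutation $\pi$, but your parenthetical about transferring through the natural bijection to $S_\pi$ handles this, and the paper itself only speaks of ``a natural gridding whose cells correspond to the translated tiles'' without writing down explicit indices.
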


\begin{figure}
  \centering
  $\begin{array}{cc}
    T_{1,2} = \vcenter{\hbox{\includegraphics[scale=0.45]{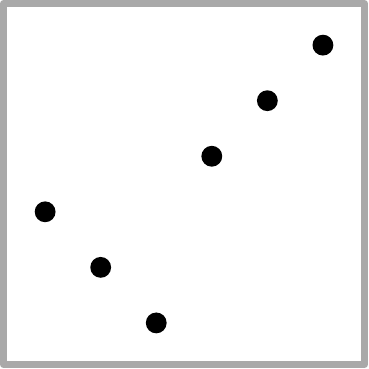}}}  & 
    T_{2,2} = \vcenter{\hbox{\includegraphics[scale=0.45]{T-tile}}}\\[0.3in]
    T_{1,1} = \vcenter{\hbox{\includegraphics[scale=0.45]{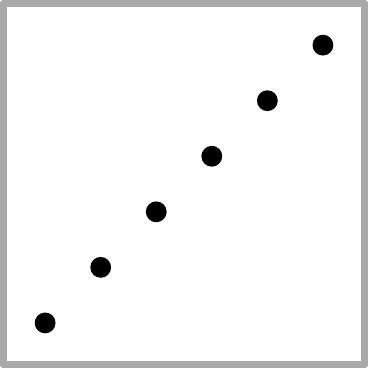}}} &
    T_{2,1} = \vcenter{\hbox{\includegraphics[scale=0.45]{T-tile}}} 
  \end{array}$
  \hspace{0.2in}
  $\longrightarrow$
  \hspace{0.2in}
  \raisebox{-0.5\height}{\includegraphics[scale=0.45]{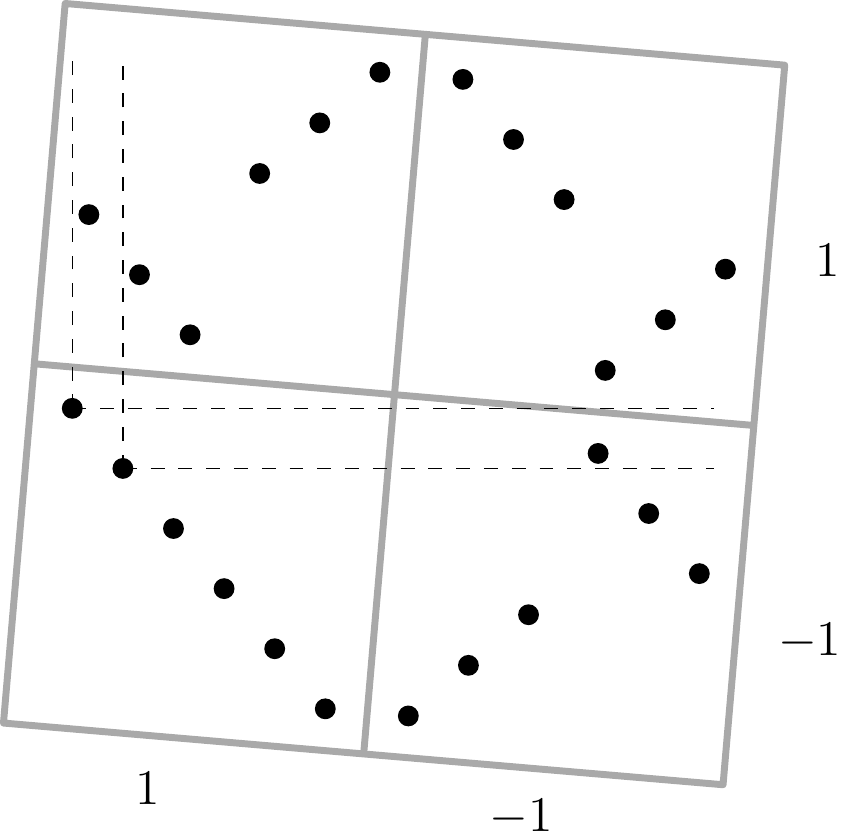}}
  \caption[Example of an $\cF$-assembly]{
    A $2 \times 2$ family of tiles $\cT$ on the left and its $\cF$-assembly on the right for 
    a $2 \times 2$ orientation $\cF$ given next to each row and column on the right. General position is attained by rotating the resulting point set clockwise. The dashed lines indicate relative positions of two particular points.
  }\label{fig:F-assembly}
\end{figure}

\section{Tree-width bounds}\label{sec-grid}

\subsection{Width of monotone grid classes}
\label{subsec:width}

We say that a permutation class $\cC$ has the \emph{long path property} (LPP) if for every $k$ the class $\cC$ contains a monotone grid subclass whose cell graph is a path of length~$k$. The next proposition builds upon the ideas of Berendsohn et al.~\cite{Berendsohn19}, who proved a similar result for the class $\Av(321)$ using the fact that this class contains a staircase-shaped grid path of arbitrary length.

\begin{proposition}\label{pro-path}
If a permutation class $\cC$ has the LPP then $\tw_\cC(n) \in \Omega(\sqrt{n})$.
\end{proposition}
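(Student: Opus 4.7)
The plan is to fix a parameter $k$, use the LPP to find a grid subclass of $\cC$ whose cell graph is a long path, then construct inside this subclass a permutation $\pi$ of length $n=\Theta(k^2)$ whose incidence graph contains a $\Theta(k)\times \Theta(k)$ grid as a minor. This forces $\tw(\pi)=\Omega(k)=\Omega(\sqrt n)$ and hence $\tw_\cC(n)=\Omega(\sqrt n)$.

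\emph{Step 1 (Setup).} By LPP, for any $k$ we have a monotone gridding matrix $\cM$ such that $\Grid(\cM)\subseteq \cC$ and $G_\cM$ is a path of length at least $k$. Since a path is acyclic, Lemma~\ref{lem-acyclictrans} yields a consistent orientation $\cF$; after replacing $\cM$ by $\cF(\cM)$ we may assume every nonempty entry of $\cM$ equals $\Inc$. Let $c_1,c_2,\dotsc,c_{k+1}$ be the cells along the path, so each consecutive pair $c_j,c_{j+1}$ shares either a row or a column of $\cM$. By pigeonholing I would moreover extract a subpath of length $\Omega(k)$ that is either \emph{proper-turning} (genuine staircase) or entirely contained in a single row or column of $\cM$.

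\emph{Step 2 (Construction via tile assembly).} Using the tile-assembly machinery from Section~\ref{sec-prelims}, I place inside each cell $c_j$ of the path an $m$-tile consisting of $\Theta(k)$ points in increasing order (and the empty tile in all other cells), picking the horizontal and vertical offsets within the $m$-box so that whenever $c_j$ and $c_{j+1}$ share a row the $y$-coordinates of the two tiles interleave in the pattern $y_1,y_1',y_2,y_2',\dotsc$ (primes denoting points of $c_{j+1}$), and analogously for $x$-coordinates when $c_j$ and $c_{j+1}$ share a column. The resulting $\cF$-assembly is a gridded permutation $\pi\in\Grid(\cM)\subseteq\cC$ of length $n=\Theta(k^2)$. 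The interleaving stitches the per-cell tiles into $k$ long ``wires'', each wire corresponding to a single index $i\in\{1,\dotsc,\Theta(k)\}$ tracked across all cells of the path.

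\emph{Step 3 (Tree-width lower bound).} It remains to show that $G_\pi$ contains a $\Theta(k)\times \Theta(k)$ grid as a minor. The $k$ wires supply $k$ vertex-disjoint paths in $G_\pi$, using value-adjacency edges where consecutive cells share a column and position-adjacency edges where they share a row. On the other hand, inside each cell the $\Theta(k)$ points of different wires are joined by position- or value-adjacency edges, providing $\Theta(k)$ vertex-disjoint transverse paths. These two families of paths cross in the requisite grid pattern, so by the standard grid-minor-vs-tree-width bound we conclude $\tw(\pi)=\Omega(k)=\Omega(\sqrt n)$.

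\emph{Main obstacle.} The delicate point is the precise design of the per-cell tiles: the offsets have to line up globally so that the $i$-th wire exiting one cell actually connects to the $i$-th wire entering the next, and so that the prescribed interleavings do not accidentally create a $\Dec$ pattern that would destroy membership in $\Grid(\cM)$. A secondary issue is the ``degenerate'' case in which the extracted path lies in a single row of $\cM$: here the construction realises a merge of $\Theta(k)$ increasing sequences sharing a common vertical range, and the same wires plus the transverse edges between horizontally adjacent wires still give the required grid minor, handled uniformly with the staircase case.
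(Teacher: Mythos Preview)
Your construction in Step~2 is essentially the right one and is close to the paper's, but Step~3 has a genuine gap. For a proper-turning path, the claim that ``inside each cell the $\Theta(k)$ points of different wires are joined by position- or value-adjacency edges'' is false. In an interior cell $c_j$ of a proper-turning path, one neighbour shares $c_j$'s column and the other shares its row, and these are the \emph{only} other nonempty cells in that column and that row. Hence every position-neighbour and every value-neighbour of a point of $c_j$ lies in $c_{j-1}$ or $c_{j+1}$, never in $c_j$ itself. The picture of $k$ wires crossed by $k$ vertex-disjoint within-cell transversals therefore collapses (only the two end cells of the path supply such a transversal, which is not enough).

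What actually makes the construction work is that the interleaving between $c_j$ and $c_{j+1}$ produces \emph{two} kinds of edges: the ``wire'' edges $c_j[i]\sim c_{j+1}[i]$ and the ``shift'' edges $c_j[i]\sim c_{j+1}[i\pm 1]$. The paper exploits this by embedding the $m\times m$ grid \emph{diagonally}: grid vertex $(x,y)$ is sent to a point in tile $v_{x+y-1}$, so that the two grid-edges leaving $(x,y)$ become one wire edge and one shift edge into tile $v_{x+y}$. This is also why the paper's tiles have varying sizes (one per diagonal of the grid) on a path of length $2m-1$, rather than uniform size~$k$. A smaller issue is your pigeonhole in Step~1: a path with $\sqrt{k}$ straight runs each of length $\sqrt{k}$ has no proper-turning and no straight subpath of length $\Omega(k)$. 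The paper instead argues that if $\cC$ lacks arbitrarily long proper-turning grid subclasses, then by LPP it must contain arbitrarily long $1\times t$ grid subclasses and hence all permutations, in which case the bound is trivial.
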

\begin{proof}
First, we show that $\cC$ contains for every $k$ a grid subclass whose cell graph is a proper-turning path of length $k$, i.e. a path in which no three consecutive vertices are in the same row or column of the gridding. For the 
contrary, assume that there is $\ell$ such that $\cC$  does not contain such path of 
length $\ell$. The LPP then implies that $\cC$ contains for every $t$ a class $\Grid(\cM$) where 
$\cM$ is either a  $1\times t$ or $t \times 1$ matrix without empty entries. However, any such matrix 
of dimensions $1 \times n$ or  $n \times 1$ contains all permutations of length $n$ and thus, $\cC$ 
must 
actually be the class of all permutations that contains all possible proper turning paths.

So we can suppose that there is a monotone gridding matrix $\cM$ such that $\cM$ is a proper-turning path $v_1, \ldots, v_{2m-1}$ of length $2m-1$  and $\Grid(\cM)$ is contained in $\cC$. We explicitly construct a permutation $\pi \in \Grid(\cM)$ such that $G_\pi$ contains an $m \times m$ grid graph as a subgraph. The claim then follows since the tree-width of $m\times m$ grid graph is exactly $m$.

For $i \in [m]$ and $j \in [i]$, let
\begin{align*}
p_{i,j} = (m + 2j - i - 1, m + 2j -i - 1),\quad p_{2m-i, j} = p_{i,j}.
\end{align*}

We define a family of $2m$-tiles $\cP$ by setting 
$\cP_{v_i}$ to be the set of points $p_{i,j}$ for all possible choices of $j$.

Let $\cF$ be a consistent orientation of $\cM$ guaranteed by 
Lemma~\ref{lem-acyclictrans} and let $\pi$ be the $\cF$-assembly of $\cP$. The sets $P_{v_i}$ were defined in such a way that for 
every $i$ the points in $P_{v_{2i}}$ have both coordinates odd whereas the points in $P_{v_{2i+1}}$ have 
both coordinates even. Since $\cM$ is a proper 
turning path, there are always at most two non-empty tiles sharing the same row or column in $\pi$ 
and in such case they correspond to neighboring vertices of the path.
Moreover, if they share a common row, then the $y$-coordinates of their points are interleaved, and if they share a common column, the same holds for the $x$-coordinates.

\begin{figure}
  \centering
  \raisebox{-0.5\height}{\includegraphics[scale=0.6]{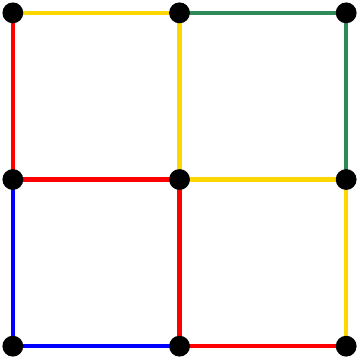}}
  \hspace{1in}
  \raisebox{-0.5\height}{\includegraphics[scale=0.6]{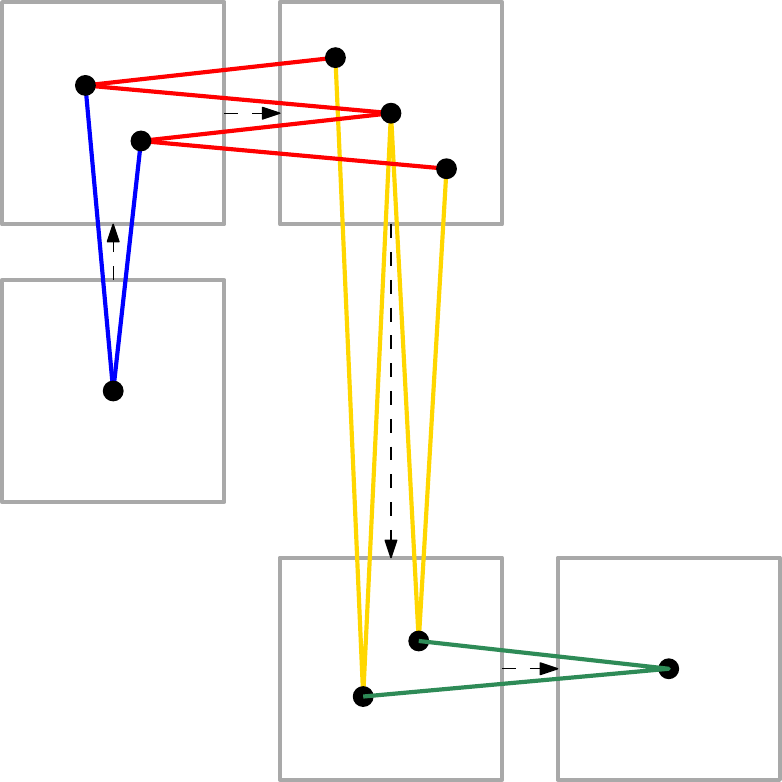}}
  \caption{Illustration of the proof of Proposition~\ref{pro-path}. Embedding a $3 \times 3$ grid graph (left) into a permutation from a monotone grid class whose cell graph is a path of length 5 (right).
  }
  \label{fig:lpp-grid}
\end{figure}

It remains to show that $G_\pi$ contains an $m \times m$ grid graph as a subgraph. Let $s_{i,j}$ be the image of $p_{i,j}$ under the $\cF$-assembly. We claim that we can map consecutive diagonals of the grid to the tiles $P_{v_i}$. See Figure~\ref{fig:lpp-grid}.  More precisely, for $x, y \in [m]$ set
\[g_{x,y} = \begin{cases}
s_{x+y-1, x} &\text{if $x + y \leq m + 1$,} \\
s_{x+y-1, m-y+1} &\text{otherwise.}
\end{cases}\]

We start by showing that for any $i \in [m-1]$, there is an edge between $s_{i,j}$ and $s_{i+1,j}$, and also between $s_{i,j}$ and $s_{i+1,j+1}$. This follows since the points of $P_{v_i}$ and $P_{v_{i+1}}$ have their $x$- or $y$-coordinates interleaved and there is no other tile occupying their shared row or column. Due to symmetry, it holds that for $i > m$, there is an edge between  $s_{i,j}$ and $s_{i-1,j}$ and also between $s_{i,j}$ and $s_{i-1,j+1}$.

If we take $x,y \in [m]$ such that $x + y \le m$ (i.e. $g_{x,y}$ lies below the anti-diagonal of the grid), the fact proved in the previous paragraph directly translates to the existence of edges between $g_{x,y}$ and $g_{x+1,y}$ and between $g_{x,y}$ and $g_{x,y+1}$. On the other hand for $x,y \in [m]$ such that $x + y \ge m + 2$, the points $g_{x,y}$, $g_{x-1,y}$ and $g_{x,y-1}$ translate to $s_{x+y-1,x}$, $s_{x+y-2,x-1}$ and $s_{x+y-2,x}$. Therefore, in this case there are edges between $g_{x,y}$ and $g_{x-1,y}$ and between $g_{x,y}$ and $g_{x,y-1}$. This concludes the proof as any edge in the $m \times m$ grid graph is of one of the two types whose existence we proved.
\end{proof}

It turns out that there is a large family of monotone grid classes for which $\tw_\cC \in \Theta(\sqrt{n})$, namely every monotone grid class whose cell graph is not acyclic yet it does not contain two connected cycles.

\begin{theorem}\label{thm-unicyclic}
If $\cM$ is a monotone gridding matrix whose cell graph is connected and contains a single cycle then $\tw_{\Grid(\cM)}(n) \in \Theta(\sqrt{n})$.
\end{theorem}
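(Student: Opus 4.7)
The proof has two directions. For the \textbf{lower bound} $\tw_{\Grid(\cM)}(n) \in \Omega(\sqrt n)$, observe that since $G_\cM$ is connected but not acyclic, the class $\Grid(\cM)$ has unbounded tree-width by the dichotomy of~\cite{Jelinek2020}. For monotone grid classes, unbounded tree-width is equivalent to the long path property, so $\Grid(\cM)$ has LPP: for every $k$ it contains a monotone grid subclass $\Grid(\cN_k)$ whose cell graph is a proper-turning path of length $k$. The construction of $\cN_k$ uses the cycle of $G_\cM$ as a substrate for arbitrarily long zigzag staircase patterns, carefully refined so that $\Grid(\cN_k) \subseteq \Grid(\cM)$. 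Then Proposition~\ref{pro-path} yields the desired lower bound.

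For the \textbf{upper bound} $\tw_{\Grid(\cM)}(n) \in O(\sqrt n)$, fix a permutation $\pi \in \Grid(\cM)$ of length $n$ together with an $\cM$-gridding. The cell graph $G_\cM$ can be written as a spanning tree $T$ plus one extra edge $e = \{u,w\}$ which closes the unique cycle. We will construct a tree decomposition of $G_\pi$ indexed by $T$. First, we refine the gridding of $\pi$ so that the boundary associated with $e$ (namely, the dividing line in the common row or column of $u$ and $w$) contains at most $O(\sqrt n)$ points; such a refinement exists by a counting argument exploiting the monotonicity of cells along the cycle. Then the bag at each node of $T$ contains the points of the corresponding cell together with the $O(\sqrt n)$ boundary points shared with its neighbours in $T$. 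Finally, we augment each bag along the path in $T$ from $u$ to $w$ with $O(\sqrt n)$ boundary points associated with the missing edge $e$, ensuring that the tree-decomposition conditions hold despite the removal of $e$. The resulting decomposition has width $O(\sqrt n)$.

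The \textbf{main obstacle} lies in the upper bound, specifically in verifying that the tree decomposition so constructed is valid and indeed of width $O(\sqrt n)$. We must check that every edge of $G_\pi$ (both horizontal and vertical incidence edges) is covered by some bag, and that for every point of $\pi$, the bags containing it form a connected subtree of $T$. The monotone structure of the cells on the cycle is essential both for the balance argument (i.e., to guarantee the existence of a refinement yielding $O(\sqrt n)$ boundary points for $e$) and for tracking how incidence edges cross cell boundaries: within each $\Inc$ or $\Dec$ cell, the horizontal and vertical incidence paths of $G_\pi$ run roughly in parallel, which localises the incidence structure and allows the boundary-point augmentation to remain of size $O(\sqrt n)$.
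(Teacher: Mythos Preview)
Your lower bound is correct and matches the paper's argument exactly: a cycle in the cell graph forces the long path property, and Proposition~\ref{pro-path} does the rest.

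The upper bound, however, has a genuine gap. You propose a tree decomposition of $G_\pi$ indexed by the spanning tree $T$ of the cell graph, with the bag at each node containing ``the points of the corresponding cell together with the $O(\sqrt n)$ boundary points''. But a single cell of the $\cM$-gridding may contain $\Omega(n)$ points of~$\pi$, so such a bag has size $\Omega(n)$, not $O(\sqrt n)$. The refinement you describe only concerns the dividing line at the extra edge~$e$; it does nothing to control the size of the other cells. Even if you intended a more aggressive global refinement into cells of size $O(\sqrt n)$, you would need to explain why the refined cell graph is still (close to) a tree and why incidence edges of $G_\pi$ are covered---and neither the ``counting argument exploiting monotonicity'' nor the ``localises the incidence structure'' remark comes close to doing this. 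In short, indexing the decomposition by the cell graph is too coarse: the cell graph has constant size, while the target width is $\Theta(\sqrt n)$.

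The paper takes an entirely different route. It removes the constantly many exceptional edges of~$G_\pi$, then shows that the remaining graph $G'$ can be \emph{drawn on a surface of Euler genus at most~$1$ with only $O(n)$ crossings}. This is done by cutting the cycle of $G_\cM$ into tendrils, using Lemma~\ref{lem:acyclic-order} to place the points of each tendril on a meridian of a cylinder in a canonical order, and then verifying that inner and outer edges contribute $O(n)$ crossings in total (a cross-cap absorbs the one place where orientations disagree). The bound $\tw(G')\in O(\sqrt n)$ then follows from a separator-type result for graphs drawn with few crossings on a bounded-genus surface (Lemma~\ref{lem:drawing}), and reinserting the exceptional edges only costs a constant. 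None of this resembles a direct tree decomposition indexed by~$G_\cM$.
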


Before we can prove an upper bound on the tree-width of $\Grid(\cM)$ for classes with a single cycle, we need to take a quick detour and introduce a different type of permutation classes defined using gridding matrices. Let $\cM$ be a monotone gridding matrix. The standard figure of $\cM$, denoted by $\Lambda_\cM$,  is the point set in the plane consisting of:
\begin{itemize}
  \item the open segment from $(i-1, j-1)$ to $(i,j)$ for every $i,j$ such that $\cM_{i,j} = \, \Inc$, and
  \item the open segment from $(i-1, j)$ to $(i,j-1)$ for every $i,j$ such that $\cM_{i,j} = \,\Dec$.
\end{itemize}
The \emph{geometric grid class} of $\cM$, denoted by $\Geom(\cM)$, is the set of all permutations isomorphic to finite subsets of $\Lambda_\cM$ in general position. Clearly, $\Geom(\cM) \subseteq \Grid(\cM)$ for any monotone gridding matrix $\cM$. However, equality holds for acyclic gridding matrices.

\begin{proposition}[Albert et al.\cite{Albert2013}]
  \label{pro-geom-grid}
  If $\cM$ is an acyclic monotone gridding matrix then $\Geom(M) = \Grid(\cM)$.
  In fact, given any $\cM$-gridding of a permutation $\pi$ we can find a subset of $\Lambda_\cM$ that is isomorphic to $S_\pi$ and respects the given $\cM$-gridding.
\end{proposition}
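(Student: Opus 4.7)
The inclusion $\Geom(\cM) \subseteq \Grid(\cM)$ is immediate since any finite subset of $\Lambda_\cM$ has each point on a monotone segment of the prescribed type. For the stronger refinement, the plan is to first apply Lemma~\ref{lem-acyclictrans} to pick a consistent orientation $\cF$ of the acyclic matrix $\cM$, so that every nonempty entry of $\cF(\cM)$ equals $\Inc$. Since a column-reversal or a row-complementation swaps the two diagonals inside every cell it acts on, the symmetry $\cF$ sends $\Lambda_\cM$ bijectively to $\Lambda_{\cF(\cM)}$ and simultaneously sends any $\cM$-gridding of $\pi$ to an $\cF(\cM)$-gridding of $\cF(\pi)$. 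So it suffices to prove the claim under the assumption that every nonempty entry of $\cM$ equals $\Inc$.

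In this all-$\Inc$ setting, $\Lambda_\cM$ is a disjoint union of open diagonal segments, one per nonempty cell, parametrised in cell $(i,j)$ by $t \in (0,1)$ via $\gamma_{i,j}(t) = (i-1+t,\, j-1+t)$. Given an $\cM$-gridded permutation $\pi$, I would choose, for each point $p \in S_\pi$ sitting in cell $(i,j)$, a parameter $t(p) \in (0,1)$, and map $p$ to $\gamma_{i,j}(t(p))$. The resulting point set respects the gridding by construction, and it is order-isomorphic to $S_\pi$ if and only if (i) for any two points $p,q$ in the same column of $\cM$ one has $t(p) < t(q) \iff p.x < q.x$, and (ii) analogously for any two points in the same row of $\cM$ with $y$-order. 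Since $\cM_{i,j} = \,\Inc$ forces the $x$-order and $y$-order to agree inside each cell, these two families of constraints are mutually consistent within cells.

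To produce such an assignment I would process $G_\cM$ as a forest and use a leaf-stripping induction. Pick any leaf $v = (i_0, j_0)$ of $G_\cM$ and observe that its unique cell-graph neighbour $u$ lies either in row $j_0$ or in column $i_0$; in the orthogonal direction $v$ is in fact the \emph{only} nonempty cell of its row or column, since any other nonempty cell on that line would, by walking towards $v$, produce an edge incident to $v$ in $G_\cM$. So assume WLOG that $v$ is alone in column $i_0$. By the inductive hypothesis, $t$-values are already fixed on every cell other than $v$, and I need only insert $|P_v|$ new values in the ordered list of $t$-values currently present in row $j_0$ so as to match the $y$-order prescribed by $\pi$. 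This is always possible because $(0,1)$ is order-dense, and it is automatically consistent with the internal within-cell constraint since the internal $x$- and $y$-orders of $v$'s points coincide.

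The main obstacle is convincing oneself that the leaf-stripping really closes up into a globally consistent assignment, and this is where acyclicity of $G_\cM$ is essential: a cycle in the cell graph would correspond to a closed chain of alternating row/column ordering requirements that could in general be mutually contradictory, and it is precisely the absence of such chains that lets a single function $t(\cdot)$ simultaneously realise the prescribed total order on every row and every column.
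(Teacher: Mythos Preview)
The paper does not supply its own proof of this proposition; it is quoted as a known result of Albert et al.\ and used as a black box. Your argument is a correct reconstruction and in fact mirrors the original approach: reduce to the all-$\Inc$ case via a consistent orientation (this is where acyclicity first enters, through Lemma~\ref{lem-acyclictrans}), parametrise each diagonal by $t\in(0,1)$, and assign the parameters by a leaf-stripping induction on the forest $G_\cM$, inserting the points of each newly attached leaf cell into the already-fixed total order along its single constrained line. Your observation that a leaf cell is alone in one of its two lines is exactly what makes the insertion step unconstrained in the orthogonal direction, and it is correct for the reason you give: any second nonempty cell in that line would force a second $G_\cM$-neighbour of the leaf. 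One small point you leave implicit is general position of the resulting subset of $\Lambda_\cM$; this follows because points in distinct columns (resp.\ rows) lie in disjoint open horizontal (resp.\ vertical) strips, and within a single column or row the assigned $t$-values are pairwise distinct by construction.
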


Let $G = (V, E)$ be a graph together with a linear ordering $\le_\cL$ of its vertices. We say that a set of edges $F \subseteq E$ is \emph{non-overlapping with respect to $\le_\cL$} if for every two edges $(u_1, u_2), (v_1, v_2) \in F$ it holds that either $u_\alpha \le_\cL v_\beta$ for all possible choices of $\alpha, \beta \in \{1,2\}$ or $v_\beta \le_\cL u_\alpha $ for all possible choices of $\alpha, \beta \in \{1,2\}$. In other words, we forbid two edges whose endpoints are ordered as $u_1 \le_\cL v_1 \le_\cL u_2 \le_\cL v_2$ or $u_1 \le_\cL v_1 \le_\cL v_2 \le_\cL u_2$.
The following lemma proves that for any permutation $\pi$ from acyclic monotone grid class, there is a linear ordering of its vertices such that the edges of $G_\pi$  can be partitioned into few non-overlapping sets.
This ordering is later used for drawing $G_\pi$ on a surface with a small genus.

Let $\cM$ be a $k\times\ell$ gridding matrix and let $\pi$ be an $\cM$-gridded permutation. An edge $\{\pi_i,\pi_j\}$ of the incidence graph $G_\pi$ is \emph{horizontal} if $|i-j|=1$, and it is \emph{vertical} if $|\pi_i-\pi_j|=1$. Thus, the horizontal edges form a left-to-right path, and the vertical ones form a bottom-to-top path. A horizontal edge is said to be \emph{exceptional} (with respect to the given gridding) if its vertices belong to different columns of the gridding, and a vertical edge is \emph{exceptional} if its vertices belong to different rows. There are therefore $k-1$ exceptional horizontal edges and $\ell-1$ exceptional vertical edges, hence at most $k+\ell-2$ exceptional edges overall.

\begin{lemma}
  \label{lem:acyclic-order}
  Let $\cM$ be an acyclic monotone $k \times \ell$ gridding matrix equipped with a consistent $k \times \ell$ orientation $\cF$, and let $\pi$ be a permutation from $\Grid(\cM)$ equipped with an $\cM$-gridding. There exists a linear order $\le_\cL$ on the points of $\pi$ such that
  \begin{enumerate}[(a)]
    \item the points of the $i$-th column of the $\cM$-gridding induce a suborder given by the increasing order of their $x$-coordinates if $f_c(i) = 1$ and the decreasing order otherwise, \label{cond-order-col}
    \item the points of the $j$-th row of the $\cM$-gridding induce a suborder given by the increasing order of their $y$-coordinates if $f_r(j) = 1$ and the decreasing order otherwise, and \label{cond-order-row}
    \item the edges of the graph $G_\pi$ can be partitioned into at most $2k + 2\ell -2$ non-overlapping sets with respect to $\le_\cL$. \label{cond-edge-partition}
  \end{enumerate}
  
\end{lemma}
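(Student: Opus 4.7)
The plan is to use the consistent orientation $\cF$ to reduce to an all-$\Inc$ grid class and then exploit the resulting diagonal structure of the standard figure. I would set $\sigma = \cF(\pi)$, so that $\sigma$ is an $\cF(\cM)$-gridded permutation whose non-empty cells are all $\Inc$, and note that $\cF(\cM)$ remains acyclic. Proposition~\ref{pro-geom-grid} then yields an embedding of $\sigma$ on $\Lambda_{\cF(\cM)}$ respecting the given gridding, in which each point $p$ in cell $(i_p, j_p)$ sits at position $(i_p - 1 + t_p,\, j_p - 1 + t_p)$ on the diagonal segment of that cell for some $t_p \in (0,1)$. Because the embedding preserves the relative $x$- and $y$-orders of $\sigma$, the parameters $t_p$ restricted to any common column (and to any common row) of the gridding are pairwise distinct.

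I would then take $\le_\cL$ to be the total order on the points of $\pi$ given by increasing $t$, with ties between points that lie in distinct rows and distinct columns broken arbitrarily. Within column $i$ this suborder coincides with the increasing $\sigma$-$x$ order, which by the definition of $\cF$ equals the increasing $\pi$-$x$ order when $f_c(i) = 1$ and its reverse when $f_c(i) = -1$, establishing condition~(a); the symmetric argument using $\sigma$-$y$ yields~(b).

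For~(c), I would partition the edges of $G_\pi$ as follows: for each column $i$ let $H_i$ collect all non-exceptional horizontal edges with both endpoints in column $i$, for each row $j$ let $V_j$ collect all non-exceptional vertical edges with both endpoints in row $j$, and put each of the at most $k - 1$ exceptional horizontal edges and $\ell - 1$ exceptional vertical edges into its own singleton class. This gives at most $k + \ell + (k - 1) + (\ell - 1) = 2k + 2\ell - 2$ classes. The main technical step will be verifying non-overlap for $H_i$: two edges in $H_i$ connect pairs of $\pi$-consecutive positions lying inside column $i$, and because $\le_\cL$ restricted to column $i$ agrees (up to reversal) with the $\pi$-$x$ order, the two pairs of endpoints are either disjoint or share a single endpoint in the column's sub-order; this lifts to the global $\le_\cL$ because all four endpoints lie in column $i$ and hence their mutual $\le_\cL$-positions are determined by the column sub-order alone. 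The same reasoning handles $V_j$, while singleton classes are vacuously non-overlapping.
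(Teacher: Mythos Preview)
Your proposal is correct and follows essentially the same route as the paper: both use Proposition~\ref{pro-geom-grid} to place the points on the standard figure and then order them by their position along the segments (your parameter $t_p$ is the paper's ``rank''), and both partition the edges into the $k+\ell$ per-column/per-row sets of non-exceptional edges together with the at most $k+\ell-2$ singletons for the exceptional edges. The only cosmetic difference is that you first apply $\cF$ to reduce to an all-$\Inc$ matrix and read off $t_p$ directly, whereas the paper works on $\Lambda_\cM$ itself but picks the reference corner of each cell according to $\cF$; these yield the same order, and the paper handles potential ties by perturbing $S$ rather than by arbitrary tie-breaking.
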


\begin{proof}
  We partition the exceptional edges into at most $k + \ell - 2$ singleton sets which are trivially non-overlapping with respect to arbitrary linear order. 
  
  Let $S$ be the subset of the standard figure $\Lambda_\cM$ isomorphic to $S_\pi$ that exists by Proposition~\ref{pro-geom-grid}. Let $g\colon S_\pi \to S$ be the witnessing bijection. For a point $p \in S_\pi$ such that $p$ belongs to the $(i,j)$-cell of the $\cM$-gridding, we define the rank of $p$ as the distance of the point $g(p)$ from the point
  \[\left(i - \frac{1}{2} - \frac{f_c(i)}{2},\; j - \frac{1}{2} - \frac{f_r(j)}{2}\right).\]
  Less formally, we define the rank to be the distance to a specific corner of the rectangle $[i-1, i] \times [j-1,j]$, depending on the orientation of the $(i,j)$-cell.
  Observe that we can always choose $S$ in a way such that the ranks are pairwise different. That allows us to define $\le_\cL$ as the linear order given by the ranks.
  
  The conditions (\ref{cond-order-col}) and (\ref{cond-order-row}) follow straightforwardly from the consistency of the orientation $\cF$ and our choices of suitable corners used for computing the ranks. Moreover, all the non-exceptional horizontal edges of $G_\pi$ can be partitioned into $k$ sets, each connecting points in a single column. And since we already know that points of a single column are ordered accordingly in $\cL$, each such set is non-overlapping. Using the same argument, we split the vertical non-exceptional edges of $G_\pi$ into $\ell$ non-overlapping sets, one for each row of the $\cM$-gridding. Together with the exceptional edges, we obtain the partition of all edges into at most $2k + 2\ell -2$ non-overlapping sets as promised.
\end{proof}

Finally, the last piece missing before the actual proof of Theorem~\ref{thm-unicyclic} is the following lemma about graphs drawn on surfaces with few crossings, which is proved using standard methods~\cite{DEW}.

\begin{lemma}
  \label{lem:drawing}
  If $G$ is a graph on $n$ vertices that can be drawn on a surface with Euler genus $g$ with $O(n)$ crossings, then $\tw(G) \in O(\sqrt{g n})$.
\end{lemma}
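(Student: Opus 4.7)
The plan is to reduce to the classical tree-width bound for graphs properly embedded on a surface: a theorem of Djidjev (generalizing the Lipton--Tarjan planar separator theorem) implies that any graph on $N$ vertices embeddable on a surface of Euler genus $g$ has tree-width $O(\sqrt{g N})$.

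To apply this, I would convert the drawing of $G$ into a proper embedding by introducing, at each of the $O(n)$ crossings, a new degree-$4$ vertex that splits the two crossing edges into four incident edges. The resulting graph $G'$ embeds without crossings on the same surface, has $|V(G')| \in O(n)$ vertices, and thus satisfies $\tw(G') \in O(\sqrt{g n})$.

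The key step is to transfer this bound back to $G$. A naive minor argument fails because at each crossing the two subdivided edges share the new vertex, so $G$ is not obviously a minor of $G'$. Instead I would use balanced vertex separators. The standard proof of the genus separator theorem actually produces, for any non-negative vertex weighting, a \emph{noose} on the surface meeting $G'$ only at $O(\sqrt{g \cdot w(V)})$ vertices that separates the surface into pieces each carrying at most two-thirds of the total weight. Applying this with weight $1$ on original vertices and $0$ on crossing vertices gives a set $S' \subseteq V(G')$ of size $O(\sqrt{g n})$ that balances the weights of the original vertices. Define $S$ to consist of the original vertices in $S'$ together with the (at most four) original neighbours in $G'$ of each crossing vertex of $S'$; then $|S| \in O(\sqrt{g n})$.

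To see that $S$ is a balanced separator of $G$, consider any edge $uv$ of $G$ with $u, v \notin S$: it is drawn as a curve on the surface corresponding to a path in $G'$ whose internal vertices are the crossing vertices along $uv$, and if $u$ and $v$ lay in different pieces, the curve would have to cross the noose at one of those crossing vertices, forcing it into $S'$ and hence $u, v$ into $S$, a contradiction. Recursing on the pieces yields balanced separators of size $O(\sqrt{g n})$ at every level of the recursion, from which a tree decomposition of width $O(\sqrt{g n})$ is built by the standard construction. The main obstacle is this final topological argument linking a noose of $G'$ to a separator of $G$; it is routine using the standard noose and separation machinery on surfaces.
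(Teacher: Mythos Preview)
Your argument is correct, but it takes a substantially more elaborate route than the paper. You share the first two steps with the paper: form the planarised graph $G'$ by introducing a degree-$4$ vertex at each crossing, note that $|V(G')|\in O(n)$, and invoke the Gilbert--Hutchinson--Tarjan bound to get $\tw(G')\in O(\sqrt{gn})$. Where you then reach for weighted nooses, recursive balanced separators, and a topological crossing argument to transfer the bound back to $G$, the paper does something much more direct: take any tree decomposition $(T,\beta)$ of $G'$ of width $t$, and define a tree decomposition $(T,\beta')$ of $G$ on the same tree by replacing every crossing vertex appearing in a bag by the four original endpoints of the two edges that cross there. Each bag grows by at most a factor of $4$, giving width at most $4t+3$, and one checks directly that edges of $G$ are covered (any crossing vertex on an edge $uv$ gets replaced by a set containing both $u$ and $v$) and that the connectivity condition is preserved (the crossing vertices along edges incident to $u$, together with $u$ itself, induce a connected subgraph of $G'$). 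This avoids the separator recursion entirely and makes the constant blow-up explicit; your approach works too, but buys nothing extra here. One small imprecision in your write-up: the noose in the weighted separator theorem meets $G'$ in $O(\sqrt{g\,|V(G')|})$ vertices, not $O(\sqrt{g\,w(V)})$ --- the weights govern the balance, not the separator size --- though in this instance the two quantities coincide since $|V(G')|\in O(n)=O(w(V))$.
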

\begin{proof}
  Let $G = (V,E)$ be a graph on $n$ vertices together with a drawing on a surface with Euler genus $g$.
  We assume that no three edges cross in a single point.
  We define a graph  $G'$ in the following way.
  We replace each crossing with a new vertex of degree 4, and we split each $e \in E$ into its consecutive segments between endpoints and crossings.
  It follows from the assumptions that $G'$ has $O(n)$ vertices and moreover, it can be drawn on the surface with Euler genus $g$ without any crossings.
  It follows from the work of Gilbert, Hutchinson and Tarjan~\cite{Gilbert1984}, that $\tw(G') \in O(\sqrt{gn})$.
  
  Using a tree decomposition $(T, \beta)$ of $G'$ with width $t$, we define a tree decomposition $(T, \beta')$ of $G$ with width $4t+3$ by replacing every vertex corresponding to a crossing with the four endpoints of the two edges participating in this particular crossing.
  Clearly, every vertex belongs to some bag $\beta(v)$, the same holds for every edge, and the size of each bag is at most $4t+4$.
  It is also not hard to check that each vertex induces a connected subtree in $T$.
\end{proof}

\begin{proof}[Proof of Theorem~\ref{thm-unicyclic}]
  First, we show that $\tw_{\Grid(\cM)}(n) \in \Omega(\sqrt{n})$. It has been previously proved by the authors in~\cite[Lemma 3.5]{Jelinek2020} that a cycle in a grid class implies the LPP. The lower bound readily follows from Proposition~\ref{pro-path}.
  
  For the upper bound, let $\pi$ be a permutation of $\Grid(\cM)$ with a given $\cM$-gridding. 
  We start by removing all exceptional edges.
  As we observed, there are at most $k + \ell - 2$ exceptional edges.
  Let $G' = (V', E')$ be the graph obtained from $G_\pi$ by removing them from $G_\pi$.
  It is sufficient to show that $\tw(G') \in O(\sqrt{n})$, as adding back the exceptional edges increases the tree-width at most by a constant.
  
  We aim to show that $G'$ can be drawn on a surface of Euler genus 1 with $O(n)$ total crossings. Suppose that $c_1, c_2, \dots, c_m$ are the entries of $\cM$ that lie on its only cycle in this order. Let $a_i, b_i$ be the coordinates of the entry $c_i$, i.e., $c_i$ lies in the $a_i$-th column and $b_i$-th row.
  The cell graph $G_\cM$ consists of the cycle and trees that are attached to it. If we remove all the edges that participate in the cycle, we end up with $m$ trees $T_1, \dots, T_m$ called \emph{tendrils} such that the tree $T_i$ contains the entry $c_i$.
  
  We now define two functions $f^\star_c, f^\star_r\colon [m] \to \lbrace -1, 1\rbrace$ that will capture an almost consistent orientation of the entries $c_1, \dots, c_m$. Let $\cM'$ be the gridding matrix obtained from $\cM$ by removing everything except for the entries on the cycle and then additionally removing $c_1$. It follows from our assumption about $c_1$ that $\cM'$ is acyclic and thus, it has a consistent orientation $\cF' = (f'_c, f'_r)$. For every $i \ge 2$, we set $f^\star_c(i), f^\star_r(i)$ to be the values $f'_c(a_i), f'_r(b_i)$. Additionally, we define $f^\star_c(1) = f'_c(a_1)$ and then we set $f^\star_r(1)$ such that $f^\star_c(1) \cdot f^\star_r(1) = 1$ if and only if $c_1$ is an increasing entry. In this way, it holds for every $i \in [m]$ that $f^\star_c(i) \cdot f^\star_r(i) = 1$ if and only if $c_i$ is increasing. Moreover, if $c_i$ and $c_j$  share a common column then $f^\star_c(i) = f^\star_c(j)$. And finally, if $c_i$ and $c_j$  share a common row then also $f^\star_r(i) = f^\star_r(j)$ as long as $i$ and $j$ are different from 1.
  
 Now for $i \in [m]$, let $\cM_i$ be the gridding matrix obtained from $\cM$ by removing everything 
except the tendril $T_i$ and let $\pi^i$ be the subpermutation of $\pi$ induced by the non-empty 
entries of $\cM_i$. Since $\cM_i$ is acyclic, it has a consistent orientation $\cF^i = (f^i_c, 
f^i_r)$. Moreover, we can assume that $f^i_c(a_i) = f^\star_c(i)$ and $f^i_r(b_i) = f^\star_r(i)$ as 
otherwise we could just flip all the signs in~$\cF^i$. Applying Lemma~\ref{lem:acyclic-order} on 
$\cM_i$, $\cF^i$ and $\pi^i$, we obtain a linear order $\le_{\cL_i}$ on the points of~$\pi^i$.
  
  We are ready to describe the drawing of $G'$. In order to simplify our arguments, we draw $G'$ on an infinite cylinder $[0, m-1] \times \mathbb{R}$ where we implicitly work with the $x$-coordinates in arithmetic modulo $m$.
  Such a drawing can then easily be transformed into a drawing on the plane with the same number of crossings via a suitable projection.
  We call the vertical line $x = i$ the \emph{$i$-th meridian}.
  First, let us describe the position of the vertices. For every $i \in [m]$, the vertices corresponding to points of $\pi^i$ are placed on $i$-th meridian in the order $\le_{\cL_i}$ from bottom to top. The actual distances in between them do not matter.
  
  We split the edges of $E'$ into two groups. We call any edge that connects two points of $\pi^i$ an \emph{inner} edge, and every edge that connects a point in $\pi^i$ with a point in $\pi^j$ for $i \neq j$ an \emph{outer} edge. We start by drawing the inner edges. We first draw every inner edge as a circular arc connecting its two endpoints and, afterwards, we horizontally shrink all these edges so that they occupy only a narrow band around the $i$-th meridian.
  
Let us count the number of crossings between two inner edges. Due to the shrinking step, two inner 
edges can intersect only if they both connect two points of $\pi^i$ for some~$i$. We fix $i \in 
[m]$. Notice that the edges of $E'$ induced by $\pi^i$ form a subset of the edges of $G_{\pi^i}$. By 
part~(\ref{cond-edge-partition}) of Lemma~\ref{lem:acyclic-order}, these edges can be partitioned into 
at most $2k + 2\ell - 2$ sets $E^i_1, E^i_2, \dots, E^i_{2k + 2\ell - 2}$ that are all 
non-overlapping with respect to $\le_{\cL_i}$. The non-overlapping property implies that any edge $e 
\in E^i_j$ cannot intersect any other edge from $E^i_j$, and it can intersect at most two 
edges from every $E^i_{j'}$ for 
$j' \neq j$. Therefore, each edge participates in at most $4k + 4\ell - 6 = O(1)$ crossings. Summing 
over all choices of $i$, there are at most $O(n)$ crossings between two inner edges.
  
  Now, we describe the drawing of the outer edges.
  Observe that the outer edges can be partitioned into at most $m$ sets depending on the column or row that is shared by both their endpoints.
If an outer edge connects points of $\pi^i$ and $\pi^j$ such that $|i-j| = 1$, then we draw it as 
the  straight-line segment between the two vertices that does not cross any meridian. We need to be 
more careful with the remaining edges. We draw an edge that connects points of $\pi^i$ and $\pi^j$ 
for $|i-j| > 1$ as a polyline consisting of straight-line segments between the $h$-th and $(h+1)$-th 
meridian for every $h \in \{i, \dots, j-1\}$.
  Suitably choosing the points on each meridian, we can guarantee that the segments between $h$-th and $(h+1)$-th meridian do not intersect as long as $\le_{\cL_h}$ and $\le_{\cL_{h+1}}$ order the points in $c_h$ and $c_{h+1}$ consistently.
  
  If the whole matrix $\cM$ possesses a consistent orientation, then Lemma~\ref{lem:acyclic-order} parts (\ref{cond-order-col}) and (\ref{cond-order-row}) imply that no two outer edges intersect as the orders of cells in a given row or a column all agree. Otherwise, this is true for every column and every row except for the row $b_1$. Assuming without loss of generality that $c_1$ shares a common row with $c_2$, the order of points of $\pi^1$ and $\pi^2$ on the 1st and 2nd meridian are reversed.
  In such case, the opposite happens and every two segments of outer edges in the band between 1st and 2nd meridian intersect.
  This is, however, easily fixed by adding a cross-cap in between these two meridians such that every segment of an outer edge lying in this band crosses through it.
  Thus, we obtained a drawing of $G'$ either on a plane or on a projective plane.
  See Figure~\ref{fig-unicyclic}.
  
  \begin{figure}
    \centering
    \includegraphics[width=0.85\textwidth]{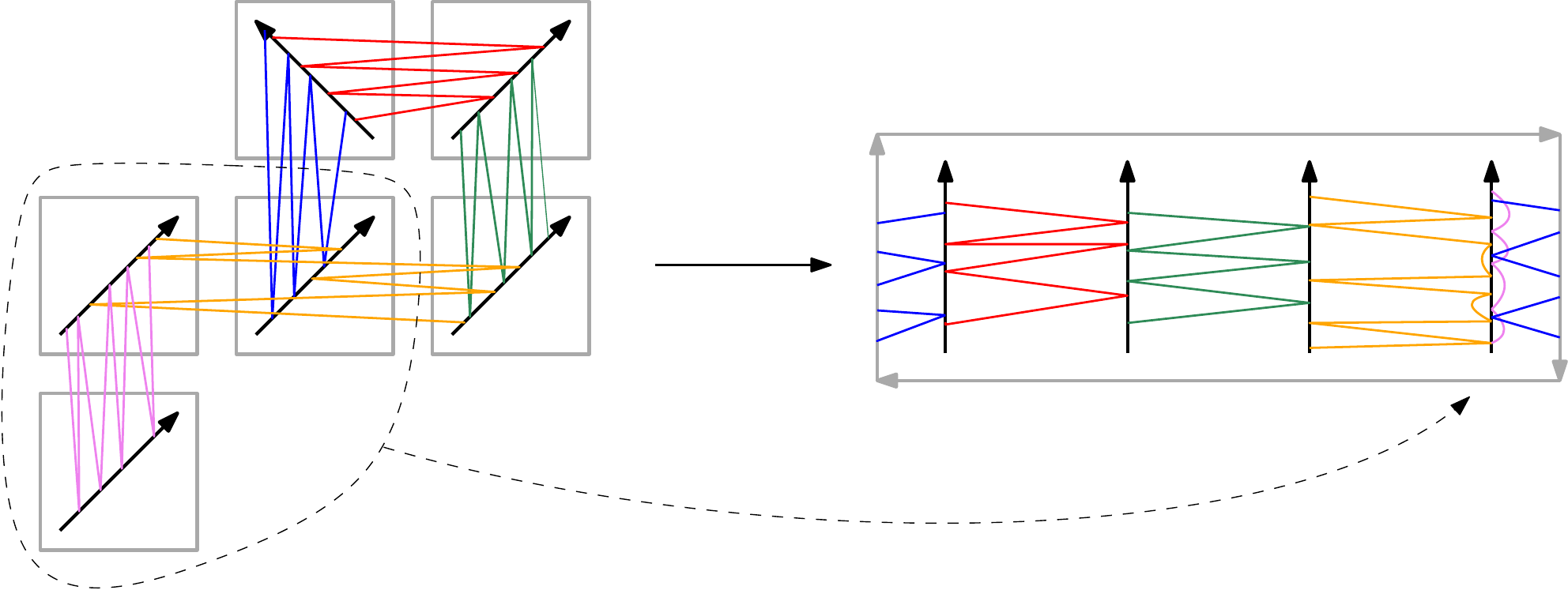}
    \caption{A schematic drawing of $G_\pi$ for $\pi$ from a unicyclic grid class on the projective plane.
      Instead of drawing the specific points of $\pi$, we place arrows to indicate the orientation of each cell.
      Different color is used for each set of edges that share a single row or column, and the exceptional edges are omitted.
    }\label{fig-unicyclic}
  \end{figure}
  
  It remains to count the number of intersections formed between pairs consisting of an inner and an outer edge.
  Fix an outer edge $e$ and recall that $e$ is a polyline formed by at most $m$ segments.
  A segment between the $i$-th and $(i+1)$-th meridian can intersect only with inner edges of $\pi^i$ in the near vicinity of the $i$-th meridian and with inner edges of $\pi^{i+1}$ in the near vicinity of the $(i+1)$-th meridian.
  Moreover, it can intersect at most one edge from any non-overlapping set with respect to $\le_{\cL_i}$ or $\le_{\cL_{i+1}}$.
  Thus, it follows from Lemma~\ref{lem:acyclic-order}~\ref{cond-edge-partition} that each segment intersects at most $4k+4\ell-4$ edges and $e$ intersects at most $m \cdot (4k+4\ell-4) = O(1)$ inner edges in total.
  Putting it all together, there are $O(n)$ crossings in total and by Lemma~\ref{lem:drawing} we get that $\tw(G') \in O\bigl(\sqrt{n}\bigr)$.
\end{proof}

For integer constants $c$ and $d$, a \emph{$c$-subdivided binary tree of depth $d$} is a graph 
obtained from a binary tree of depth $d$ by replacing every edge by a path of length at most~$c$.
We say that a permutation class $\cC$ has the \emph{deep tree property} (DTP) if there is a constant $c$ such that for every $d$, the class $\cC$ contains a monotone grid subclass whose cell graph is a 
$c$-subdivided binary tree of depth~$d$.
Observe that DTP straightforwardly implies LPP.
We say that a class $\cC$ has \emph{near-linear width} if $\tw_\cC(n) \in \Omega(n/\log n)$.

\begin{proposition}\label{pro-tree}
If a permutation class $\cC$ has the DTP, then it has near-linear width.
\end{proposition}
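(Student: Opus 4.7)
My plan is to follow the template of Proposition~\ref{pro-path}: use DTP to extract a subclass $\Grid(\cM) \subseteq \cC$ whose cell graph is a $c$-subdivided complete binary tree of depth $d$, and then exhibit an explicit permutation $\pi \in \Grid(\cM)$ whose incidence graph contains a subgraph of near-linear tree-width. Since the cell graph is a tree and hence acyclic, Lemma~\ref{lem-acyclictrans} provides a consistent orientation $\cF$ of $\cM$, so $\pi$ can be described as an $\cF$-assembly of a carefully chosen family of tiles.

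I would design the tiles so that each cell at depth $i$ of the un-subdivided binary tree carries $2^{d-i}$ points arranged in a diagonal-interleaving pattern analogous to the one in Proposition~\ref{pro-path}, while the degree-two cells along each subdivision chain simply propagate the interleaving from their neighbours (contributing only a constant factor $c$ to the total length). Summing over all cells gives $|\pi| = n = \Theta(d \cdot 2^d)$, so that $2^d = \Theta(n/\log n)$. The interleaving should be chosen so that each point in an internal cell of the un-subdivided tree is joined in $G_\pi$ to exactly one point in its parent cell and one point in each of its two children, which makes the $2^d$ root-level points pairwise linked through the tree by specific paths of length $O(d)$.

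With this routable tree structure in place, the tree-width bound should come from a standard embedding argument: fix a bounded-degree expander $H$ on $2^d$ vertices with $\tw(H) = \Theta(2^d)$, identify its vertex set with the root-level points of $\pi$, and realise a subdivision of $H$ inside $G_\pi$ by routing each edge of $H$ as a length-$O(d)$ path through the tree. Since subdivision preserves tree-width, one obtains $\tw(\pi) \ge \tw(H) = \Theta(2^d) = \Theta(n/\log n)$. I expect the main obstacle to be making the $\Theta(2^d)$ subdivision paths internally disjoint, since each depth-$i$ cell provides only $O(2^{d-i})$ routing slots; this will force both the expander $H$ and the identification of its vertices with the tree's leaves to respect the recursive halving of the leaf set, e.g.\ by building $H$ recursively from two depth-$(d-1)$ expanders glued by a constant-degree bipartite matching at the top.
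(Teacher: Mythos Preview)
Your high-level plan---use DTP to get a tree-shaped $\cM$, build $\pi$ by $\cF$-assembly of tiles, and exhibit a high-tree-width structure inside $G_\pi$---matches the paper. But the step you yourself flag as ``the main obstacle'' is a genuine gap, and the paper resolves it by a different (and simpler) mechanism than the one you sketch.

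First, a local inconsistency: with $2^{d-i}$ points per depth-$i$ cell, a depth-$(i-1)$ cell has twice as many points as each of its children, so you cannot have every point joined to exactly one parent point \emph{and} one point in each child; the counts force two parent-edges per child point. More seriously, you place all $2^d$ expander vertices in the single root cell, so every routed edge must leave and re-enter the root through the same two cell interfaces, and each intermediate cell along a root-to-leaf path has capacity only $2^{d-i}$. Your recursive-expander suggestion is a plausible direction, but it is not clear that it yields $\Theta(2^d)$ \emph{internally disjoint} paths with this budget; you have not shown how.

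The paper sidesteps the disjoint-paths problem entirely by going for a \emph{minor} rather than a subdivision. It fixes an arbitrary graph $G$ on $[n]$ with edges $e_1,\dots,e_m$, takes $\cM$ so that $G_\cM$ is a $c$-subdivided binary tree with exactly $m$ leaves (leaf $v_i$ associated to $e_i$), and for each tree vertex $w$ sets $A_w\subseteq[n]$ to be the union of the endpoint sets of the edges assigned to leaves below~$w$. The tile at $w$ is just $\{(i,i):i\in A_w\}$. After a consistent $\cF$-assembly, points of a fixed colour $i$ in adjacent cells share a coordinate and become neighbours in $G_\pi$, so each colour class is connected; at each leaf the two points (colours $a_i,b_i$) are adjacent. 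Contracting colour classes therefore yields $G$ as a minor of $G_\pi$, with $|\pi|=\sum_w|A_w|=O(m\log m)$. Choosing $G$ to be a bounded-degree expander with $m=\Theta(n)$ and $\tw(G)=\Theta(n)$ gives $\tw_\cC(N)=\Omega(N/\log N)$. The key idea you are missing is precisely this switch from ``disjoint paths'' to ``connected branch sets'': the colouring delivers connectivity of branch sets for free, so no routing argument is needed.
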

\begin{proof}
Inspired by the approach of Berendsohn~\cite{BerendsohnMs}, we want to show that for a graph $G$ 
of large tree-width, we can find a permutation $\sigma \in \cC$ such that $G_\sigma$ contains $G$ 
as a minor while the length of $\sigma$ exceeds the size of $G$ by at most a logarithmic factor.

To that end, fix an arbitrary graph $G$ with vertex set $V_G=[n]$ and edges $\{e_1, \ldots, e_m\}$ where $e_i = \{a_i, b_i\}$. Let $\cM$ be a monotone gridding matrix such that $\Grid(\cM) \subseteq \cC$ and the cell graph of $\cM$ is a $c$-subdivided binary tree with exactly $m$ leaves. Let $r$ denote the root of this tree. It follows that the tree has maximal depth at most $c(\log m + 1)$. We turn $G_\cM$ into an oriented graph by orienting all edges consistently away from $r$. For any vertex $v$ of the tree, the \emph{descendants of $v$}, denoted by $D(v)$, are all the out-neighbors of $v$.

We assign a set $A_w \subseteq V_G$ to each vertex $w$ of the tree. First, we arbitrarily order the $m$ leaves of $G_\cM$ as $v_1, \ldots, v_m$. Then we inductively define

\begin{equation}
\label{eq:tree-Aw}
A_{w} = \begin{cases}
\{a_i, b_i\} &\text{if $w = v_i$ for $i \in [m]$ where $e_i = \{a_i, b_i\}$,} \\
\bigcup_{v\in D(w)} A_v &\text{otherwise.}
\end{cases}
\end{equation}

We remark that $\sum_{v}|A_v| = O(m \log m)$ since each vertex $i\in V_G$ is present in exactly $\deg (i)$ leaves and in the paths of length $O(\log m)$ that connect those leaves to $r$. We proceed to define a family of $m$-tiles $\cP$ by setting $P_v = \{(i,i) \mid i \in A_v\}$ for every vertex $v$ of the tree, and keeping all the other tiles empty.

Let $\cF$ be a consistent orientation obtained from the application of 
Lemma~\ref{lem-acyclictrans} on $\cM$ and let $\pi$ be the $\cF$-assembly of $\cP$. Since every tile 
is an increasing point set, it follows that $\pi$ belongs to $\Grid(\cM)$.

In order to simplify the rest of the proof, we color $S$ with $n$ colors. We assign a color $i\in V_G$ to a point $p \in S$ with preimage $(i,i)$ in $P_{x,y}$. We claim that $S$ satisfies the following conditions:
\begin{enumerate}[(a)]
\item \label{cond-tree1} The subgraph of $G_\pi$ induced by a single color is connected;
\item \label{cond-tree2} For each edge $e_i = \{a_i,b_i\}$ of $G$ there is an edge in $G_\pi$ between a vertex of color $a_i$ and a vertex of color $b_i$.
\end{enumerate}

Fix a color $i \in V_G$. Let $Q_i$ be the set of all vertices $v$ of $G_\cM$ such that $i \in A_v$. Clearly, $Q_i$ induces a connected subtree of~$G_\cM$. Recall that every point of color $i$ has always the coordinates $(i,i)$ inside any tile.
It follows that for points $(i,i)$ in two neighboring tiles, the $\cF$-assembly of $\cP$ transforms them first to points that share one coordinate and then by rotating slightly clockwise makes them either horizontal or vertical neighbors.
Therefore, the subgraph of $G_\pi$ induced by color $i$ is connected, which proves~(\ref{cond-tree1}).

Every leaf $v_{i}$ must be the only non-empty vertex in its row or column. Let us assume the latter case as the other one is symmetric.
Therefore, the two points contained in the image of $P_{v_i}$ form an edge in $G_\pi$ since no other point lies in the vertical strip between them. In particular, the leaf $v_i$ satisfies the condition~(\ref{cond-tree2}) for edge $e_i$.

The conditions (\ref{cond-tree1}) and (\ref{cond-tree2}) together imply that we can obtain a supergraph of $G$ by contracting every monochromatic subgraph of $G_\pi$ to a single vertex and thus, $G$ is a minor of~$G_\pi$. Observe that the total size of $\pi$ is equal to $\sum_{v}|A_v|$ which we showed to be $O(m \log m)$. And since there exist graphs on $n$ vertices with $O(n)$ edges and tree-width $\Omega(n)$, we deduce that $\tw_C(n) \ge \tw_{\Grid(\cM)}(n) \in \Omega(n / \log n)$.
\end{proof}


We continue by introducing a different property that implies the deep tree property and is at the same time easier to show for a specific class $\cC$.
A permutation class $\cC$ has the \emph{bicycle property} if it contains a monotone grid subclass 
whose cell graph is connected and contains at least two cycles.

\begin{proposition}\label{pro-bicycle}
If a permutation class $\cC$ has the bicycle property, then it also has the DTP (and therefore near-linear width).
\end{proposition}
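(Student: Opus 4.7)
The plan is to construct, for every depth $d$, a monotone gridding matrix $\cN_d$ such that $G_{\cN_d}$ is a $c$-subdivided binary tree of depth $d$ for some fixed constant $c$, and $\Grid(\cN_d)\subseteq\Grid(\cM)$, where $\cM$ is the bicycle matrix witnessing the bicycle property of $\cC$. Once such a sequence is available, the DTP of $\cC$ follows immediately.

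First, I would reduce to a canonical form. Any connected graph with at least two independent cycles contains, as a subgraph, one of: a theta graph (two vertices connected by three internally disjoint paths), a figure-eight (two cycles sharing a single vertex), or a handcuff graph (two vertex-disjoint cycles joined by a path). By restricting $\cM$ to the cells corresponding to this subgraph---i.e., replacing every other entry by the empty class---one obtains a sub-matrix $\cM^*$ with $\Grid(\cM^*)\subseteq\Grid(\cM)\subseteq\cC$ and $G_{\cM^*}$ in one of these three canonical forms. Hence we may assume $G_\cM$ itself is in canonical form.

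Second, for each canonical form I would construct $\cN_d$ by combining two ingredients. The first is the cycle-unfolding technique from the authors' earlier work~\cite{Jelinek2020}, which, given a cycle in the cell graph, produces grid subclasses whose cell graphs are arbitrarily long proper-turning paths. The second ingredient exploits the branching vertex present in each canonical form (a cell of degree at least three in $G_\cM$, arising from the interaction of the two cycles): at any such vertex, the two cycles supply two independent ``escape routes'' in distinct rows or columns, and by directing one cycle-unfolding into each route one creates a binary branching in the cell graph. Applying this branching operation recursively at the leaves of the already-built partial tree yields $\cN_d$ by induction on $d$, with the subdivision constant $c$ determined by the length of the cycles in the canonical form and the number of subdivision cells introduced at each branching step.

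The main obstacle is the geometric consistency of the recursive construction. Since $\cN_d$ must be laid out in a single two-dimensional grid, the branches constructed at different levels of the tree must occupy disjoint strips of rows and columns so that no two non-sibling branches share a row or column of $\cN_d$; otherwise spurious edges would appear in $G_{\cN_d}$ and destroy the tree structure. Controlling this forces $\cN_d$ to expand in both dimensions at every recursive step, and simultaneously one must verify that each new cell can be assigned a consistent monotone class so that the map from cells of $\cN_d$ to cells of $\cM$ remains grid-class-preserving, ensuring $\Grid(\cN_d)\subseteq\Grid(\cM)$. A case analysis over the theta, figure-eight, and handcuff configurations, together with a careful bookkeeping of empty cells around the successive branching vertices, completes the argument.
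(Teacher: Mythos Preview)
Your proposal is a plan rather than a proof, and it differs substantially from the paper's argument. The paper does not recurse on depth. Instead it introduces a \emph{matrix $\cF$-assembly}: after reducing (via the $\cM^{\times 2}$ doubling trick) to a matrix $\cM$ with a consistent orientation $\cF$, and further reducing the theta (cycle-with-chord) case to the handcuff case by a $3\times 3$ refinement, it replaces each cell of $\cM$ by a single $2m\times 2m$ increasing matrix with $m=2^{d}-1$. The entire binary tree is encoded in one shot: vertices of the tree are identified with BFS labels in $[m]$, each cell of $\cM$ carries a diagonal (or near-diagonal) pattern indexed by these labels, and a single ``switch'' cell $\cL^{x}$ at a chosen corner of one cycle simultaneously sends label $i$ to both $L(i)$ and $R(i)$. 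The other cycle (via the corner $y$) merely shifts parity so that the right-child route is available. Because every $\cL^{v}$ is an increasing matrix and $\cF$ is consistent, $\Grid(\cN)\subseteq\Grid(\cM)$ is automatic; and because the branching is encoded in the \emph{entries} of a fixed-size block rather than in new rows and columns at each depth, the disjoint-strips problem you worry about never arises.

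The genuine gap in your sketch is precisely where you flag it: you have no mechanism that both (i) guarantees $\Grid(\cN_d)\subseteq\Grid(\cM)$ and (ii) allows the leaves of the level-$d$ tree to re-enter the bicycle structure so the recursion can continue. Your idea of ``directing one cycle-unfolding into each escape route'' produces one branching at the junction vertex, but after a single unfolding the resulting leaves live in fresh rows and columns of $\cN_d$ that have no a priori relationship to the cycles of $\cM$; making them recurse requires exactly the kind of coordinated refinement that the matrix $\cF$-assembly supplies and that your outline omits. The case analysis over theta/figure-eight/handcuff and the bookkeeping of empty cells are the easy part; without a concrete refinement scheme (and the preliminary reduction to a consistently orientable $\cM$, which you do not mention and which is what makes ``every block is increasing $\Rightarrow$ containment'' work), the construction does not go through.
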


%
%
Before we prove the proposition, we need to introduce some more terminology. We say that a 
monotone gridding matrix $\cM$ is \emph{increasing} if $\Grid(\cM)=\;\Inc$, it is \emph{decreasing} 
if $\Grid(\cM)=\;\Dec$, and it is \emph{empty} if $\Grid(\cM)=\emptyset$.

We shall now introduce a variant of the $\cF$-assembly acting on matrices.
Suppose that $\cM$ is a $k \times \ell$ monotone gridding matrix with a consistent orientation~$\cF$.
Let $\Lambda$ be a $k \times \ell$ family of monotone gridding matrices $\{\cL^{i,j} \mid i \in [k], 
j \in[\ell] \}$ such that for each $i \in [k]$ and $j \in[\ell]$, $\cL^{i,j}$ is 
an $m \times m$ monotone gridding matrix. Moreover, $\cL^{i,j}$ is empty if $\cM_{i,j}=\emptyset$, 
and $\cL^{i,j}$ is increasing if $\cM_{i,j}\in\{\Inc,\Dec\}$.

Note that we use upper indices to distinguish individual gridding matrices in $\Lambda$, and bottom 
indices to distinguish cells of a given matrix.
A \emph{(matrix) $\cF$-assembly of $\Lambda$} is the $mk \times m \ell$ gridding matrix $\cN$ 
defined as \[\cN_{(i-1) \cdot m + a, (j-1) \cdot m + b} = \Phi_{i}(\Psi_{j}(\cL^{i,j}))_{a,b}\]
where $\Phi_i$ is an identity if $f_c(i) = 1$ and reversal otherwise, while $\Psi_j$ is an identity 
if $f_r(j) = 1$ and complement otherwise.
Observe that we just apply on each matrix the same symmetry transformation as in the regular 
$\cF$-assembly, and that $\Grid(\cN)$ is a subclass of $\Grid(\cM)$.

\begin{proof}[Proof of Proposition~\ref{pro-bicycle}]
  The proof consists of two parts.
  First, we will show that there is always a grid subclass of $\cC$ that contains in its cell graph two connected cycles of a certain special type.
  And later, we deduce the deep tree property of this particular subclass.
 Let $\cM$ be a monotone gridding matrix such that $\Grid(\cM)$ is a subclass of $\cC$, the cell 
graph $G_\cM$ contains two connected cycles, and replacing any nonempty cell of $\cM$ with 
$\emptyset$ would violate this property. Note that this means that every vertex of $G_\cM$ of degree 
two corresponds to a \emph{corner}, i.e., it has one neighbor in the same row and the other in the 
same column.
  
  We start by showing that we can always find such $\cM$ that furthermore has a consistent orientation $\cF$.
  In order to see that, consider the matrix $\cM^{\times 2}$ obtained from $\cM$ by replacing every $\Inc$-entry with the $2\times 2$ matrix $\begin{psmallmatrix} \cdot  & \Inc \\ \Inc & \cdot  \end{psmallmatrix}$, every $\Dec$-entry with the matrix $\begin{psmallmatrix} \Dec & \cdot \\ \cdot  & \Dec \end{psmallmatrix}$ and every empty entry with the empty $2 \times 2 $ matrix.
  Clearly, $\Grid(\cM^{\times 2})$ is contained in $\Grid(\cM)$.
  Moreover, notice that the orientation $\cF = (f_c, f_r)$ defined as $f_c(i) = (-1)^i$ and $f_r(j) = (-1)^j$ is a consistent orientation for $\cM^{\times 2}$.
  This fact has been first observed by Albert et al.~\cite{Albert2013}.
  Finally, we claim that while the cell graph $G_{\cM^{\times 2}}$ might not be connected, its every connected component in fact contains at least two cycles.
  This follows since both of the new vertices created by replacing a single non-empty entry with a $2 \times 2$ matrix inherit the degree of the original vertex in $G_\cM$.
  There are no vertices of degree one due to our choice of $\cM$ as a minimal matrix with the property and moreover, there is at least one vertex of degree at least 3.
  Thus, we can find two connected cycles by walking from a single vertex in three different directions.
  
  From now on, we assume that $\cM$ has a consistent orientation $\cF$.
  Its cell graph $G_\cM$ can either consist of two cycles connected by a path, or one cycle with a 
(possibly subdivided) chord.
  Let us show that, in fact, we can without loss of generality assume that the former holds.
  Note that we consider the degenerate case of two cycles joined via a vertex of degree four as two cycles connected by a path of zero length.

  To that end, assume that $G_\cM$ forms a cycle with a chord and let $u$ and $v$ be the two vertices with degree 3.
  We call the shortest of the three paths between $u$ and $v$ the \emph{chord of $G_\cM$} and the other two parts the \emph{arcs of $G_\cM$}.
  We orient the arcs such that they together form an oriented cycle.
  Since the chord is the shortest of the three parts, there exists at least one corner on each arc 
-- let us choose arbitrarily one corner from each arc, and denote them by $x$ and $y$.
  
  We define a family of matrices $\Lambda$ where each $\cL^{i,j}$ is an increasing $3\times 3$ matrix
  that is a symmetry of one of the four following types
  \[P = \begin{psmallmatrix} \cdot  & \cdot & \cdot \\ \cdot  & \Inc & \cdot \\ \Inc & \cdot & \cdot  \end{psmallmatrix},\quad
  Q = \begin{psmallmatrix} \cdot  & \cdot & \Inc \\ \cdot  & \Inc & \cdot \\ \cdot & \cdot & \cdot  \end{psmallmatrix},\quad
  R = \begin{psmallmatrix} \cdot  & \Inc & \cdot \\ \Inc  & \cdot & \cdot \\ \cdot & \cdot & \cdot  \end{psmallmatrix},\quad
  S = \begin{psmallmatrix} \cdot  & \cdot & \cdot \\ \cdot  & \Inc & \cdot \\ \cdot & \cdot & \cdot  
\end{psmallmatrix}, \]
  where the dots denote the empty entries.
  In particular, for every vertex $z$ on the oriented path from $x$ to $y$ (excluding the endpoints), we set $\cL^z = P$.
  For every vertex $z$ on the oriented path from $y$ to $x$ (again excluding the endpoints), we set $\cL^z = Q$.
  For every vertex $z$ on the chord (again excluding $x$ and $y$), we set $\cL^z = S$.
  We set $\cL^x = R$ if $x$ and its predecessor on the cycle share a common row, otherwise we set $\cL^x = R^{-1}$.
  And finally, $\cL^y = R^{-1}$ if $y$ and its predecessor on the cycle share a common row, otherwise $\cL^y = R$.
  Let $\cN$ be the matrix $\cF$-assembly of $\Lambda$.
  
  Observe that when we limit our view only to the arcs in $\cN$, the non-empty entries actually form two disjoint cycles.
  One cycle contains what was originally the  $(1,1)$ cell of $\cL^z$ for every $z$ on the path from $x$ to $y$ and the $(2,2)$ cell of $\cL^z$ for every $z$ on the path from $y$ to $x$.
  The other cycle contains the $(2,2)$ cells on the path from $x$ and $y$ and the $(3,3)$ cells  on the path from $y$ to $x$.
  The matrices $\cL^x$ and $\cL^y$ act as switches.
  Most importantly, every $\cL^z$ for $z$ in the chord has a single non-empty entry in the $(2,2)$ cell which establishes the connection between the two cycles.
  See Figure~\ref{fig:cycle-chord}.
  
  \begin{figure}
    \centering
    \raisebox{-0.5\height}{\includegraphics[width=0.45\textwidth,page=1]{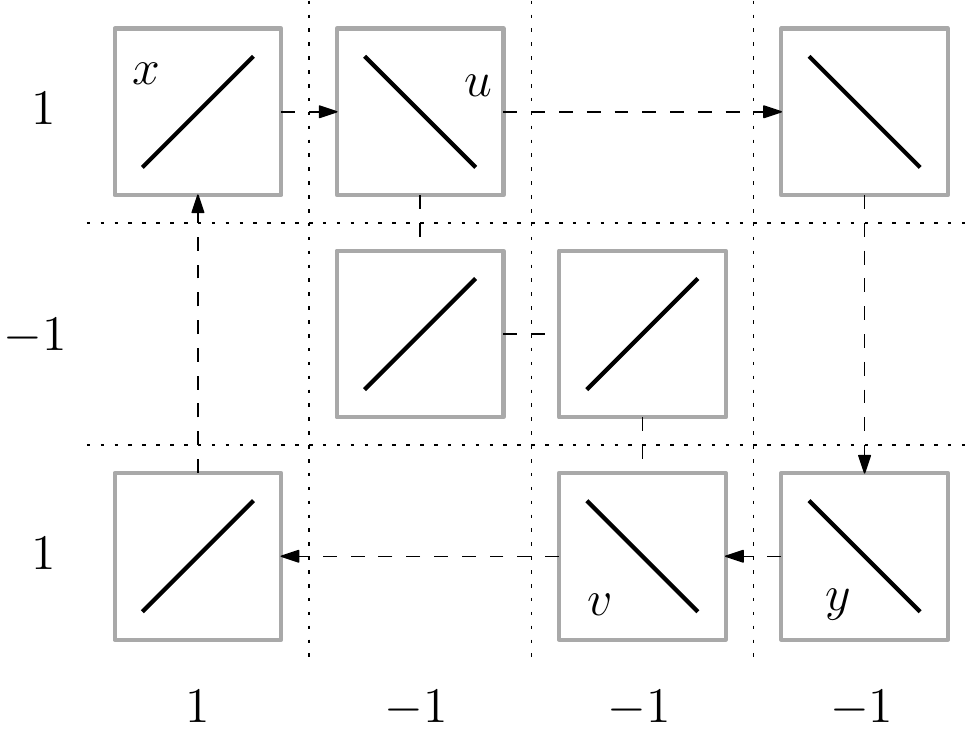}}
    \hspace{0.2in}
    \raisebox{-0.5\height}{\includegraphics[width=0.45\textwidth,page=2]{cycle-with-chord}}
    \caption{Left: a gridding matrix $\cM$ whose cell graph is a cycle with a chord together with a consistent orientation $\cF$ given by the numbers along the edges.
      Right: a gridding matrix $\cN$ whose grid class is contained in $\Grid(\cM)$ and whose cell graph consists of two cycles joined by a path. The numbers along the edges are the coordinates of the original cells under matrix $\cF$-assembly.
    }
    \label{fig:cycle-chord}
  \end{figure}
  
  Henceforth, we assume without loss of generality that $G_\cM$ contains two cycles joined with a path.
  Again, let $u$ and $v$ be the vertices with degree 3 and we orient each cycle arbitrarily.
  We pick one corner on each cycle, let $x$ be a corner on the cycle incident with $u$ and $y$ a corner on the cycle incident with $v$.
  These corners shall again play the role of some sort of switches.
  
  Let $T$ be a complete binary rooted tree of depth $d$ and let $m = 2^{d}-1$ denote the number of vertices in $T$.
  We identify the vertices of $T$ with the set $[m]$ using a breadth-first search numbering.
  Formally, the root is the integer 1 and the remaining vertices are numbered by layers and in each layer from left to right.
  Let $L(i)$ and $R(i)$ denote the left and the right child of a vertex $i$ in $T$.
  
  We now define a family of matrices $\Lambda$ that will create a subdivided $T$ upon its $\cF$-assembly (see Figure~\ref{fig:deep-tree}).
  Each $\cL^{i,j}$ is the following $2m \times 2m$ matrix.
  \begin{itemize}
    \item For any vertex $z$ on the path from $v$ to $x$ (including both $u$ and $v$), we define $\cL^z$ as the whole identity matrix with $2m$ non-empty entries on its diagonal.
    \item For $z$ on the path from $v$ to $y$ (excluding $v$ and $y$), we define $\cL^z$ as the matrix with non-empty entries of the form $(2i - 1, 2i - 1)$ for $i \in [m]$.
    \item For $z$ on the path from $y$ to $v$ (again without its endpoints), we define $\cL^z$ as the matrix with non-empty entries of the form $(2i, 2i)$ for $i \in [m]$.
    \item For the corner $y$, we set $\cL^y$ to contain the non-empty entries $(2i-1, 2i)$ if $y$ and its predecessor on the cycle share a common row, otherwise its non-empty entries are of the form $(2i, 2i-1)$ for $i \in [m]$.
    \item For $z$ on the path from $x$ to $u$ (excluding $x$ and $u$), we define $\cL^z$ as the matrix with non-empty entries of the form $(2i-1, 2i-1)$ for $i \in [m]$.
    \item Finally for the corner $x$, we set $\cL^x$ to contain the non-empty entries $(2 \cdot L(i) - 1, 2i-1)$ and $(2 \cdot R(i) - 1, 2i)$ for every $i$ that is not a leaf in $T$ if $x$ and its predecessor on the cycle share a common row, otherwise we swap the $x$- and $y$-coordinates.
  \end{itemize}
  
  \begin{figure}
    \centering
    \raisebox{-0.5\height}{\includegraphics[width=0.2\textwidth]{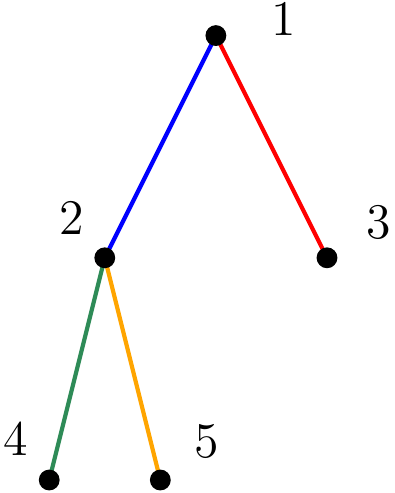}}
    \hspace{0.08\textwidth}
    \raisebox{-0.5\height}{\includegraphics[width=0.6\textwidth]{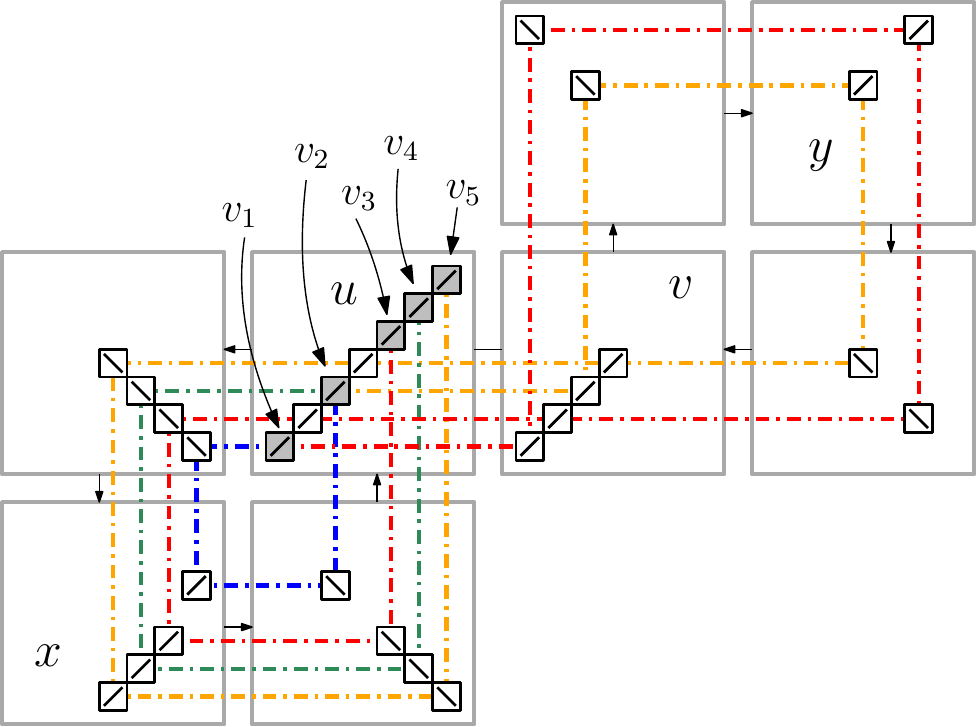}}
    \caption{Embedding a subdivision of the tree $T$ on the left in a grid class whose cell graph consists of two cycles joined by a path, on the right. The grey cells in the submatrix obtained by replacing $u$ denote precisely the vertices corresponding to vertices of $T$, the subdivided edges are highlighted using matching colors.
      The cells not participating in the subdivision of $T$ are omitted.}
    \label{fig:deep-tree}
  \end{figure}
  
  First, we need to verify that every matrix is increasing.
  Luckily, that is obvious in all cases except for $\cL^x$ and in that case it follows from the chosen labeling of the vertices in $T$.
  Let $\cN$ be the $\cF$-assembly of $\Lambda$ and let $v_i$ for $i \in [m]$ be the vertex of $G_\cN$  originating from the $(2i-1, 2i-1)$ cell in $\cL^u$.
  We claim that for every $i$ that is not a leaf in $T$, there are paths of constant length from $v_i$ to both $v_{L(i)}$ and $v_{R(i)}$ and moreover, all these paths are disjoint.
  It then follows that $G_\cN$ contains the desired subdivision of $T$.
  
  Let us start with the vertex $v_i$ for an inner vertex $i$ of $T$.
  Following the path from $u$ to $x$ in $G_\cM$, we inductively see that $v_i$ is connected to the vertices originating from the $(2i-1, 2i-1)$ cells.
  The submatrix corresponding to the original vertex $x$ then acts as a switch using the $(2 \cdot L(i) - 1, 2i-1)$ cell and we can extend the path to all the vertices originating from the $(2 \cdot L(i) - 1, 2 \cdot L(i)-1)$ cells on the path from $x$ back to $u$.
  Therefore, we found a path of constant length that connects $v_i$ to $v_{L(i)}$.
  
  In the case of the right child, we first follow the path from $u$ to $v$ and subsequently to $y$ using  in every step the vertex arising from the $(2i-1, 2i-1)$ cell.
  The corner $y$ again acts as a switch via the $(2i-1, 2i)$ cell and we extend the path all the way back through $v$ to $u$ using the $(2i, 2i)$ cells.
  From there, we follow the path to $x$ and then back to $u$.
  Using the same arguments as for the left child, we see that we end up in the vertex $v_{R(i)}$.
  It is easy to see that the paths corresponding to different edges are indeed pairwise disjoint and thus, we verified the deep tree property.
\end{proof}

For monotone grid classes, the results of this section imply a sharp dichotomy.

\begin{corollary}
  \label{cor:grid-dichotomy}
  For a monotone grid class $\Grid(\cM)$ exactly one of the following holds.
  \begin{itemize}
    \item $G_\cM$ is acyclic and $\tw_\cC(k) \in \Theta(1)$.
    \item $G_\cM$ contains at most one cycle in each component, $\cC$ has LPP and $\tw_\cC(k) \in \Theta(\sqrt k)$.
    \item $G_\cM$ has a component with at least two cycles, $\cC$ has DTP and $\tw_\cC(k) \in \Omega(k/\log k)$.
  \end{itemize}
  
\end{corollary}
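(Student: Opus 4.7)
The plan is to derive each branch of the trichotomy from results proved earlier in the paper or cited from \cite{Jelinek2020}. The three conditions on $G_\cM$ are clearly mutually exclusive and exhaustive, so it suffices to verify the consequences in each case. The acyclic case is exactly the equivalence from \cite{Jelinek2020} asserting that a monotone grid class has bounded tree-width if and only if its cell graph is acyclic. The third case, where some component of $G_\cM$ has at least two cycles, is by definition the bicycle property, so Proposition~\ref{pro-bicycle} delivers DTP and then Proposition~\ref{pro-tree} delivers $\tw_\cC(n)\in\Omega(n/\log n)$.

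The intermediate case is the substantive one. LPP and the lower bound $\tw_\cC(n)\in\Omega(\sqrt n)$ come for free: a cycle in the cell graph forces LPP by Lemma~3.5 of \cite{Jelinek2020}, and Proposition~\ref{pro-path} converts LPP into the claimed $\Omega(\sqrt n)$ bound. For the matching upper bound, the plan is to extend Theorem~\ref{thm-unicyclic} from a single connected cycle to several components, each acyclic or unicyclic. The key structural observation is that two distinct components of $G_\cM$ cannot share a row or column: if two infinite cells lie in a common row or column, then any infinite cell between them sits in their shared component, and if there is none between them they are directly adjacent in $G_\cM$. Hence for any $\pi\in\Grid(\cM)$ equipped with an $\cM$-gridding, the subpermutations $\pi^{(1)},\dots,\pi^{(t)}$ obtained by grouping points by component of $G_\cM$ occupy pairwise disjoint row- and column-bands, so any edge of $G_\pi$ joining two different components must be exceptional, and there are only $O(k+\ell)=O(1)$ exceptional edges overall. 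Applying Theorem~\ref{thm-unicyclic} to each unicyclic component and the acyclic case to each tree component gives $\tw(G_{\pi^{(i)}})\in O(\sqrt n)$ for every $i$; gluing the resulting tree decompositions and absorbing the constantly many cross-component edges costs only an additive constant, yielding $\tw_\cC(n)\in O(\sqrt n)$.

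The main obstacle is precisely this upper bound in the intermediate case, i.e.\ extending Theorem~\ref{thm-unicyclic} to disconnected cell graphs. Once the row/column-disjointness of distinct components is in place, the argument is routine and the trichotomy drops out.
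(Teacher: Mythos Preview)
Your argument is correct and matches what the paper intends: the corollary is stated there without proof, as an immediate consequence of Propositions~\ref{pro-path}, \ref{pro-tree}, \ref{pro-bicycle}, Theorem~\ref{thm-unicyclic}, and the acyclic bounded-width result from~\cite{Jelinek2020}. You have correctly spotted that the only step requiring any care is the upper bound in the intermediate case, since Theorem~\ref{thm-unicyclic} is stated only for connected unicyclic $G_\cM$, and you have filled this in properly. Your key observation---that all nonempty cells in a common row or column lie on a path in $G_\cM$ and hence in a single component---is exactly right, and from it the decomposition of $G_\pi$ minus its $O(1)$ exceptional edges into a disjoint union over components follows as you say.
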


\subsection{The case of principal classes}\label{ssec-principal}

In this section, we investigate the long path and deep tree properties of principal classes, i.e., the classes of the form $\Av(\pi)$.
Combined with the results of Subsection~\ref{subsec:width}, it allows us to infer lower bounds for the tree-width growth function of $\Av(\sigma)$.
Moreover, together with the results of Section~\ref{sec-hardness}, we obtain conditional lower 
bounds for counting patterns from $\Av(\sigma)$.
We note that Berendsohn~\cite{BerendsohnMs} has previously shown that for any $\pi$ of length 
at least 4 that is not symmetric to one of $\{3412,\allowbreak 3142,\allowbreak 4213,\allowbreak 
4123,\allowbreak 42153,\allowbreak 41352,\allowbreak 42513\}$, the class $\Av(\pi)$ has near-linear 
width.
We reproduce and extend this result in a concise way with the tools that we have built up.

The \emph{$k$-step increasing $(\cC, \cD)$-staircase}, denoted by $\St_k(\cC, \cD)$ is a grid class $\Grid(\cM)$ of a $k \times (k+1)$ gridding matrix $\cM$ such that the only non-empty entries in $\cM$ are $\cM_{i,i} = \cC$ and $\cM_{i, i+1} = \cD$ for every $i \in [k]$. In other words, the entries on the main diagonal are equal to $\cC$ and the entries of the adjacent lower diagonal are equal to $\cD$. The \emph{increasing $(\cC, \cD)$-staircase}, denoted by $\St(\cC, \cD)$, is the union of $\St_k(\cC, \cD)$ over all $k \in \mathbb{N}$.

The authors~\cite{Jelinek2021} recently showed that $\Av(\sigma)$ contains a certain staircase class for three patterns of length 3 and certain $2\times 2$ grid classes for four patterns of length 4.

\begin{proposition}[Jelínek et al.\cite{Jelinek2021}]
  \label{prop:staircases}
  We have $\St(\Inc,\Av(321))\subseteq \Av(4321)$, $\St(\Inc,\Av(231))\subseteq\Av(4231)$ and $\St(\Inc,\Av(312))\subseteq \Av(4312)$.
\end{proposition}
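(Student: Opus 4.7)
The plan is to prove each of the three containments $\St(\Inc, \Av(\sigma)) \subseteq \Av(\widehat\sigma)$ by contradiction, using a uniform case analysis on the cells of the staircase occupied by the points of the forbidden pattern. Fix such a containment and suppose $\pi \in \St_k(\Inc, \Av(\sigma))$ contains an occurrence of $\widehat\sigma$ formed by points $p_1, p_2, p_3, p_4$ in left-to-right order. Let $(c_s, r_s)$ denote the cell of the staircase containing $p_s$; by the definition of $\St_k$ we have $r_s \in \{c_s, c_s+1\}$, with $r_s = c_s$ indicating the $\Inc$-cell of column $c_s$ and $r_s = c_s+1$ indicating the $\Av(\sigma)$-cell above it.

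The first step is to pin down the cells. The $x$-order gives $c_1 \le c_2 \le c_3 \le c_4$. In all three patterns $\widehat\sigma$ the point $p_1$ is strictly higher than $p_4$, so in terms of row-bands we have $r_1 \ge r_4$. Combined with $r_1 \le c_1+1$ and $r_4 \ge c_4$, this forces $c_4 - c_1 \le 1$, so the four points lie in at most two adjacent columns of the staircase, spanning at most four cells.

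The second step is to record, separately for each $\widehat\sigma$, which subsets of $\{p_1,\ldots,p_4\}$ may coexist inside a single cell: an $\Inc$-cell forbids any subset containing a $12$-pair, and an $\Av(\sigma)$-cell forbids any subset containing an occurrence of $\sigma$. For $\widehat\sigma = 4321$ every pair is a $21$ and every triple is a $321$, so any $\Inc$-cell holds at most one point and any $\Av(321)$-cell at most two. For $\widehat\sigma = 4231$ the only exceptions are that $\{p_2,p_3\}$ is the unique increasing pair and $\{p_2,p_3,p_4\}$ is the unique $231$-triple. For $\widehat\sigma = 4312$ the exceptions are $\{p_3,p_4\}$ and the two $312$-triples $\{p_1,p_3,p_4\}, \{p_2,p_3,p_4\}$. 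These lists cap which of the four points any single cell may contain.

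The third step is a short enumeration, split by $a = |\{s : c_s = c_1\}| \in \{1,2,3,4\}$ and, within each value of $a$, by the distribution of the $p_s$ among the (at most four) cells of the columns $c_1$ and $c_1+1$. Each placement left standing by the previous two steps is eliminated by combining the row-band inequality (two points in different row-bands have their $y$-order determined by the row indices) with the per-cell avoidance constraint from the second step. I expect the main obstacle to be bookkeeping: when two of the $p_s$ share a row-band, their $y$-order is not forced by the gridding, so one must instead derive the contradiction from a third point lying in a strictly different row-band, and the list of forbidden intra-cell subsets differs between the three containments. Once the two basic constraints $c_1 \le \cdots \le c_4$ and $c_4 - c_1 \le 1$ are in hand, however, the cases are few and dispatched directly.
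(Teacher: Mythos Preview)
The paper does not prove this proposition; it is quoted from \cite{Jelinek2021} without proof, so there is no argument here to compare your plan against.

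Your plan, however, has a concrete gap for the second and third inclusions. Take $\widehat\sigma = 4231$ and the single-column placement $c_1=c_2=c_3=c_4=c$ with $r_1=r_2=r_3=c+1$ and $r_4=c$. The $\Av(231)$-cell then contains $\{p_1,p_2,p_3\}$, which reduces to $312\in\Av(231)$, while the $\Inc$-cell contains only $\{p_4\}$; neither of your per-cell constraints rules this out, and the row-band inequality is satisfied since $p_4$ is the unique lowest point. This is in fact a valid $\cM$-gridding, witnessing $4231\in\St_1(\Inc,\Av(231))$: with the staircase as defined in this paper ($\Inc$ on the main diagonal, $\Av(\sigma)$ in the cell immediately above), the second inclusion is simply false. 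Likewise $4312\in\St_1(\Inc,\Av(312))$ via $r_1=r_2=c+1$, $r_3=r_4=c$. Your enumeration does succeed for the first inclusion, because every triple of points in $4321$ forms a copy of $321$.

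The likely explanation is a convention mismatch between this paper's staircase definition and the cited source. If the two diagonals are swapped (so that $\Av(\sigma)$ sits below $\Inc$ in each column), then your three-step plan goes through in all three cases: for instance, for $4231$ in a single column, $p_1$ is forced into the top $\Inc$-cell and cannot be joined there by $p_3$ (they form $21$), whence $p_2,p_3,p_4$ all land in the bottom $\Av(231)$-cell and form $231$. So the strategy is sound; it is the statement, not your argument, that needs repairing before the case analysis can close.
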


\begin{proposition}[Jelínek et al.\cite{Jelinek2021}]
  \label{prop:rich-2by2}
  The class $\Av(\sigma)$ contains the class $\Grid(\cM)$ for the gridding matrix $\cM = \begin{psmallmatrix} \Dec & \Inc \\ \Av(\pi) & \Dec \end{psmallmatrix}$ whenever
  \begin{itemize}
    \item $\pi = 132$ and $\sigma = 14523$, or
    \item $\pi = 231$ and $\sigma = 24513$, or
    \item $\pi = 321$ and $\sigma \in \{32154, 42513\}$.
  \end{itemize}
\end{proposition}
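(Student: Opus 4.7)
The plan is to prove each of the three containments by contradiction. Fix one of the pairs $(\pi, \sigma)$ in the statement and suppose some $\tau \in \Grid(\cM)$ equipped with an $\cM$-gridding contains an occurrence of $\sigma$. Label the four cells BL, BR, TL and TR, of respective types $\Av(\pi)$, $\Dec$, $\Dec$ and $\Inc$. Each entry of the occurrence lies in a unique cell; the column indices of the entries are non-decreasing as $\sigma$ is scanned from left to right, and the row indices are non-decreasing when the entries are listed by increasing value. Consequently the whole assignment of entries to cells is determined by a column split $a \in \{0,\dots,5\}$ (the first $a$ positions of $\sigma$ go to column $1$) and a row split $b \in \{0,\dots,5\}$ (the $b$ smallest values go to row $1$).

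For each of the at most $36$ pairs $(a,b)$ the subpattern of $\sigma$ induced in each cell is fully determined, and I would verify that at least one cell violates its type: either a $\Dec$ or $\Inc$ cell receives two entries in the wrong relative order, or BL receives a subpattern reducing to $\pi$. As an illustration, take $\sigma = 14523$ and $\pi = 132$. The pair $(a,b)=(2,3)$ places $\sigma_4=2$ and $\sigma_5=3$ into BR, an increasing pair forbidden by BR being $\Dec$; the pair $(a,b)=(4,4)$ places $\sigma_1=1$, $\sigma_2=4$ and $\sigma_4=2$ into BL, which reduces to $132$, forbidden by BL being $\Av(132)$. An analogous local inspection handles every other choice of $(a,b)$, and the same template applies to the remaining instances of $(\pi,\sigma)$.

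The main obstacle will be keeping the case analysis exhaustive without turning it into a long list of almost identical arguments. A convenient reduction is first to dispatch all $(a,b)$ in which a monotone cell receives a wrongly ordered pair of entries; the surviving sub-cases are few, and in each of them direct inspection of $\sigma$ shows that the entries forced into BL form a copy of $\pi$, which finishes the contradiction. I expect the analysis for $\sigma = 42513$ with $\pi = 321$ to have the largest residual set and to be the most tedious one to spell out, but the same technique applies verbatim.
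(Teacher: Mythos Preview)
The paper does not contain its own proof of this proposition: it is quoted verbatim from \cite{Jelinek2021} and used as a black box. There is therefore nothing in the present paper to compare your argument against.

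That said, your approach is correct and is exactly the natural way to verify such statements. An $\cM$-gridding of the text induces, on any occurrence of $\sigma$, a column split $a$ and a row split $b$, and for each of the at most $36$ pairs $(a,b)$ one checks that some cell constraint is violated. Your two worked examples for $\sigma=14523$ are accurate, and your plan of first eliminating all $(a,b)$ where a monotone cell receives a wrongly-ordered pair, and then inspecting the few remaining cases for a copy of $\pi$ in the BL cell, is the standard way to keep the case count manageable. The $\sigma=42513$ case is indeed the one with the most surviving splits, but nothing qualitatively new happens there.
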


\begin{figure}[t]
\includegraphics[width=\textwidth]{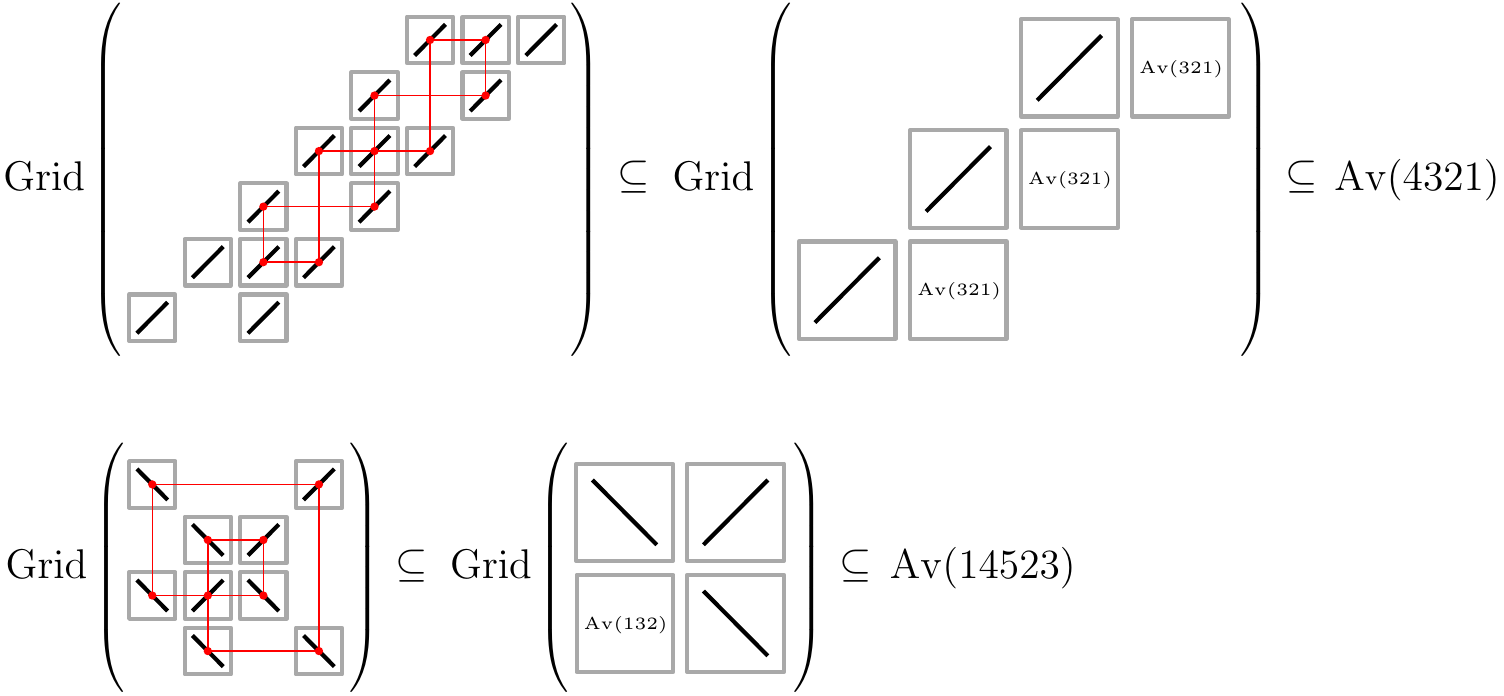}
\caption{Establishing the bicycle property with the help of Propositions~\ref{prop:staircases} (top) 
and~\ref{prop:rich-2by2} (bottom). The red lines highlight the two connected 
cycles in the cell graphs.}\label{fig-bicycle}
\end{figure}

\begin{proposition}
  \label{prop:dtp-principal}
  If $\sigma$ is a permutation of length at least 4 that is not in symmetric to any of $3412, 3142, 4213, 4123$ or $41352$, then $\Av(\sigma)$ has the bicycle property and thus, $\Av(\sigma)$ has near-linear width.
\end{proposition}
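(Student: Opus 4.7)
The strategy is to reduce all cases to a small set of handled length-$4$ and length-$5$ permutations treated by Propositions~\ref{prop:staircases} and~\ref{prop:rich-2by2}. We use the fundamental fact that $\pi \le \sigma$ implies $\Av(\pi) \subseteq \Av(\sigma)$, so the bicycle property of $\Av(\pi)$ automatically transfers to $\Av(\sigma)$.

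The length-$4$ case I would handle directly by enumerating the seven symmetry classes of $S_4$: the four excluded ones are represented by $3412, 3142, 4213, 4123$, and the remaining three are represented by $4321, 4312$, and $4231$. For each of these representatives, Proposition~\ref{prop:staircases} provides a staircase $\St(\Inc, \Av(\pi))$ contained in $\Av(\sigma)$ with $\pi \in \{321, 312, 231\}$. Each $\Av(\pi)$ contains a two-cell monotone subgrid (for instance, $\Av(321) \supseteq \Grid\begin{psmallmatrix}\Inc & \Inc\end{psmallmatrix}$, with analogous anti-diagonal constructions for $231$ and $312$). Refining the staircase by substituting these subgrids produces a monotone grid matrix whose cell graph is a two-rowed ladder, containing multiple $4$-cycles once the staircase has at least two steps. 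The top of Figure~\ref{fig-bicycle} illustrates this construction.

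For length $\ge 5$, my claim is that every such $\sigma$ not symmetric to $41352$ either contains as a subpattern one of the three handled length-$4$ classes (reducing to the previous case) or is symmetric to one of the four length-$5$ patterns $\{14523, 24513, 32154, 42513\}$ handled by Proposition~\ref{prop:rich-2by2}. For each of the latter, the matrix $\begin{psmallmatrix}\Dec & \Inc \\ \Av(\pi) & \Dec\end{psmallmatrix}$ has cell graph a $4$-cycle; replacing the $\Av(\pi)$ corner with a two-cell monotone subgrid (two stacked $\Inc$ cells for $\pi = 321$, an anti-diagonal $\Dec/\Inc$ pair for $\pi \in \{132, 231\}$) yields a monotone grid refinement whose cell graph contains two cycles, as shown in the bottom of Figure~\ref{fig-bicycle}.

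The main obstacle is establishing the finite classification underlying this reduction: showing that the only length-$5$ permutations (up to symmetry) whose every length-$4$ subpattern lies in an excluded class are precisely $\{41352, 14523, 24513, 32154, 42513\}$, and that no permutation of length $\ge 6$ enjoys this property. The length-$5$ claim can be verified by a direct check over $S_5$. For length $\ge 6$, such a permutation $\tau$ would force every length-$5$ subpattern of $\tau$ to be critical; if any subpattern lies in one of the four handled length-$5$ classes we reduce upward, so the residual case amounts to verifying that no length-$6$ permutation has all six length-$5$ subpatterns in the two-element orbit $\{41352, 25314\}$ of $41352$ under the dihedral action -- a short finite check. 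An induction on the length then completes the argument.
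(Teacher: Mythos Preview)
Your overall strategy is exactly the paper's: reduce to the seven ``base'' patterns $\{4321,4231,4312,14523,24513,32154,42513\}$ via containment, then use Propositions~\ref{prop:staircases} and~\ref{prop:rich-2by2} and a monotone refinement of the $\Av(\pi)$ cells (for $|\pi|=3$) to exhibit a connected cell graph with two cycles. Your sketch of the finite classification (which the paper simply cites from~\cite{Jelinek2021}) is also sound.

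There is, however, a genuine error in the refinement step. A \emph{two}-cell monotone subgrid of $\Av(\pi)$ is not enough to produce the bicycle property. Consider the $2\times 2$ case from Proposition~\ref{prop:rich-2by2}: if you replace the three $\Inc/\Dec$ corners by the mandatory diagonal/anti-diagonal $2\times 2$ blocks (so that they still grid to $\Inc$ or $\Dec$) and the $\Av(\pi)$ corner by any $2\times 2$ block with only two nonempty cells, the resulting $4\times 4$ matrix has exactly $8$ nonempty cells and its cell graph has exactly $8$ edges. Depending on the placement you get either two \emph{disjoint} $4$-cycles, a single $8$-cycle, or a unicyclic graph with a pendant path --- never a connected graph with two cycles. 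The same edge-count obstruction kills the staircase version: refining $\St_k(\Inc,\Av(\pi))$ by splitting each column (or row) in two and using a one-dimensional $\begin{psmallmatrix}\Inc&\Inc\end{psmallmatrix}$ block for the $\Av(\pi)$ cells yields a cell graph with $|E|=|V|-1$, i.e., a forest, not a ladder.

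The fix, which is what the paper actually does and what Figure~\ref{fig-bicycle} depicts, is to use a $2\times 2$ block with \emph{three} nonempty monotone entries for each $\Av(\pi)$ cell, e.g.\ $\Grid\begin{psmallmatrix}\Inc&\Inc\\ \Inc&\cdot\end{psmallmatrix}\subseteq\Av(321)$ and $\Grid\begin{psmallmatrix}\Dec&\Inc\\ \cdot&\Dec\end{psmallmatrix}\subseteq\Av(132)$. The extra cell supplies the additional edge needed to push the cell graph past unicyclic. With this correction your argument goes through and matches the paper's.
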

\begin{proof}
As shown by the authors~\cite{Jelinek2021}, every $\sigma$ of length at least 4 that is not  
symmetric to any of $3412, 3142, 4213, 4123$ or $41352$ must contain, up to symmetry, one of the 
three permutations $4321$, $4231$ and $4312$ from Proposition~\ref{prop:staircases} or one of the 
four permutations $14523$, $24513$, $32154$ and $42513$ from Proposition~\ref{prop:rich-2by2}. It is 
thus enough to establish the bicycle property for $\Av(\sigma)$ when $\sigma$ is one of these seven 
permutations. Refer to Figure~\ref{fig-bicycle}.

We start by proving that every class defined by forbidding a pattern of length 3 must contain a special type of monotone grid subclass.
For arbitrary $\pi$ of length 3, the class $\Av(\pi)$ contains a grid class $\Grid(\cM)$ such that $\cM$ is a $2\times 2$ monotone gridding matrix with three non-empty entries.
Since there are only two different symmetry types of permutations of length 3, it is enough to check that
\[\Grid\begin{psmallmatrix} \Inc & \Inc \\ \Inc &  \cdot  \end{psmallmatrix}  \subseteq \Av(321)
\quad \text{and} \quad
\Grid\begin{psmallmatrix} \Dec & \Inc \\ \cdot  & \Dec  \end{psmallmatrix}  \subseteq \Av(132).
\]

First, we prove the claim for the patterns that appear in Proposition~\ref{prop:staircases}.
Let $\sigma \in \{4321,\allowbreak 4231, \allowbreak 4312\}$ and take a 3-step increasing staircase $\St_3(\Inc,\Av(\pi))$ for $\pi$ of length 3 that is contained in $\Av(\sigma)$.
Let $\cM'$ be a $6\times8$ monotone gridding matrix obtained from $\St_3(\Inc,\Av(\pi))$ by replacing every $\Inc$-entry by the identity matrix $\begin{psmallmatrix} \cdot  & \Inc \\ \Inc & \cdot  \end{psmallmatrix}$ and every $\Av(\pi)$-entry with its $2 \times 2$ monotone grid subclass which has three non-empty entries.
Clearly, $\Grid(\cM')$ is a subclass of $\Av(\sigma)$, and it is easy to check that for any $\pi$, the cell graph of $\cM'$ is connected and contains two cycles. 
  
We prove the claim for the patterns that appear in Proposition~\ref{prop:rich-2by2} in a similar fashion.
Let $\sigma \in \{14523 \allowbreak, 24513, \allowbreak 32154,\allowbreak 42513\}$ and take $\cM$ to be the grid class $\Grid\begin{psmallmatrix} \Dec & \Inc \\ \Av(\pi) & \Dec \end{psmallmatrix}$ for $\pi$ of length 3 that is contained in $\Av(\sigma)$.
Similar to before, let $\cM'$ be the gridding matrix obtained from $\cM$ by replacing the $\Inc$-entry with the matrix $\begin{psmallmatrix} \cdot  & \Inc \\ \Inc & \cdot  \end{psmallmatrix}$, both $\Dec$-entries with the matrix $\begin{psmallmatrix} \Dec & \cdot \\ \cdot  & \Dec \end{psmallmatrix}$, and $\Av(\pi)$ with its $2 \times 2$ monotone grid subclass which has three non-empty entries.
Again, $\Grid(\cM')$ is a subclass of $\Av(\sigma)$, and it is easy to check that for any $\pi$, the cell graph of $\cM'$ is connected and contains two cycles. 
\end{proof}


\begin{figure}[t]
  \begin{tabularx}{\textwidth}{|>{\raggedright}p{70pt}| >{\raggedright}p{80pt}| >{\hsize=1\hsize\arraybackslash}X|}
    \hline
    $\sigma$ & LPP, DTP of $\Av(\sigma)$ & Comment \\ \hline
    1, 21, 312 & neither LPP nor DTP &  $\tw_{\Av(\sigma)} \in \Theta(1)$ by Ahal and 
Rabinovich~\cite{AR08_subpattern} which implies absence of DTP and LPP by 
Proposition~\ref{pro-path}. \\\hline
    321, 3412, 3142, 4213, 4123, 41352 & LPP but not DTP  & LPP of 321 and 3412 follows due to Jelínek and Kynčl~\cite{JeKy}, the rest contains either 123 or 321. 
    The absence of DTP follows from Observation~\ref{obs-3412} and Propositions~\ref{pro-4123} 
and~\ref{pro-41352}. \\\hline
    All other & both LPP and DTP & DTP by Proposition~\ref{prop:dtp-principal}, LPP follows. \\\hline
  \end{tabularx}
  \caption{The long path and deep tree properties of principal classes, i.e. classes of form $\Av(\sigma)$. Only one pattern $\sigma$ from each symmetry group is listed.}
  \label{table:lpp-dtp}
\end{figure}

We continue by showing that the DTP cannot get us any further, since for any $\sigma\in\{3412,\allowbreak 3142,\allowbreak 
4213,\allowbreak 4123,\allowbreak 41352\}$, the class $\Av(\sigma)$ does not have the DTP.
Hereby, we actually obtained a complete knowledge of LPP and DTP for principal classes.
See Figure~\ref{table:lpp-dtp}.
Note that $\Av(3142)$ and $\Av(4213)$ are, up to symmetry, subclasses of $\Av(41352)$, so 
we only need to focus on the patterns $3412$, $4123$ and $41352$.

Let us say that a graph $G$ is \emph{representable} in a class $\cC$ if $\cC$ contains a monotone 
grid subclass whose cell graph is~$G$. 

\begin{observation}\label{obs-3412}
  A graph with a vertex of degree three is not representable in $\Av(3412)$.
\end{observation}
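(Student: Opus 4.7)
The plan is to establish the contrapositive: if a monotone gridding matrix $\cM$ has a cell graph $G_\cM$ containing a vertex of degree at least three, then $3412 \in \Grid(\cM)$, ruling out $\Grid(\cM) \subseteq \Av(3412)$. The core idea is to exhibit a four-point copy of $3412$ using a single point from each of four relevant cells; since a single point lies simultaneously in both $\Inc$ and $\Dec$, the specific orientations of the cells of $\cM$ play no role in the argument.

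First I would reduce the case analysis via symmetry. Because $3412^{-1}=3412$ and $3412^{rc}=3412$, the class $\Av(3412)$ is closed under both the inverse and the reverse-complement, and the corresponding matrix operations (transposition and $180^\circ$ rotation) preserve vertex degrees in the cell graph. A degree-three vertex $v$ has neighbors in exactly three of the four cardinal directions, yielding four possible configurations $\{L,R,D\}$, $\{L,R,U\}$, $\{L,U,D\}$, $\{R,U,D\}$, and these form a single orbit under the two symmetries above. Hence I may assume without loss of generality that the three neighbors of $B := \cM_v$ are $A$ to the left, $C$ to the right, and $D$ directly below; a vertex of degree four is handled by ignoring one of its four neighbors.

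With the configuration fixed, I would place one point $p_k$ in each of the four cells as follows: $p_1$ arbitrary in $A$; $p_2$ near the left edge of the column strip of $B$ and near the top of its row strip; $p_3$ near the right edge of the column strip of $D$ (which coincides with that of $B$), at arbitrary height within $D$'s row strip; and $p_4$ near the bottom of the row strip of $C$ (the same as $B$'s), at any horizontal position in $C$'s column strip. A direct inspection then gives the $x$-ordering $p_1.x < p_2.x < p_3.x < p_4.x$ and the $y$-ordering $p_3.y < p_4.y < p_1.y < p_2.y$, so the reduced four-point permutation is exactly $3412$. This permutation admits an $\cM$-gridding in which each of $A, B, C, D$ contains a single point while all other cells remain empty, so it lies in $\Grid(\cM)$ irrespective of whether the four cells are declared $\Inc$ or $\Dec$ in $\cM$.

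There is no substantial technical obstacle here; the only delicate point is stating the symmetry reduction cleanly. The decisive observation is that using a single point per cell eliminates what could have been a lengthy case analysis over all $\Inc/\Dec$ assignments to the four involved cells, reducing the proof to a single elementary geometric construction.
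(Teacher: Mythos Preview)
Your argument is correct. The paper states this result as an observation and gives no proof at all, evidently regarding the verification as routine; your write-up makes explicit exactly the kind of one-point-per-cell construction the authors have in mind, together with a clean symmetry reduction. One cosmetic point: calling $p_1$ ``arbitrary'' in $A$ is slightly loose, since you then need $p_4.y < p_1.y < p_2.y$ within the common row strip---it would be cleaner to say that $p_1$, $p_2$, $p_4$ are chosen at three distinct heights in that strip with $p_4$ lowest and $p_2$ highest. This does not affect correctness.
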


\begin{proposition}\label{pro-4123}
  If $G$ is a graph with a vertex of degree 3 whose every neighbor has degree at least 2, then $G$ is 
  not representable in $\Av(4123)$.
\end{proposition}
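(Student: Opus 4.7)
My plan is to derive a contradiction from the assumption that $G$ is representable in $\Av(4123)$ by producing, for any monotone gridding matrix $\cM$ realizing $G$ with $\Grid(\cM)\subseteq\Av(4123)$, a permutation of $\Grid(\cM)$ that contains the pattern $4123$.

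The first step would be to classify the admissible $2$-cell configurations in $\cM$. A direct inspection of the four monotone $2$-cell matrices in each orientation shows that the only adjacent same-row pairs (read left-to-right) whose grid class is contained in $\Av(4123)$ are $(\Inc,\Dec)$ and $(\Dec,\Dec)$, and symmetrically the only adjacent same-column pairs (read top-to-bottom) are $(\Inc,\Dec)$ and $(\Dec,\Dec)$. A crucial consequence is that an $\Inc$ cell can only have cell-graph neighbors to its right or directly below it, so its degree is at most $2$. Hence the degree-$3$ vertex $C$ must be of type $\Dec$, and any right or below neighbor of $C$ is also forced to be $\Dec$.

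I would then case-split on which three of the four cardinal directions (left $L$, right $R$, above $A$, below $B$) contain neighbors of $C$. In the two cases $\{L,R,B\}$ and $\{R,A,B\}$, the four cells $C$ and its neighbors already force $4123$ without invoking the degree-$\geq 2$ hypothesis: placing a single high-valued point in $L$ or $A$ (playing the role of ``$4$''), a low-valued point in $B$ in the row strictly below $C$ (playing ``$1$''), and two ascending low-valued points in $C$ and $R$ in the row of $C$ (playing ``$2$'' and ``$3$''), all at positions ordered by column index, realizes the pattern in $\Grid(\cM)$. In the remaining two cases $\{L,R,A\}$ and $\{L,A,B\}$, the bare four-cell neighborhood lacks a cell in the row below $C$ (respectively, in the column right of $C$) to serve as ``$1$'' (respectively, ``$3$''); this is where I would invoke the degree-$\geq 2$ hypothesis. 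The neighbor in the ``missing direction''—$L$ in the first case and $A$ in the second—then has a further cell-graph neighbor $L'$ (respectively $A'$), whose admissible positions and forced types are limited by the $2$-cell rules. Enumerating the resulting sub-cases, each yields a concrete four-point $4123$ embedding built from the extra cell together with three of $\{C, L, R, A, B\}$: for instance, in $\{L,R,A\}$ an extra $L'$ placed below $L$ (and forced to be $\Dec$) gives a $4123$ on $L, L', C, R$ with $L$ as ``$4$'', while in $\{L,A,B\}$ an extra $A'$ placed right of $A$ (forced $\Dec$) gives a $4123$ on $A, B, C, A'$ with $A$ as ``$4$''.

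The main technical obstacle will be the bookkeeping in these two T-shaped cases: each admissible position of the extra neighbor carries its own type restriction, and for each sub-case one must verify that a specific four-point $4123$ is realizable under the gridding constraints. However each verification reduces to an elementary placement of four points at prescribed positions and values consistent with the given cell types, and collecting all cases together yields the required contradiction.
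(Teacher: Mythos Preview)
Your approach is correct but takes a noticeably longer route than the paper's. The paper exploits the fact that $4123$ is fixed by the anti-diagonal reflection (equivalently, reverse--complement--inverse), so one may assume without loss of generality that two of the three neighbours of $v$ lie in the same \emph{column} as~$v$ (say $x$ above and $y$ below) and the third, $z$, lies in its row. The degree-$2$ hypothesis is then invoked only on the upper neighbour~$x$: its extra neighbour $w$ must lie above $x$ or in the row of~$x$, and in every such placement the four cells $v,x,y,w$ (one point each, irrespective of cell type) embed $4123$. The third neighbour $z$ is never used. By contrast, you develop the $2$-cell admissibility rules, deduce the cell type of $C$, and run a four-way case split on the missing direction, with further sub-cases on the types of $L$ and~$A$ in the two ``hard'' configurations. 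Your machinery is sound --- the $2$-cell constraints are correct, and in the $\{L,R,A\}$ and $\{L,A,B\}$ cases the type of the relevant neighbour indeed forces the extra cell into the position you describe when that neighbour is $\Inc$ (the $\Dec$ sub-case is even easier, since then three cells already suffice via a two-point placement in that cell) --- but the symmetry observation collapses all of this to a single short argument that never needs to inspect cell types at all.
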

\begin{proof}
  Suppose $G$ is a graph with a vertex $v$ of degree 3, and let $x$, $y$ and $z$ be the three 
  neighbors of $v$. Suppose each of the three neighbors has degree at least 2. For contradiction, 
  suppose that $\Av(4123)$ contains a monotone grid subclass $\cC=\Grid(\cM)$ whose cell graph is~$G$. 
  
  Abusing notation slightly, we will identify the vertices of $G$ with the corresponding cells of the 
  matrix~$\cM$. Since the pattern $4123$ is symmetric with respect to diagonal reflection, we may 
  assume without loss of generality that $x$ and $y$ are in the same column of $\cM$ as $v$, and that 
  $z$ is in the same row as $v$. Assume further that $x$ is above $v$ and $y$ is below~$v$.
  
  By assumption, $x$ has degree at least 2. In particular, there is a vertex $w$ adjacent to $x$ and 
  different from~$v$. The vertex $w$ is either in the same row as $x$ or in the same column and 
  above~$x$.  However, for any possible placement of $w$, the four cells $v, w, x,y$ will embed the 
  forbidden pattern $4123$, a contradiction.
\end{proof}

Finally, we turn to the pattern $41352$.  Here the proof is slightly more involved and we begin with 
a lemma. Note that we assume that the rows in a gridding matrix are numbered bottom to top. A cell 
in row $r$ and column $c$ of a gridding matrix is referred to as the \emph{cell $(r,c)$}. 

\begin{lemma}\label{lem-41352}
  Let $\cC=\Grid(\cM)$ be a monotone grid class not containing the pattern 41352. Suppose that there 
  are two row indices $r_1<r_2$ and two column indices $c_1<c_2$, such that the three cells 
  $(r_1,c_2)$, $(r_2,c_1)$ and $(r_2,c_2)$ are all nonempty, and moreover the cell $(r_2,c_2)$ is a 
  $\Inc$-cell. Then the following holds:
  \begin{enumerate}
    \item The cell $(r_1,c_2)$ is a $\Dec$-cell.
    \item Any cell $(r,c)$ satisfying $r_1\le r\le r_2$ and $c\ge c_2$ is empty, except the cells 
    $(r_1,c_2)$ and $(r_2,c_2)$.
    \item Any cell $(r,c)$ satisfying $r\le r_1$ and $c_1\le c\le c_2$ is empty, except the cell 
    $(r_1,c_2)$.
  \end{enumerate}
\end{lemma}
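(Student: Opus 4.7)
The plan is to prove each of the three claims by contradiction, constructing in each case a five-point configuration inside $\Grid(\cM)$ whose reduction is $41352$, and thereby contradicting the hypothesis $\Grid(\cM)\subseteq\Av(41352)$. Throughout, I will exploit the fact that within a single row (respectively column) of an $\cM$-gridding the $y$-coordinates (respectively $x$-coordinates) of the points can be chosen freely inside the row's (respectively column's) range, and that a pair of points forming the correct monotone pattern can always be placed inside any nonempty $\Inc$- or $\Dec$-cell.

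The common starting point is a \emph{base L-configuration} of four points, which exists by the hypotheses of the lemma: a point $w$ in $(r_2,c_1)$, two points $u_1,u_2$ in the $\Inc$-cell $(r_2,c_2)$ with $u_1.x<u_2.x$ and $u_1.y<u_2.y$, and a point $v$ in $(r_1,c_2)$. For Claim 1, I would assume for contradiction that $(r_1,c_2)$ is an $\Inc$-cell, place two points $v_1,v_2$ there with $v_1.x<v_2.x$ and $v_1.y<v_2.y$, and then take the five points in $x$-order $w,v_1,u_1,u_2,v_2$ (achievable by choosing $x$-coordinates freely within column $c_2$). Choosing $u_1.y<w.y<u_2.y$ inside row $r_2$ makes the $y$-ranks of these five points in $x$-order equal to $4,1,3,5,2$, i.e.\ the pattern $41352$.

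For Claims 2 and 3, each hypothesized violating cell provides an extra point $z$, which I would combine with the base L-configuration. For Claim 2, use $x$-order $w,v,u_1,u_2,z$ (placing $z$ rightmost, which is automatic if $c>c_2$ and a free choice within column $c_2$ otherwise) and set the $y$-coordinates so that $v.y$ is smallest, $z.y$ is second-smallest, and $u_1.y<w.y<u_2.y$ in row $r_2$; this realizes $41352$. For Claim 3, use $x$-order $w,z,u_1,u_2,v$ (placing $v$ rightmost in column $c_2$ and $z$ to the left of $u_1$, each achievable by the column hypothesis on $z$) and set $y$-coordinates so that $z.y$ is smallest, $v.y$ second-smallest, and $u_1.y<w.y<u_2.y$; again this realizes $41352$.

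The main obstacle, such as it is, is a careful subcase check in Claims 2 and 3 when the violating cell $(r,c)$ lies on a boundary row or column of the prescribed rectangle (for example $r=r_1$ or $r=r_2$ in Claim 2, or $r=r_1$ or $c=c_2$ in Claim 3), since in those situations $z$ shares a row or column with $v$, $w$, or the $u_i$, and some inequalities that were automatic become choices. In every such boundary subcase, however, the required inequalities are compatible with the freedom to position coordinates arbitrarily inside a single row's range or a single column's range, so the construction still yields a $41352$-configuration inside $\Grid(\cM)$, giving the desired contradiction.
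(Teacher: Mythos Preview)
Your proposal is correct and follows essentially the same approach as the paper: each claim is proved by contradiction, placing value $4$ in $(r_2,c_1)$, values $3,5$ in the $\Inc$-cell $(r_2,c_2)$, and the remaining two values in $(r_1,c_2)$ and the hypothetical offending cell so as to realise $41352$. The paper dispatches Claim~3 by appealing to the $90^\circ$ rotational symmetry of $41352$ rather than writing out the explicit coordinate choices, but your direct construction is equivalent.
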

\begin{proof}
  If $(r_1,c_2)$ were a $\Inc$-cell, we could embed $41352$ by mapping the values $1,2$ into cell 
  $(r_1,c_2)$, values $3,5$ into $(r_2,c_2)$ and value $4$ into $(r_2,c_1)$. This proves the first 
  claim.
  
  If there were a nonempty cell $(r,c)$ with $r_1\le r\le r_2$ and $c\ge c_2$, and with 
  $(r,c)\not\in\{(r_1,c_2),(r_2,c_2)\}$, we could embed 2 into this cell, 1 into cell $(r_1,c_2)$, $3$ 
  and $5$ into cell $(r_2,c_2)$, and $4$ into $(r_2,c_1)$. This proves the second claim. The third 
  claim is analogous.
\end{proof}

Note that the pattern $41352$ is preserved under rotations by a multiple of~$90^\circ$.  Thus, the 
previous lemma remains valid when the entire gridding matrix $\cM$ is rotated in such a way; note 
that a $90^\circ$-rotation transforms a $\Dec$-cell into a $\Inc$-cell and vice versa.

\begin{proposition}\label{pro-41352}
  Let $G$ be a graph containing a vertex $v$ of degree 3, whose three neighbors all have degree 3. 
  Then no subdivision of $G$ is representable in $\Av(41352)$.
\end{proposition}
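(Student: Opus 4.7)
I would argue by contradiction. Suppose some monotone gridding matrix $\cM$ satisfies $\Grid(\cM) \subseteq \Av(41352)$ and has a cell graph $G_\cM$ isomorphic to a subdivision of $G$. Let $v \in G_\cM$ be the image of the distinguished degree-$3$ vertex of $G$, and let $x, y, z \in G_\cM$ be the images of its three degree-$3$ neighbours; these four vertices are joined in $G_\cM$ by three internally-disjoint paths $P_x, P_y, P_z$ whose internal vertices all have degree~$2$.

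The key symmetry to exploit is the fact that the pattern $41352$ is invariant under $90^\circ$ rotation of its diagram: one checks directly that the map $(i,j) \mapsto (6-j, i)$ sends $\{(1,4),(2,1),(3,3),(4,5),(5,2)\}$ to itself. Consequently $\Av(41352)$ is closed under $90^\circ$ rotation of the plane, so Lemma~\ref{lem-41352} yields four rotational variants, one for each of the four corner positions (TL, TR, BL, BR) of a $2\times 2$ L-shape: each variant prescribes the type of the corner cell ($\Inc$ or $\Dec$), forces the ``straight'' cell to have the opposite type, and imposes emptiness on two adjacent rectangular regions. Using this $90^\circ$-rotational freedom on $\cM$, I would first assume without loss of generality that the three cell-graph neighbours of $v$ are placed in the cells immediately above, below, and to the right of $v$; call them $a$, $b$ and $c$ respectively.

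The main argument is a case analysis on the type of $v$. Assume first that $v$ is $\Inc$. Applying the $180^\circ$-rotated variant of Lemma~\ref{lem-41352} to the L-shape $\{v, a, c\}$ (with $v$ at the BL corner) forces $a$ to be $\Dec$ and creates two empty rectangular regions; in particular, column $c_v$ contains no cell above $a$ and row $r_a$ is empty strictly between columns $c_v$ and $c_c$, so the next vertex of $P_x$ after $a$ is confined to row $r_a$ strictly to the right of column $c_c$. Applying the $90^\circ$-CW rotated variant next, now with $a$ at the TL corner of the subsequent L-shape, propagates the analogous constraint to the next cell of $P_x$; iterating, the whole path $P_x$ is squeezed into a staircase of alternating types marching to the upper right. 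A symmetric application starting from a suitable L-shape involving $v$ and $b$ would constrain $P_y$ to a staircase heading to the lower right, and the third path $P_z$ would be forced to leave $v$ through $c$ into a corridor whose extent is severely limited by the constraints already derived. The three corridors must then collide: at least one of the paths cannot reach its degree-$3$ endpoint without the third neighbour required there falling into a region declared empty by another corridor. The case $v = \Dec$ follows by applying an additional $90^\circ$ rotation, which exchanges $\Inc$ with $\Dec$.

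The main obstacle will be the final collision argument between the three corridors, where one must verify that the emptiness regions accumulated along $P_x$ and $P_y$ really do rule out any valid placement of the degree-$3$ endpoints $x, y, z$. This amounts to careful geometric bookkeeping rather than any new conceptual ingredient beyond the $90^\circ$ rotational symmetry of the pattern and repeated applications of Lemma~\ref{lem-41352}.
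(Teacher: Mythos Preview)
Your setup --- exploiting the $90^\circ$ rotational symmetry of $41352$ and iterating Lemma~\ref{lem-41352} along a path leaving $v$ --- is exactly right and matches the paper. The gap is in where you locate the contradiction. You propose to track all three paths $P_x, P_y, P_z$ and argue that their corridors eventually \emph{collide}, forcing one of the degree-$3$ endpoints into a region declared empty by a \emph{different} corridor. You flag this collision step yourself as ``the main obstacle'', and it is genuinely unclear that it works: the three spirals could wander off in different directions without their emptiness regions ever reaching one another's endpoints.

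The paper avoids this entirely by observing that the contradiction already lives inside a \emph{single} path. Your own staircase argument for $P_x$ does more than confine the path to a corridor: at each step, the application of Lemma~\ref{lem-41352} to three consecutive cells shows that the middle cell has no possible neighbour in $G_\cM$ other than its predecessor and its unique forced successor --- hence degree at most~$2$. Consequently every vertex of the component of $G_\cM-v$ reached from the first step of $P_x$ has degree at most~$2$, which already contradicts the fact that this component must contain the degree-$3$ endpoint~$x$. No collision argument, and no analysis of $P_y$ or $P_z$, is needed. (A minor further simplification: your case split on the type of $v$ is also unnecessary, since with three neighbours occupying three of the four cardinal directions one can embed $41352$ directly whenever $v=\,\Dec$, placing two values in $v$ and one in each neighbour; so $v$ is forced to be $\Inc$ outright.)
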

\begin{proof}
  For contradiction, suppose we have a monotone grid class $\cC=\Grid(\cM)\subseteq\Av(41352)$ whose 
  cell graph $G'$ is a subdivision of~$G$. Let the vertex $v$ correspond to a cell $(r,c)$ of $\cM$. 
  By rotational symmetry, we may assume that the three neighbors of $v$ in $\cM$ are to the left, 
  to the top and below~$v$.  Let $x_1$ be the neighbor of $v$ situated below~$v$, and let $y$ be the 
  neighbor of $v$ situated to its left.
  
  Note that $v$ must be a $\Inc$-cell, else we could embed values $2,3$ into $v$, and the remaining 
  three values into the three neighbors of~$v$.  
  
  We will show that the connected component of $G'-v$ containing the vertex $x_1$ does not contain any 
  vertex of degree greater than $2$, contradicting the structure of~$G'$. Applying 
  Lemma~\ref{lem-41352} to the three vertices $y$, $v$ and $x_1$, we conclude that $x_1$ is a 
  $\Dec$-cell, and that is has no neighbor to its bottom or to its right. If $x_1$ has degree two, 
  then its neighbor different from $v$ is a cell $x_2$ situated to its left. We may then again apply 
  Lemma~\ref{lem-41352} (rotated $90^\circ$ clockwise) to the three vertices $v$, $x_1$ and $x_2$, 
  concluding that $x_2$ is a $\Inc$-cell, and any potential neighbor of $x_2$ different from $x_1$ is 
  located above $x_2$. Denoting such a neighbor $x_3$, and applying Lemma~\ref{lem-41352} to the 
  vertices $x_1$, $x_2$ and $x_3$, we again conclude that $x_3$ is a $\Dec$-cell. 
  
  Continuing in this fashion, we may inductively show that the connected component of $G'-v$ containing 
  the vertex $x_1$ is a path $x_1, x_2, x_3,\dotsc$ arranged into a clockwise spiral and consisting of 
  an alternation of $\Inc$-cells and $\Dec$-cells. In particular, the component does not contain any 
  vertex of degree 3, contradicting the structure of~$G'$.
\end{proof}

\section{Hardness of \#PPM}\label{sec-hardness}

In this section, we provide conditional lower bounds for modified variants of \PPPM{$\cC$} given LPP or DTP.
The results of this section are proved under a slightly stronger assumptions about the classes.
Apart from the LPP or DTP property, we furthermore require an algorithm that provides a witnessing long path or deep tree.
Formally, a class $\cC$ has the \emph{computable LPP} if it has the LPP and there is an algorithm that, for a given $k$, outputs the description of a monotone grid subclass of $\cC$ whose cell graph is a path of length $k$.
Similarly, a class $\cC$ has the \emph{computable DTP} if it has the DTP and there is a constant 
$c$ and an algorithm that, for a given $k$, outputs the description of a monotone grid subclass of 
$\cC$ whose cell graph is a $c$-subdivided binary tree of depth~$k$.
Observe that all the specific examples of classes we encountered (and especially the principal 
classes in Subsection~\ref{ssec-principal}) possess the computable version of their corresponding 
properties.

We will reduce from the well-known problem \emph{partitioned subgraph isomorphism} (\PSIx) defined as follows.
We receive on input two graphs $G = (V_G, E_G)$ and $H = (V_H, E_H)$ together with a coloring $\chi\colon V_H \to V_G$ of vertices of $H$, using the vertices of $G$ as colors.
We have to decide if there is  a mapping $\phi\colon V_G \to V_H$ such that whenever $\{u,v\} \in E_G$ then also $\{\phi(u), \phi(v)\} \in E_H$ and moreover $\chi(\phi(v)) = v$ for every $v \in V_G$.
Less formally, we aim to find $G$ as a subgraph of $H$, but we prescribe in advance where each vertex 
can be mapped.
It is a well-known fact that \PSI is hard to solve.

\begin{theorem}[Marx~\cite{Marx2010}, Bringmann et al.~\cite{Bringmann2016}]
  \label{thm:psi-hardness}
  Unless ETH fails, \PSI cannot be solved
  in time $f(k) \cdot n^{o(k/ \log k)}$
  for any function $f$, where $n = |V_H|$ and $k = |E_G|$. This is true even when we require $G$ to have exactly as many vertices as edges.
\end{theorem}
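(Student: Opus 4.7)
The plan is to establish this lower bound by reducing from a classically ETH-hard problem, most naturally $3$-SAT or $k$-\textsc{Clique}. Under ETH together with the sparsification lemma, $3$-SAT on $n$ variables and $O(n)$ clauses cannot be solved in time $2^{o(n)}$; a standard chain of reductions then shows that $k$-\textsc{Clique} requires time $f(k) \cdot n^{\Omega(k)}$. The goal is to amplify this $n^{\Omega(k)}$ lower bound into the sharper $n^{\Omega(k/\log k)}$ bound for \PSIx, while also arranging $|V_G|=|E_G|$.

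First I would follow Marx's strategy~\cite{Marx2010}: starting from a sparse $3$-SAT instance $\varphi$, partition the variables into $k$ blocks of $O(n/k)$ variables each and build a partitioned graph $H$ in which each color class (one per block) contains a vertex for every partial assignment of its block. Two vertices from different color classes are made adjacent in $H$ precisely when their joint partial assignment satisfies every clause of $\varphi$ that involves both blocks. If the pattern $G$ were simply the clique $K_k$, this gives only the weaker $n^{\Omega(k)}$ lower bound, because $G$ has $\Theta(k^2)$ edges encoding all pairwise constraints. The refinement is therefore to replace $K_k$ by a sparser $G$ that nevertheless forces essentially all clauses of $\varphi$ to be checked simultaneously.

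The technical heart of the proof is the choice of $G$: Marx uses bounded-degree graphs with strong expansion, so that any purported $f(k) n^{o(k/\log k)}$ algorithm for \PSI on such $G$ could be unwound into a $2^{o(n)}$ algorithm for $3$-SAT, contradicting ETH. The $\log k$ factor enters precisely because one partitions the $n$ variables of $\varphi$ among $\Theta(k)$ edges of an expander $G$, so each edge carries $\Theta(n/k)$ variables, and the subexponential embedding complexity of $G$ translates, via the expansion property, into a logarithmic loss in the exponent.

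The main obstacle is twofold. First, one needs explicit pattern graphs $G$ of bounded degree whose embedding complexity matches the expander lower bound up to constants; building such $G$ with control over both the number of vertices and of edges is delicate. Second, the additional stipulation that $|V_G|=|E_G|$, established by Bringmann et al.~\cite{Bringmann2016}, requires padding the expander construction with small local gadgets (pendant cycles or short attached paths) that equalise the vertex and edge counts. The subtle point here is to verify that such padding introduces only $O(k)$ extra vertices and edges and does not interact with the partitioning, so the quantitative $n^{o(k/\log k)}$ lower bound is preserved.
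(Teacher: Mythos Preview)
The paper does not prove this theorem at all: it is stated with attribution to Marx~\cite{Marx2010} and Bringmann et al.~\cite{Bringmann2016} and used as a black box in the hardness reductions of Section~\ref{sec-hardness}. There is therefore no ``paper's own proof'' against which to compare your proposal.

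That said, your sketch is a reasonable high-level outline of the argument in the cited references. The core idea you describe --- grouping variables of a sparse $3$-SAT instance into blocks, building $H$ from partial assignments, and replacing the clique pattern by a bounded-degree expander so that the number of edges of $G$ is linear in the number of blocks --- is indeed the mechanism in Marx's paper, and the $\log k$ loss arises essentially for the reason you indicate. One clarification: in Marx's argument the key quantitative property of $G$ is not expansion per se but a lower bound on what he calls the \emph{embedding number} (closely tied to tree-width of blow-ups of $G$), and the existence of suitable $G$ is shown probabilistically rather than via explicit expanders. Your description of the $|V_G|=|E_G|$ strengthening as a padding step is morally right, though the actual argument in~\cite{Bringmann2016} is slightly different in detail. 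None of this matters for the present paper, which only needs the statement.
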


If we additionally fix $G$ to be the clique on $k$ vertices we obtain the problem called \textsc{Partitioned Clique}.
Formally, the input to \textsc{Partitioned Clique} consists of a graph $H = (V_H, E_H)$ together with a coloring $\chi\colon V_H \to [k]$ and we have to decide if there is a $k$-clique in $H$ that hits all $k$ available colors.
It is easy to see that \textsc{Partitioned Clique} can be solved in time $f(k) \cdot n^{O(k)}$.
However, there is also a matching conditional lower bound.

\begin{theorem}[Cygan et al.~\cite{Cygan2015}]
  \label{thm:clique-hardness}
  Unless ETH fails, \textsc{Partitioned Clique} cannot be solved
  in time $f(k) \cdot n^{o(k)}$
  for any function $f$, where $n = |V_H|$.
\end{theorem}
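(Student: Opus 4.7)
The plan is to reduce from 3-SAT and appeal to the Exponential Time Hypothesis. By ETH together with the Sparsification Lemma, we may assume that 3-SAT on $n$ variables and $m = O(n)$ clauses cannot be solved in time $2^{o(n)}$. Our goal is to build from such a formula $\varphi$ a vertex-colored graph $H$ with $k$ color classes so that a colorful $k$-clique in $H$ corresponds to a satisfying assignment of $\varphi$.

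First I would reformulate $\varphi$ as a binary constraint satisfaction problem over a constant-size domain, with only linear blow-up: the standard trick introduces, for each 3-clause, a single auxiliary variable that records which of its three literals is satisfied, after which every remaining constraint involves at most two (old or auxiliary) variables. Let $n' = O(n)$ be the number of variables in the resulting binary CSP and let $d$ be the domain size. Partition the $n'$ variables into $k$ groups $V_1, \dotsc, V_k$ of size $n'/k$, and, for each $i$, create one vertex of color $i$ for every assignment to $V_i$ that satisfies all the unary constraints inside $V_i$; this gives at most $d^{n'/k}$ vertices of color $i$. Between vertices of two distinct colors $i \neq j$, we insert an edge precisely when the two partial assignments are consistent with every binary CSP constraint whose variables lie in $V_i \cup V_j$. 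A colorful $k$-clique then selects exactly one assignment per group in a way that simultaneously satisfies all constraints, hence corresponds to a satisfying assignment of the CSP, and therefore of $\varphi$.

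The total number of vertices is $N = k \cdot d^{n'/k} = k \cdot 2^{O(n/k)}$, so an algorithm solving \textsc{Partitioned Clique} in time $f(k) \cdot N^{o(k)}$ would yield a 3-SAT algorithm running in time
\[
f(k) \cdot \bigl(k \cdot 2^{O(n/k)}\bigr)^{o(k)} = 2^{o(n)} \cdot f(k),
\]
contradicting ETH (one picks $k$ as a slowly growing function of $n$, e.g.\ $k = \log n$, to absorb the $f(k)$ factor and still beat $2^{o(n)}$). This yields the claimed $f(k)\cdot n^{o(k)}$ lower bound.

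The step I expect to be the most delicate is the reformulation of 3-SAT as a binary CSP: because edges of $H$ can only encode pairwise constraints, a 3-clause whose variables land in three different groups cannot be directly captured. The auxiliary-variable gadget resolves this, but one has to check that the blow-up is only linear in $n$ (so that the ETH exponent is preserved) and that the binary constraints produced are small enough that the compatibility test defining the edge set can be performed in polynomial time. Once this is in place, the group-based clique construction and the size accounting above are routine.
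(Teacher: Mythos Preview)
The paper does not give its own proof of this theorem; it is quoted from Cygan et al.\ as a known result, so there is no in-paper argument to compare against. Your sketch follows the standard textbook proof: sparsify, pass to a binary CSP, partition the variables into $k$ groups, and let the vertices of colour $i$ be the partial assignments to group~$i$.

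The one real slip is the handling of the $f(k)$ term. Taking $k = \log n$ does \emph{not} absorb $f(k)$ for arbitrary $f$: if $f(k) = 2^{2^k}$ then $f(\log n) = 2^n$, which swamps the bound. The standard fix is to argue per $\epsilon$. Writing $g(k)$ for the hypothetical exponent with $g(k)/k \to 0$, for each $\epsilon > 0$ pick a \emph{fixed} integer $k_\epsilon$ with $g(k_\epsilon)/k_\epsilon < \epsilon$; then $f(k_\epsilon)$ is just a constant, and the reduction yields a $2^{O(\epsilon n)}$ algorithm for 3-SAT for every $\epsilon > 0$, which already contradicts ETH. (Equivalently, one may let $k(n)$ be the largest $k$ with $f(k) \le n$, which tends to infinity for any computable~$f$.) The binary-CSP step you flag as delicate is in fact routine---your auxiliary-variable gadget works, as does the more common detour through 3-Colouring---whereas the choice of $k$ is where precision is actually required.
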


We shall also not reduce directly to the problems of interest.
Rather, we first reduce to the \emph{$\cC$-Pattern Anchored PPM} (\APPPM{$\cC$}) problem, defined as 
follows. The input consists of permutations $\pi \in \cC$ and arbitrary $\tau$ together with pairs of 
points $A$ in $\pi$ and $B$ in $\tau$ that are called \emph{anchors}. Moreover, we are promised that 
if the two points of $A$ form an increasing pair, then we may inflate each of the two points in $A$ 
into an increasing sequence of arbitrary length with the resulting permutation still belonging to 
$\cC$, and symmetrically, if $A$ is a decreasing pair, then the two points can be inflated to 
arbitrary decreasing sequences.
The goal of \APPPM{$\cC$} is to decide whether there is an embedding of $\pi$ into $\tau$ that maps 
$A$ to~$B$.

For $\cC$ with the computable LPP, we are able to reduce \textsc{Partitioned Clique} to \APPPM{$\cC$} such that the size of the pattern $\pi$ is linear in the number of vertices of the clique. 
And for $\cC$ with the computable DTP, we provide a reduction from \PSI to \APPPM{$\cC$} such that the size of $\pi$ is almost linear in the size of the graph $G$.

\subsection{Building blocks}
\label{subsec:building-blocks}

Before diving into either of the two reductions, we set up some tools used by both of them.
In particular, we define several types of tiles.
All tiles will be constructed from pairs of points called \emph{atomic pairs} that form the pattern 12.
We will guarantee that an atomic pair in the pattern must map to a single atomic pair in the text which justifies their name.

Fix an instance $(G,H, \chi)$ of \PSI and we set $n = |V_H|$, $k = |V_G|$.
Let us identify the vertices of $V_G$ with the set $[k]$ and let $V_a \subseteq V_H$ be the set of all vertices colored by $a \in [k]$.
Notice that $V_1, \dots, V_k$ form a partition of the set $V_H$.
Moreover, set $n_a = |V_a|$ and let us choose an arbitrary order of vertices in $V_a$ denoting them $v^a_i$ for $i \in [n_a]$.
To every vertex $v^a_i$, we associate two values -- the \emph{rank of $v^a_i$} denoted by $\alpha^a_i$ and the \emph{reverse rank of $v^a_i$} denoted by $\beta^a_i$ where
\[\alpha^a_i = \sum_{b < a} n_b + i-1 \quad \text{ and } \quad \beta^a_i = \sum_{b < a} n_b + n_a -i.\]
Observe that the rank corresponds to the lexicographic order of $v^a_i$ by $(a,i)$ and the reverse rank corresponds to the lexicographic order by $(a, n_a - i)$.

Set $m = 3n$.
Let $E_a$ be the set of all pairs $(v^a_i,v^b_j)$ such that either the edge $\{a, i\} \not \in E_G$ or $\{v^a_i,v^b_j\} \in E_H$.
We shall build our reductions from only a handful different types of tiles that we now proceed to describe.

An \emph{anchor tile}, denoted by \anchortile, contains a single atomic pair
\begin{equation*}
  \label{eq:anchor-tile}
  T = \left\{(0,0),\; ((n+1) \cdot m, (n+1) \cdot m)\right\}.
\end{equation*}


\subparagraph{Pattern tile.}
For every $W \subseteq [k]$, a \emph{pattern tile of $W$}, denoted by \patterntile{$W$}, contains for each $a \in W$ an atomic pair
\begin{equation*}
  \label{eq:pattern-tile}
  X_a = \{(2a+1, 2a+1),\; (2a+2, 2a+2)\}
\end{equation*}
Observe that any pattern tile is simply an increasing sequence of length $2|W|$.



\begin{figure}
  \centering
  \hspace{-0.6in}
  \raisebox{-0.5\height}{\includegraphics[width=0.5\textwidth,page=1]{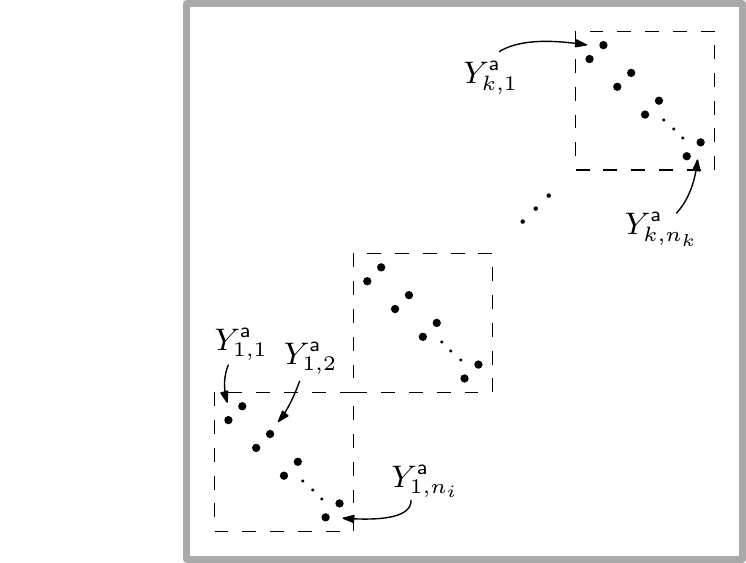}}
  \hspace{0.2in}
  \raisebox{-0.5\height}{\includegraphics[width=0.5\textwidth, page=2]{tiles}}
  \caption{An assignment tile $\assigntile$ on the left and a branch tile $\branchtile{a}{W}$ on the right. Note that the branch tile is zoomed to the vertical strip between $\alpha^a_i \cdot m$ and $(\alpha^a_{n_a}+1) \cdot m$ as the rest of the matrix is empty.}
  \label{fig:tiles}
\end{figure}

\subparagraph{Assignment tile.}
An \emph{assignment tile}, denoted by \assigntile, contains for each $a \in [k]$ and $i \in [n_a]$ an atomic pair $Y^\msfa_{a,i}$ consisting of points
\begin{gather*}
  \label{eq:pair-Y2}
  Y^\msfa_{a,i} = \left\{
  \left(\alpha^a_i \cdot m + 3,\; \beta^a_i \cdot m + 3\right),\;
  \left(\left(\alpha^a_i + 1\right) \cdot m + 2 ,\; \left(\beta^a_i + 1\right) \cdot m + 2\right)\right\}.
\end{gather*}

Observe that each pair $Y^\msfa_{a,i}$ forms an occurrence of 12 and the atomic pairs corresponding to the vertices of $V_a$ form a skew sum of $n_a$ copies of 12.
On the other hand for any $a < b$, the atomic pairs of $V_a$ lie all in a block to the left and below of all the atomic pairs of $V_b$. See the left part of Figure~\ref{fig:tiles}.
Observe that the assignment tile contains exactly $2 \cdot |V_H|$ points.

\subparagraph{Identity tile.}
For a subset $W \subseteq V_G = [k]$, an identity tile of $W$, denoted by \idtile{$W$}, contains for every $a \in W$ and $j \in [n_a]$ an atomic pair
\begin{gather*}
  \label{eq:identity-tile}
  Y^\msfid_{a,j} = \left\{\left(\alpha^a_j \cdot m + 3,\; \alpha^a_j \cdot m + 3\right), \;
  \left(\left(\alpha^a_j + 1\right) \cdot m + 2, \; \left(\alpha^a_j + 1\right) \cdot m + 2\right)\right\}.
\end{gather*}
Observe that \idtile{$W$} forms an increasing sequence of length $2 \cdot \sum_{a \in W} |V_a|$.


\subparagraph{Branch tile.}
Let $a \in [k]$ and $W \subseteq [k]$ such that $a < b$ for every $b \in W$.
A branch tile of $a$ and $W$, denoted by \branchtile{$a$}{$W$}, contains for every $i \in [n_a]$ an atomic pair
\begin{equation*}
  \label{eq:branch-Y}
  Y^\msfb_{a,i} = \left\{\left(\beta^a_i \cdot m + 3,\; \alpha^a_i \cdot m + 3\right), \quad
  \left(\beta^a_i \cdot m + 4,\; \alpha^a_i \cdot m + 4\right)\right\}.
\end{equation*}
Moreover it contains for every $b \in W$ and $(v^a_i,v^b_j) \in E_a$ an atomic pair
\begin{equation*}
  \label{eq:branch-Z}
  Z^\msfb_{a,i,b,j} = \left\{
  \begin{gathered}
    \left(\beta^a_i \cdot m + 2 \alpha^b_{j} + 5 ,\; \alpha^b_{j} \cdot m + 2 \alpha^a_i + 3\right),\\ \quad
    \left(\beta^a_i \cdot m + 2 \alpha^b_{j} + 6,\; \alpha^b_{j} \cdot m + 2 \alpha^a_i + 4\right)
  \end{gathered}\right\}.
\end{equation*}


Observe that $Z^\msfb_{a,i,b,j}$ lies in the horizontal strip between $y = \alpha^b_{j} \cdot m + 3$ and $y = (\alpha^b_{j} + 1)\cdot m + 2$. If we look at all the atomic pairs in this strip, they are all of the form $Z^\msfb_{a,i',b,j}$ where  $(v^a_{i'},v^b_{j}) \in E_a$ and they form a skew sum of several copies of 12.
Vertically, $Z^\msfb_{a,i,b,j}$ lies in the strip between $x = \beta^a_{i} \cdot m + 5$ and $x = (\beta^a_i + 1)\cdot m + 2$. And if we look at all the atomic pairs in this strip, they are all of the form $Z^\msfb_{a,i,b',j'}$ where $(v^a_i,v^{b'}_{j'}) \in E_a$ for some $b' > a$. Moreover, all the atomic pairs in this strip form an increasing sequence.
See the right part of Figure~\ref{fig:tiles}.

\begin{figure}
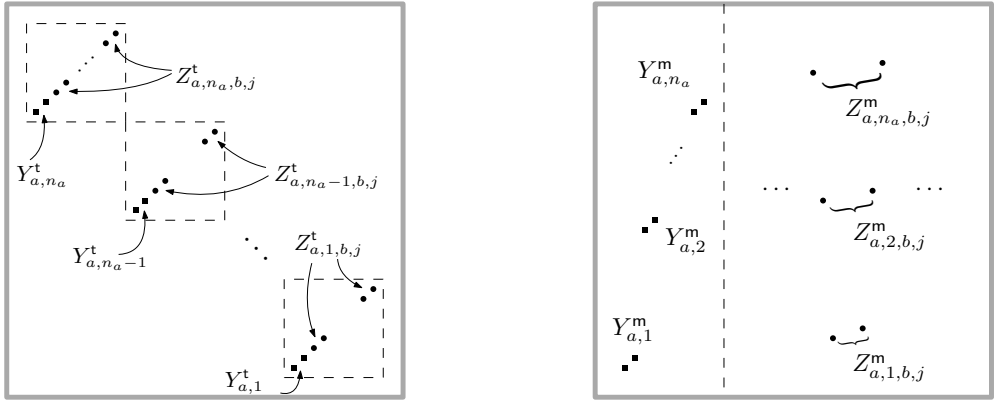

  \centering
  \hspace{-0.6in}
  \raisebox{-0.5\height}{\includegraphics[width=0.5\textwidth,page=3]{tiles}}
  \hspace{0.2in}
  \raisebox{-0.5\height}{\includegraphics[width=0.5\textwidth, page=4]{tiles}}
  \caption{A test tile $\testtile{a}{W}$ on the left and a branch tile $\mergetile{a}{W}$ on the right. Note that the test tile is zoomed to the box between $\alpha^a_i \cdot m$ and $(\alpha^a_{n_a}+1) \cdot m$ in both axes as the rest of the matrix is empty.
    In the merge tile we focus just on the vertical strip between between $\alpha^a_i \cdot m$ and $(\alpha^a_{n_a}+1) \cdot m$ and on the strip between $\alpha^b_j \cdot m - n$ and $(\alpha^b_{j} + 1) \cdot m + n$.}
  \label{fig:tiles2}
\end{figure}

\subparagraph{Test tile.}
A test tile of $a \in [k]$ and  $W \subseteq [k]$ such that $a < b$ for every $b \in W$, denoted by \testtile{$a$}{$W$}, contains for every $i \in [n_a]$ an atomic pair
\begin{equation*}
  \label{eq:test-Y}
  Y^\msft_{a,i} = \left\{
  \left(\beta^a_i \cdot m + 3,\; \alpha^a_i \cdot m + 3\right), \quad 
  \left(\beta^a_i \cdot m + 4,\; \alpha^a_i \cdot m + 4\right)\right\}.
\end{equation*}

Furthermore for every $b \in W$ and $(v^a_i,v^b_j) \in E_a$, it contains an atomic pair 
\begin{equation*}
  \label{eq:test-Z}
  Z^\msft_{a,i,b,j} = \left\{
  \begin{gathered}
    \left(\beta^a_i \cdot m + 2 \alpha^b_{j} + 5 ,\; \alpha^a_i \cdot m + 2 \alpha^b_{j} + 5 \right),\\
    \left(\beta^a_i \cdot m + 2 \alpha^b_{j} + 6,\; \alpha^a_i \cdot m + 2 \alpha^b_{j} + 6\right)
  \end{gathered}\right\}.
\end{equation*}

The test tile can be obtained as a skew sum of increasing blocks.
For every $i \in [n_a$], there is a block $B_i$ that contains the pair $Y^\msft_{a,i}$ in its bottom left corner followed by pairs $Z^\msft_{a,i,b,j}$ ordered lexicographically  by $(b,j)$.
See the right part of Figure~\ref{fig:tiles2}.



\subparagraph{Merge tile.}
Finally, we define a merge tile of $a$ and $W \subseteq [k]$ such that $a < b$ for every $b \in W$, denoted by \mergetile{$a$}{$W$}.
For every $i \in [n_a]$, it contains an atomic pair
\begin{equation*}
  Y^\msfm_{a,i} = \left\{
  \left(\alpha^a_i \cdot m + 3,\; \alpha^a_i \cdot m + 3\right), \; 
  \left((\alpha^a_i + 1)\cdot m + 2,\; \alpha^a_i \cdot m + 4\right)\right\}.
\end{equation*}


And for every $b \in W$ and $(v^a_i,v^b_{j}) \in E_a$, it contains an atomic pair
\begin{equation*}
  Z^\msfm_{a,i,b,j} = \left\{
  \begin{gathered}
    \left(\alpha^b_{j} \cdot m - \alpha^a_i + 3,\; \alpha^a_i \cdot m + 2 \alpha^b_{j} + 5 \right),\\
    \left((\alpha^b_{j} + 1) \cdot m  + \alpha^a_i+ 2 ,\; \alpha^a_i \cdot m + 2 \alpha^b_{j} + 6\right)
  \end{gathered}\right\}.
\end{equation*}
The merge tile is split into $n$ vertical strips, the first $n_a$ of them contains the atomic pairs $Y^\msfm_{a,i}$ and every following strip contains the pairs $Z^\msfm_{a,i,b,j}$ for fixed $b$, $j$ and all possible pairs of $a$ and $i$.
See the right part of Figure~\ref{fig:tiles2}.

Every tile other than \anchortile{} contains additionally two other atomic pairs called \emph{guards} defined as
\begin{equation*}
  G_1 = \left\{ (1,1), (2,2)\right\},\; \text{ and }\; G_2 = \left\{ (nm + 3, nm + 3), (nm+4 , nm+4)\right\}.
\end{equation*}
Observe that in any type of tile, $G_1$ lies to the left and below everything else while $G_2$ lies to the right and above everything else.

\subparagraph{Modified $\cF$-assembly.}
When creating a permutation from tiles consisting of atomic pairs, we slightly change each atomic pair as to force a specific relative order of two atomic pairs in neighboring tiles that share the same coordinates.
Suppose we are given a monotone gridding matrix $\cM$ whose cell graph is a tree.
Assume that the tree is rooted and oriented all the edges consistently outwards from the root.
For a vertex $v$ of the tree, its parent is the only in-neighbor of $v$.
Suppose we are given a consistent orientation $\cF$ of $G_\cM$ and a family of tiles $\cQ$ such that 
each tile is one of the types defined in this section.
We define a \emph{modified $\cF$-assembly} of $\cQ$ as follows.

Let $v$ be a vertex in $G_\cM$ with parent $w$.
Let $X = \{p,q\}$ be an atomic pair in $Q_v$ such that $p$ lies to the left and below of $q$.
If $w$ and $v$ share the same row, then we move $p$ up by a tiny distance (increasing its $y$-coordinate) and we move $q$ down by a tiny distance (decreasing its $y$-coordinate) without changing relative position of any points in tiles $T_v$ and $T_w$.
On the other hand, if $w$ and $v$ share the same column, we do the same modification with the 
$x$-coordinate, i.e., we increase the $x$-coordinate of $p$ and decrease the $x$-coordinate of~$q$.
Then we perform the usual $\cF$-assembly on this modified family of tiles.

We say that a pair $(a,b)$ of numbers \emph{sandwiches} the pair $(c,d)$ if $a < c < d < b$.
Notice that by this modification,
if there was an atomic pair $X$ in $Q_v$ and $Y$ in $Q_w$ then after the modification the 
$x$-coordinates of $Y$ sandwich the $x$-coordinates of $X$ if $v$ and $w$ share the same column.
Otherwise the same holds with respect to the $y$-coordinates of $X$ and $Y$.
This property can then be used to force a specific mapping of the points in $Q_v$ depending on the mapping of the points in $Q_w$.

\subsection{Reductions}
\label{ssec:reductions}


\begin{lemma}
  \label{lem:long-path-reduction}
  Let $\cC$ be a class with the computable LPP. An instance $(G,\chi)$ of \textsc{Partitioned Clique} can be reduced to an instance $(\pi, \tau, A, B)$ of \APPPM{$\cC$} where $|\pi| \in O(k^2)$ and $|\tau| \in O(|V_H|^2)$ in time $f(k) \cdot |V_H|^{O(1)}$ for some function $f$. Moreover, $\tw(\pi) \in O(k)$.
\end{lemma}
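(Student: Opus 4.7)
The plan is to realise both $\pi$ and $\tau$ as modified $\cF$-assemblies on the same monotone gridding matrix $\cM$ obtained from the computable LPP, and then design the pattern tiles so that the forced sandwich behaviour of modified $\cF$-assembly, combined with guards and anchors, rules out all embeddings except those that encode a partitioned clique.

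First, invoke the computable LPP on $\cC$ to obtain a monotone gridding matrix $\cM$ whose cell graph $G_\cM$ is a path of length $L = \Theta(k^2)$ with $\Grid(\cM) \subseteq \cC$. Since $G_\cM$ is acyclic, Lemma~\ref{lem-acyclictrans} supplies a consistent orientation $\cF$. Number the cells $c_0, c_1, \dots, c_L$ along the path and partition them into $\binom{k}{2}$ consecutive "edge-checking segments", one for each unordered pair $\{a,b\} \subseteq [k]$ with $a<b$. Within a segment devoted to the pair $(a,b)$ we reserve a constant number of cells for a branch/test/merge interaction that will carry out the single edge check $\{a,b\}$; a handful of extra cells at the two ends of the path will host the anchor tiles and an initial assignment cell.

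Second, build the text tile family $\cQ^\tau$ as follows. Place an \anchortile{} at $c_0$ and at $c_L$. Place an \assigntile{} in the cell adjacent to the left anchor so that the text begins by "offering" all $2|V_H|$ possible vertex choices. Within each edge-checking segment for $(a,b)$, place a $\branchtile{a}{\{b\}}$, then a $\testtile{a}{\{b\}}$, then a $\mergetile{a}{\{b\}}$, with $\idtile{W}$ tiles in intervening cells (for appropriate $W \subseteq [k]$) that merely propagate the currently carried vertex information. Build the pattern tile family $\cQ^\pi$ to mirror this, but with each assignment/branch/test/merge/identity cell in the text replaced by a $\patterntile{W}$ for the colours that must currently be "live", together with \anchortile{}s at $c_0$ and $c_L$; these pattern tiles are much smaller, containing $O(k)$ atomic pairs each. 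Then set $\pi$ to be the modified $\cF$-assembly of $\cQ^\pi$ and $\tau$ to be the modified $\cF$-assembly of $\cQ^\tau$, and let $A,B$ be the atomic pairs inside the two anchor tiles of $\pi$ and $\tau$.

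Third, prove correctness in both directions. The forward direction is routine: a partitioned clique $v^1_{i_1}, \dots, v^k_{i_k}$ of $H$ yields an explicit embedding that, inside each text cell, picks the atomic pair indexed by the rank/reverse-rank coordinates $\alpha^a_{i_a}$ or $\beta^a_{i_a}$, and the coordinate system guarantees that the target set of points forms the required monotone pattern. The reverse direction is the heart of the argument: because of the guards $G_1, G_2$ and the sandwich property produced by the modified $\cF$-assembly, any valid embedding of an atomic pair in a pattern cell is forced to map to a single atomic pair in the corresponding text cell, whose rank/reverse-rank index must moreover agree with the index chosen in the adjacent pattern cells along the path. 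Walking along the path from the left anchor, one inductively shows that an assignment $i_1, \dots, i_k$ is extracted; the presence of the pair $Z^\msft_{a,i_a,b,i_b}$ needed inside each test tile for the pair $(a,b)$ then forces $(v^a_{i_a}, v^b_{i_b}) \in E_a$, hence $\{v^a_{i_a}, v^b_{i_b}\} \in E_H$, so the extracted assignment is indeed a partitioned clique.

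Finally, the size bounds follow by direct counting: $\pi$ has $O(k^2)$ cells with $O(k)$ atomic pairs per cell, giving $|\pi| \in O(k^2)$; $\tau$ has $O(k^2)$ cells each of size at most $O(|V_H|^2/k^2)$ amortised (branches and merges carry the $Z$-pairs for a single pair of colours), giving $|\tau| \in O(|V_H|^2)$; the construction is polynomial in $|V_H|$ for fixed $k$. For the tree-width bound on $\pi$, observe that $\pi \in \Grid(\cM)$ with $G_\cM$ a path, so by Proposition~\ref{pro-geom-grid} and Lemma~\ref{lem:acyclic-order} the incidence graph $G_\pi$ admits a partition of its edges into $O(k^2)$ non-overlapping classes; since $|\pi| = O(k^2)$, the bounds feeding Lemma~\ref{lem:drawing} yield $\tw(\pi) \in O(\sqrt{|\pi|}) = O(k)$. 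The main obstacle is the reverse direction of correctness: one must check that the extremely dense text point sets inside branch/test/merge tiles do not allow an embedding of $\pi$ that "cheats" by mapping different atomic pairs of the pattern into inconsistent columns/rows of the text — the tight coordinate choices involving $\alpha^a_i, \beta^a_i$ and $m = 3n$, together with the guards and the sandwich property, are exactly what preclude this.
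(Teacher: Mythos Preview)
Your overall architecture is right---build $\pi$ and $\tau$ as modified $\cF$-assemblies over a path-shaped $\cM$, use guards and the sandwich property to force grid-preserving embeddings, and read off a partitioned clique from the images of the $X_a$'s in the \assigntile{}---but the size accounting does not work, and the fix is not cosmetic.

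You take a path of length $L=\Theta(k^2)$ and devote a segment of constant length to \emph{each} pair $\{a,b\}$, testing one edge at a time. For correctness in the reverse direction you must propagate the choices $i_1,\dots,i_k$ consistently along the whole path (otherwise the value of $i_a$ used in segment $(a,b)$ need not agree with the value used in segment $(a,c)$), so every pattern tile must carry $\Theta(k)$ colours. That gives $\Theta(k^2)$ tiles times $\Theta(k)$ atomic pairs, i.e.\ $|\pi|=\Theta(k^3)$; your sentence ``$O(k^2)$ cells with $O(k)$ atomic pairs per cell, giving $|\pi|\in O(k^2)$'' is an arithmetic slip. The same blow-up hits $|\tau|$: the $\Theta(k^2)$ identity tiles each carry $\Theta(n)$ points, contributing $\Theta(k^2 n)$, which is not $O(n^2)$ without an extra assumption on $k$ versus $n$. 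A $\Theta(k^3)$ pattern still yields a reduction, but it weakens the eventual lower bound from $n^{o(\sqrt{k})}$ to $n^{o(k^{1/3})}$.

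The paper avoids this by a batching trick you have not used: it takes a path of length only $4k-2$ and, for each $a\in[k-1]$, tests \emph{all} edges from $a$ to $\{a+1,\dots,k\}$ in a single group of four cells, using $\branchtile{a}{I_{a+1}}$, $\testtile{a}{I_{a+1}}$, $\mergetile{a}{I_{a+1}}$ with $I_{a+1}=\{a+1,\dots,k\}$. The tile definitions already allow $W$ to be an arbitrary set, so one branch/test/merge triple handles $k-a$ edges simultaneously; this is precisely what keeps the path length linear in $k$ and hence $|\pi|=O(k^2)$ and $|\tau|=O(n^2)$. Two smaller points: the \APPPM{$\cC$} problem has a \emph{single} anchor pair $A$ (and $B$), so you want one \anchortile{}, not one at each end; and for $\tw(\pi)\in O(k)$ there is no need to go through Lemma~\ref{lem:drawing}---since $G_\cM$ is a path, the path decomposition whose $i$-th bag is (points of cell $v_i$)$\cup$(points of cell $v_{i+1}$), plus the constant number of exceptional edges, already has width $O(k)$.
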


%

\begin{proof}
  Due to the computable LPP property, we can obtain a monotone gridding matrix $\cM$ such that $\Grid(\cM) \subseteq \cC$, the cell graph $G_\cM$ is a proper turning path $v_1, \dots, v_{4k-2}$ of length $4k-2$ which we consider as an oriented path starting in $v_1$, and $v_1$ and $v_2$ share a common row.
  By Lemma~\ref{lem-acyclictrans}, there is a consistent orientation $\cF$ of $\cM$.
  We define two $k \times \ell$ families of tiles $\cP$ and $\cT$ which are then used for constructing $\pi$ and $\tau$ via a modified $\cF$-assembly.
  
  Given the input instance $(H, \chi)$ of \textsc{Partitioned Clique}, we define the tiles using the equivalent \PSI instance $(G, H, \chi)$ where $G$ is the clique on the vertex set $[k]$.
  We set $P_{v_1}$ and $T_{v_1}$ to be the anchor tiles. The tile $P_{v_2}$ is taken to be \patterntile{$[k]$} and $T_{v_2}$ is taken to be \assigntile{}.
  The rest of the tiles are defined in $k$ consecutive groups.
  Let $I_a = \{a, \dots, k\}$.
  For $a \in [k-1]$, we set $P_{v_i} = \patterntile{I_a}$ for every $4a-1 \le i \le 4a + 2$.
  On the text side, we set $T_{v_{4a - 1}} = \idtile{I_a}$, $T_{v_{4a}} = \branchtile{a}{I_{a+1}}$, $T_{v_{4a+1}} = \testtile{a}{I_{a+1}}$ and $T_{v_{4a+2}} = \mergetile{a}{I_{a+1}}$.
  
  Then we take $\pi$ as the modified $\cF$-assembly of $\cP$, $\tau$ as the modified $\cF$-assembly of $\cT$ and we set the anchors $A$, $B$ to be the images of $P_{v_1}$ and $T_{v_1}$.
  Clearly, $\pi \in \cC$ and thus, we constructed a valid instance $(\pi, \tau, A, B)$ of \APPPM{$\cC$} in polynomial time.
  Now we check that we achieved the desired upper bounds on size.
  The total size of the pattern tiles is clearly $O(k^2)$.
  The tiles $T_{v_{4a-1}}, \dots, T_{v_{4a+2}}$ each contain at most $2 n_a \cdot n$ points and therefore, the size of the text tiles sums to $O(n^2)$.
  
  
  \subparagraph{Correctness (``only if'').}
  Suppose that $(H, \chi)$ is a positive instance of \textsc{Partitioned Clique} and thus, 
$(G,H,\chi)$ is a positive instance of \PSIx.
  There is a witnessing mapping $\phi\colon [k] \to \mathbb{N}$ such that $\phi(a) \in [n_a]$ for 
every $a$ and $\{v^a_{\phi(a)}, v^b_{\phi(b)}\} \in E_H$ for every different $a, b \in [k]$.
  
  Let us define an embedding $\psi$ of $\pi$ into $\tau$ that maps $A$ to $B$.
  The embedding shall be \emph{grid-preserving}, meaning that the image of $P_{v_i}$ is mapped to the image of $T_{v_i}$.
  That way, we automatically guarantee that $A$ maps to $B$.
  First we map the guards in $P_{v_i}$ to the guards of $T_{v_i}$ for every $i \ge 2$.
  Then we define the mapping of points in $P_{v_2}$ to mimic the mapping $\phi$. For each $a \in [k]$, we set the image of the atomic pair $X_a$ in $P_{v_2}$ to be the atomic pair $Y^\msfa_{a,\phi(a)}$ in  $T_{v_2}$.
  If $P_{v_i} = \patterntile{W}$ and $T_{v_i} = \idtile{W}$, we map the atomic pair $X_a$ in $P_{v_i}$ to the atomic pair $Y^\msfid_{a,\phi(a)}$ in $T_{v_i}$.
  If $P_{v_i} = \patterntile{W}$ and $T_{v_i}$ is one of $\branchtile{a}{W}$, $\testtile{a}{W}$ or $\mergetile{a}{W}$ then we map $X_a$ in $P_{v_i}$ to the atomic pair $Y^\delta_{a,\phi(a)}$ in $T_{v_i}$ and for $b \in W$, we map $X_{b}$ to $Z^\delta_{a,\phi(a), b, \phi(b)}$ for the appropriate choice of $\delta \in \{\msfb, \msft, \msfm\}$.
  
  First, we need to make sure that the atomic pair $Z^\delta_{a,\phi(a),b, \phi(b)}$ for the appropriate choice of $\delta \in \{\msfb, \msft, \msfm\}$ is well-defined in all of $\branchtile{a}{W}$, $\testtile{a}{W}$ and $\mergetile{a}{W}$.
  Equivalently, we need to verify that $(v^a_{\phi(a)}, v^b_{\phi(b)}) \in E_a$ which is guaranteed by $\phi$.
  
  In order to check the validity of the embedding, it is sufficient to verify that the image of $P_{v_i}$ is an increasing sequence in $T_{v_i}$ and that furthermore, the images of $P_{v_i}$ and $P_{v_{i+1}}$ have the right relative order according to their $x$-coordinates if $P_{v_i}$ and $P_{v_{i+1}}$ share a common column ($i$ is even), an according to their $y$-coordinates if they share a common row ($i$ is odd).
  
  The increasing property is checked straightforwardly.
  Observe that for every $\delta \in \{\msfid, \msfa, \msfb, \msft, \msfm\}$ and $a, b \in [k]$ such that $a < b$, whenever the atomic pairs $Y^\delta_{a, i}$ and $Y^\delta_{b, j}$ in a single tile $T_{v_i}$ are defined, then $Y^\delta_{a, i}$ lies to the left and below of $Y^\delta_{b, j}$ regardless of the choices of $i \in [n_a]$ and $j \in [n_b]$.
  Moreover for a fixed $a$ and $b < b'$, the atomic pair $Z^\delta_{a,\phi(a),b,i}$ lies to the left and below $Z^\delta_{a,\phi(a),b',j}$ again regardless of the choices of $i$ and $j$.
  Finally, every defined atomic pair $Z^\delta_{a,\phi(a),b,j}$ lies to the right and above the pair $Y^\delta_{a,\phi(a)}$.
  
  In regards to the relative positions of points in different tiles, it is sufficient to verify that the image of $X_a$ in the tile $P_{v_i}$ under $\psi$ sandwiches the image of $X_a$ in the tile $P_{v_{i+1}}$.
  It is a mechanical task to check that from the definition of $\psi$ and the definitions of the respective tile types in subsection~\ref{subsec:building-blocks}.

  \subparagraph{Correctness (``if'').}
  Suppose there is an embedding $\psi$ of $\pi$ into $\tau$ that maps anchors $A$ to~$B$.
  The horizontal strip in $\tau$ between the anchors $B$ contains only the image of $T_{v_2}$ and the horizontal strip between the anchors $A$ in $\pi$ contains only the image of $P_{v_2}$.
  This implies that the guards in $P_{v_2}$ must map to the guards in $T_{v_2}$. 
  And inductively, the guards in $P_{v_i}$ sandwich the guards in $P_{v_{i+1}}$ and thus the guards in $P_{v_{i+1}}$ map to the only atomic pairs sandwiched in $T_{v_{i+1}}$ by the guards in $T_{v_i}$ which are the guards in $T_{v_{i+1}}$.
  These guards then force that the whole embedding is necessarily grid-preserving.
  
  Moreover, the atomic pair $X_a$ in $P_{v_2}$ for every $a \in [k]$ is mapped by $\psi$ to the atomic pair $Y^\msfa_{a,i}$ for some $i$.
  Let $\phi\colon [k] \to \mathbb{N}$ be the mapping such that $X_a$ is mapped precisely to $Y^\msfa_{a,\phi(a)}$ for every $a \in [k]$.
  Clearly, $\phi$ can be used to define a map that satisfies the coloring property -- $\chi(v^a_{\phi(a)}) = a$ for every $a \in [k]$.
  It remains to show that $\{v^a_{\phi(a)}, v^b_{\phi(b)}\} \in E_H$ for every pair of distinct $a, b \in [k]$.
  
  \begin{claim}
    Let $w = v_i$ for $i \ge 2$ and let $a \in [k]$. The atomic pair $X_a$ in $P_{w}$ (if defined) is mapped by $\psi$ to the atomic pair $Y^\delta_{a, \phi(a)}$ for appropriate $\delta$ if $T_w$ is one of $\idtile{U}$, $\branchtile{a}{W}$, $\testtile{a}{W}$ or $\mergetile{a}{W}$ where $a \in U$ and $a\not\in W$. For $b < a$, $X_a$ in $P_{w}$ is mapped to the pair $Z^\delta_{b,\phi(b), a,\phi(a)}$ if $T_w$ is one of $\branchtile{b}{U}$, $\testtile{b}{U}$ or $\mergetile{b}{U}$.
  \end{claim}
  \begin{proof}
    For $w = v_2$ the claim holds by the definition of $\phi$.
    We prove it inductively by the lexicographic order on $(a,i)$.
    
    Suppose that $T_{v_i} = \idtile{W}$ and recall that in fact $W = I_b$ for some $b \in [k]$.
    In such case the tile $T_{v_{i-1}}$ is either the tile $\assigntile$ or $\mergetile{b-1}{I_b}$.
    In the first case, we already know that $X_a$ in $P_{v_{i-1}}$ is mapped to $Y^\msfa_{a, \phi(a)}$ in $T_{v_{i-1}}$.
    Given the structure of these tiles, the atomic pair $X_a$ in $P_{v_i}$ is forced to map to the atomic pair $Y^\msfid_{a, \phi(a)}$ in $T_{v_i}$ as it is the only pair contained in the horizontal or vertical strip bounded by $Y^\msfa_{a, \phi(a)}$.
    In the second case, the same holds since $Y^\msfid_{a, \phi(a)}$ is the only pair in $T_{v_i}$ sandwiched by $Z^\msfm_{b, \phi(b),a,\phi(a)}$ from $T_{v_{i-1}}$.
    
    Suppose that $T_{v_i}$ is one of $\branchtile{a}{W}$, $\testtile{a}{W}$ or $\mergetile{a}{W}$ where $a \not \in W$.
    In these cases, we know that $X_a$ in $P_{v_{i-1}}$ maps to $Y^\delta_{a, \phi(a)}$ in $T_{v_{i-1}}$ for appropriate $\delta$.
    And again, this leaves only a single option where $X_a$ from $P_{v_{i}}$ can be mapped to -- the pair $Y^{\eta}_{a, \phi(a)}$ in $T_{v_i}$ for appropriate $\eta$.
    
    Suppose that $T_{v_i} = \branchtile{b}{W}$ for $b < a$ and notice that then $T_{v_{i+1}} = \testtile{b}{W}$.
    Using induction on $X_a$ in $P_{v_{i-1}}$ and the structure of $\branchtile{b}{W}$, we see that $X_a$ in $P_{v_i}$ can be mapped to $Z^\msfb_{b,j,a,\phi(a)}$ for any $j \in [n_b]$.
    However, this also forces the mapping of $X_a$ in $P_{v_{i+1}}$ to $Z^\msft_{b,j,a,\phi(a)}$ in $T_{v_{i+1}}$.
    Since $b < a$, we already know that $X_b$ in $P_{v_{i+1}}$ is mapped to $Y^\msft_{b, \phi(b)}$ in $T_{v_{i+1}}$ by the induction.
    And for $j \neq \phi(b)$, the pairs $Y^\msft_{b, \phi(b)}$ and  $Z^\msft_{b,j,a,\phi(a)}$ form an occurrence of $3412$ in $T_{v_{i+1}}$ which cannot happen as $P_{v_{i+1}}$ itself is an increasing sequence.
    Therefore, $X_a$ is mapped to $Z^\delta_{b,\phi(b), a,\phi(a)}$ in both $T_{v_i}$ and $T_{v_{i+1}}$ for appropriate $\delta$.
    In fact, this forces also the pair $X^a$ in $P_{v_{i+2}}$ to match to the pair $Z^\msfm_{b,\phi(b), a,\phi(a)}$ in $P_{v_{i+2}} = \mergetile{a}{W}$ which concludes the proof of the claim.
  \end{proof}

  Now let $a ,b \in [k]$ such that $a < b$.
  As we showed above, the atomic pair $X_b$  is mapped to the atomic pair $Z^\msft_{a,\phi(a), b, \phi(b)}$ in the tile $\testtile{a}{I_a}$.
  That means that $Z^\msft_{a,\phi(a), b, \phi(b)}$ is well-defined and thus, we have $\{v^a_{\phi(a)}, v^b_{\phi(b)}\} \in E_a$.
  Recall that we defined $G$ to be the clique on vertex set $[k]$.
  Therefore $\{a,b\} \in E_G$, and this necessarily implies that $\{v^a_{\phi(a)}, v^b_{\phi(b)}\} \in E_H$.
  Therefore, we see that $(G,H,\chi)$ is a positive instance of \PSI as witnessed by the map $\rho\colon V_G \to V_H$ obtained by setting $\rho(a) = v^a_{\phi(a)}$ and thus also $(H, \chi)$ is a positive instance of \textsc{Partitioned Clique}.
  
  \subparagraph{Tree-width of $\pi$.}Finally, let us show that $\tw(\pi) \in O(k)$.
  Let us say that an edge of $G_\pi$ is \emph{exceptional} if its endpoints share neither the same 
row nor the same column of the gridding.
  Only the lowest and highest point of each tile can participate in an exceptional edge.
  Therefore, there are at most $4k-2$ exceptional edges.
  Let $G'$ be the graph obtained from $G_\pi$ by removing all the exceptional edges.
  It is sufficient to show that $\tw(G')\in O(k)$ as adding the $4k-2$ edges back increases the 
tree-width at most by $4k-2$.
  
  We define a tree decomposition $(T, \gamma)$ such that $T$ is a path on $4k-1$ vertices $p_1, \dots, p_{4k-1}$ and $\gamma(p_i)$ is the image of  tiles $P_{v_i}$ and $P_{v_{i+1}}$.
  Clearly, every point of $\pi$ lies in a set of bags that induces a connected subtree.
  Moreover, every edge runs either inside a single cell or between points in two neighboring cells on the path since any other pair of points would occupy different row and different column.
  Therefore, we defined a valid tree decomposition of width $O(k)$ and $\tw(G') \in O(k)$. This 
completes the proof of Lemma~\ref{lem:long-path-reduction}.
\end{proof}

\begin{lemma}
  \label{lem:deep-tree-reduction}
  Let $\cC$ be a class with the computable DTP. An instance $(G,H, \chi)$ of \PSI can be reduced to an instance $(\pi, \tau, A, B)$ of \APPPM{$\cC$} where $|\pi| \in O(|E_G|\cdot \log|E_G|)$ and $|\tau| \in O(|E_H| + |V_H| \cdot |E_G|)$ in time $f(|E_G|) \cdot |V_H|^{O(1)}$ for some function $f$.
\end{lemma}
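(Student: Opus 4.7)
My plan is to adapt the construction of Lemma~\ref{lem:long-path-reduction}, replacing its linear chain of tiles with a tree-shaped family whose shape follows the cell graph produced by the computable DTP. Using DTP I first obtain a monotone gridding matrix $\cM$ with $\Grid(\cM)\subseteq\cC$ whose cell graph $T^\cM$ is a $c$-subdivided rooted binary tree of depth $O(\log|E_G|)$, hence with at least $|E_G|$ leaves and $O(|E_G|)$ vertices in total, together with a consistent orientation $\cF$ from Lemma~\ref{lem-acyclictrans}. I bijectively assign each edge $e\in E_G$ to a distinct leaf $v_e$ of $T^\cM$, padding with dummy leaves as needed. For each $a\in V_G$, let $T_a\subseteq T^\cM$ be the minimal subtree containing the root and every leaf $v_e$ with $a\in e$; these are exactly the tree vertices at which the guess for $\phi(a)$ must be carried. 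Since $T_a$ is a Steiner subtree of a subdivided binary tree of depth $O(\log|E_G|)$ spanning $O(\deg_G(a))$ leaves, $|T_a|=O(\deg_G(a)\log|E_G|)$ and $\sum_a|T_a|=O(|E_G|\log|E_G|)$.

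At each vertex $v$ of $T^\cM$ I place on the pattern side the tile $\patterntile{$W_v$}$, where $W_v=\{a\in V_G:v\in T_a\}$ is the active set, and on the text side one of four tile types chosen according to the role of $v$: the anchor and assignment tiles at the root; identity-style tiles along each subdivision path that carry $W_v$ unchanged; a new duplicate tile, modelled on the branch tile of Subsection~\ref{subsec:building-blocks}, at each degree-three vertex of $T^\cM$, which routes every active guess simultaneously into both children; and, at each leaf $v_e$ with $e=\{a,b\}$, a simplified test tile whose atomic pairs correspond exactly to those pairs $(v^a_i,v^b_j)$ with $\{v^a_i,v^b_j\}\in E_H$. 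All tiles carry the standard guards. The pattern $\pi$ and text $\tau$ are then produced as the modified $\cF$-assembly of these two families, with the root anchor pair serving as $A$ and $B$.

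Correctness follows the template of Lemma~\ref{lem:long-path-reduction}: the guards combined with the sandwich property force any anchor-preserving embedding to be grid-preserving; the root assignment tile fixes a candidate $\phi(a)$ for each $a\in V_G$; and a top-down induction over $T^\cM$, using the identity and duplicate tiles, forces the image of $X_a$ at every vertex of $T_a$ to correspond to the same $\phi(a)$. The leaf test tile at $v_e$ with $e=\{a,b\}$ then admits a valid local embedding exactly when $\{v^a_{\phi(a)},v^b_{\phi(b)}\}\in E_H$, so an embedding respecting the anchors exists iff PSI is satisfiable. For the size bounds, the pattern has $\sum_v|W_v|=\sum_a|T_a|=O(|E_G|\log|E_G|)$ atomic pairs. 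On the text side, the root tiles contribute $O(|V_H|)$, the $O(|E_G|)$ subdivision and duplicate tiles each contribute at most $2|V_H|$ points, and the leaf test tiles contribute $O(|E_H|)$ in aggregate (each edge of $H$ is used by at most one leaf) plus $O(|V_H|\cdot|E_G|)$ for the per-leaf sandwich pairs.

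The main technical obstacle will be the design of the duplicate tile at a degree-three vertex $v$ of $T^\cM$. It must, in a single monotone cell, hold atomic pairs that simultaneously match the tiles in all three neighbouring cells (parent and two children) so that for each active $a\in W_v$ the embedded image of $X_a$ is forced to coincide across all three. I plan to achieve this by adapting the branch/test/merge mechanism of the LPP reduction: the three neighbours of $v$ in the cell graph split, by the consistency of $\cF$, into one row-sharing edge and two column-sharing edges (or symmetrically), and the sandwich constraints then propagate from the parent into two independent copies of the active pairs inside $v$, each sending its sandwich into one child while being coupled through the common parent to correspond to the same $\phi(a)$.
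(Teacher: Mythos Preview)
Your high-level plan—using the computable DTP to produce a tree-shaped cell graph, carrying the active set $W_v=\{a:v\in T_a\}$ at each vertex, and taking modified $\cF$-assemblies—is exactly what the paper does, and your size analysis is correct. But you have the difficulty backwards.

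The part you flag as ``the main technical obstacle''—a duplicate tile at degree-$3$ vertices—is not an obstacle at all. The paper simply places the ordinary $\idtile{W_v}$ at every stem vertex, including the branching ones. This works because the atomic pair $Y^{\msfid}_{a,j}$ sits on the diagonal, with identical $x$- and $y$-structure: once the parent's sandwich (in one axis) forces $X_a\mapsto Y^{\msfid}_{a,\phi(a)}$, that same pair automatically sandwiches the corresponding pair in \emph{each} child, whether the child shares a row or a column with $v$. No new tile type is needed, and trying to model the duplicate on the branch tile (whose $x$- and $y$-structures differ by a rank/reverse-rank twist) would only make this harder.

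The genuine gap is at the leaves. Your ``simplified test tile'' cannot check the edge $\{a,b\}$ in a single cell. The pattern at the leaf is $\patterntile{\{a,b\}}$, so you have only the two atomic pairs $X_a,X_b$ (plus guards) and a single within-tile relation $X_a<X_b$; the parent constrains only one axis. With those ingredients you can force $j=\phi(b)$ from the parent's axis and one inequality between $\phi(a)$ and the free index $i'$ from the $X_a<X_b$ relation, but never both inequalities needed to pin $i'=\phi(a)$. The paper handles this with \emph{two} cells per edge: it requires each leaf to be at distance at least two from any high-degree vertex, so that the leaf's parent (the \emph{petiole}) is a degree-$2$ corner. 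The petiole carries $\branchtile{a}{\{b\}}$, which reads the row-constraint from its identity parent and re-encodes it as a column-constraint carrying $(\phi(a),i',\phi(b))$; the leaf then carries $\testtile{a}{\{b\}}$, whose skew-sum-of-blocks structure uses both axes to force $i'=\phi(a)$ and hence the edge membership. You should replace your single-leaf test by this branch/test pair and drop the duplicate tile entirely.
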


\begin{proof}
  In this reduction, we combine the ideas of the reduction for the LPP 
(Lemma~\ref{lem:long-path-reduction}) with the proof that DTP implies near-linear tree-width 
(Proposition~\ref{pro-tree}).
  
  Due to the computable DTP property, we can compute a monotone gridding matrix $\cM$ such that $\Grid(\cM) \subseteq \cC$ and its cell graph $G_\cM$ is a rooted tree $T$ with the following properties.
The root $r$ has a single child $r'$ sharing a common row, $T$ has exactly $k$ leaves, every  
non-root vertex with a single child is a corner, every leaf shares a common column with its neighbor 
and moreover, every leaf is at distance at least two from the nearest vertex with degree larger 
than~1.
  Moreover, the distance of any two vertices in $T$ is $O(\log k)$.
  The deep tree property guarantees an existence of such $\cM$, since we can always take a slightly larger tree and then cut off some branches to achieve the desired shape.
  
  Following along the proof of Proposition~\ref{pro-tree}, we orient the edges of $G_\cM$ consistently away from $r$ and for any vertex $v$, the descendants of $v$, denoted by $D(v)$, are all the out-neighbors of $v$.
  We arbitrarily order the edges $E_G = \{e_1, \dots, e_k\}$ and also the $k$ leaves of $G_\cM$ as 
$v_1, \ldots, v_k$, and we define the sets $A_w$ exactly as in~\eqref{eq:tree-Aw} on 
page~\pageref{eq:tree-Aw}.
  We additionally assume that $A_r = [k]$ which corresponds to $G$ having no isolated vertices.
  We again have $\sum_{v}A_v \in O(k \log k)$.
  
  We call the only neighbor of the leaf $v_i$ a \emph{petiole}, denoted it $u_i$, and we call the set of all vertices that are neither one of $r$, $r'$ nor the leaves or petioles the \emph{stem}.
  We define two families of tiles $\cP$ and $\cT$.
  We set both $P_{r}$ and  $T_{r}$ to be the anchor tile \anchortile{}.
  The tile $T_{r'}$ is set to be the assignment tile \assigntile{}.
  We set the tile $P_v$ for every vertex $v$ of the tree other than $r$ to the tile \patterntile{$A_v$}.
  For every $v$ that is part of the stem, we set $T_v$ to be the identity tile \idtile{$A_v$}.
  For $i \in [k]$ such that $e_i = \{a,b\}$ with $a < b$, we set $T_{u_i}$ to be the branch tile \branchtile{$a$}{$\{b\}$} and we set $T_{v_i}$ to be the test tile \testtile{$a$}{$\{b\}$}.
  
  By Lemma~\ref{lem-acyclictrans}, there is a consistent orientation $\cF$ of $\cM$.
  We set $\pi$ to be the modified $\cF$-assembly of $\cP$ and $\tau$ to be the modified $\cF$-assembly of $\cT$.
  And as before, we set $A$ as the image of $P_1$ and $N$ as the image of $T_1$.
  The total size of $\cP$ is clearly $O(k \log k)$.
  The size of all tiles in $\cT$ corresponding to leaves and petioles is $O(|E_H|)$ and every of the remaining $O(|E_G|)$ tiles contains at most $O(|V_H|)$ points which gives the desired bounds on the sizes of $\pi$ and $\tau$.
  
  
  \subparagraph{Correctness.}
  The correctness essentially follows the same arguments as in the case of LPP.
  For the ``only if'' part, suppose that there is a mapping $\phi\colon [k] \to \mathbb{N}$ 
witnessing that $(G, H, \chi)$ is a positive instance of \PSIx.
  We define a grid-preserving mapping $\psi$ of $\pi$ into $\tau$.
  In the \assigntile{} tile, we map $X_a$ to $Y^\msfa_{a, \phi(a)}$ and in each identity tile $\idtile{W}$ where $a \in W$, we map $X_a$ to $Y^\msfid_{a, \phi(a)}$ .
  In the petiole $u_i$ and the leaf $v_i$ such that $e_i = \{a,b\}$ with $a < b$, the mapping $\psi$ sends $X_a$ to $Y^\delta_{a, \phi(a)}$ and $X_b$ to $Z^\delta_{a, \phi(a)b,\phi(b)}$ for suitable $\delta$.
  Observe that $Z^\delta_{a, \phi(a)b,\phi(b)}$ is well-defined since $\{v^a_{\phi(a)}, v^b_{\phi(b)} \} \in E_H$.
  The same arguments as in the case of LPP show that $\psi$ is indeed an anchored embedding of $\pi$ into $\tau$.
  
  For the ``if'' part, suppose there is an anchored embedding $\psi$ of $\pi$ into $\tau$.
  Following the guards from the root, we see that $\psi$ must be grid-preserving and as before, we define the mapping $\phi\colon [k] \to \mathbb{N}$ such that the pair $X_a$ in $P_{r'}$ maps to $Y^\msfa_{a, \phi(a)}$ in $T_{r'}$.
  It inductively follows that in any vertex of the stem, $X_a$ must be mapped to $Y^\msfid_{a,\phi(a)}$ (assuming $a \in W$).
  This holds in particular for the parent of each petiole $u_i$.
  Consequently, the petiole $u_i$ and leaf $v_i$ can be seen as applying the reduction from Lemma~\ref{lem:long-path-reduction} to check the subgraph property for the single edge $\{a,b\}$ in the subgraph of $H$ induced by $V_a \cup V_b$.
  Thus, the same arguments show that $\{v^a_{\phi(a)}, v^b_{\phi(b)}\} \in E_H$ and $(G, H, \chi)$ is 
a positive instance of \PSIx.
\end{proof}

Observe that both reductions produce $\pi$ and $\tau$ as gridded permutations belonging to some monotone grid class $\Grid(\cM)$ via an $\cF$-assembly from families of tiles.
Importantly, they share the property that any embedding of $\pi$ into $\tau$ that maps $A$ to $B$ 
must be grid-preserving, i.e., it maps the $(i,j)$-cell of the gridding of $\pi$ to the $(i,j)$-cell 
of the gridding of $\tau$ for every $i$ and $j$.
Moreover, both $A$ and $B$ are pairs of consecutive points in the left-to-right order.

\subsection{Consequences}

\begin{theorem}
\label{thm:tw-bound}
If $\cC$ has the computable LPP then \PPPM{$\cC$} cannot be solved in time $f(t) \cdot n^{o(t)}$ 
where $t = \tw(\pi)$ for any function $f$, unless ETH fails.
\end{theorem}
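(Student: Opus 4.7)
The plan is to chain two reductions. First, use Lemma~\ref{lem:long-path-reduction} to reduce \textsc{Partitioned Clique} on parameter $k$ to an \APPPM{$\cC$} instance $(\pi,\tau,A,B)$ with $|\pi|\in O(k^2)$, $|\tau|$ polynomial in $n=|V_H|$, and crucially $\tw(\pi)\in O(k)$. Second, eliminate the anchors to obtain an equivalent \PPPM{$\cC$} instance $(\pi',\tau')$ with essentially the same tree-width, so that a hypothetical $f(t)\cdot n^{o(t)}$-time algorithm for \PPPM{$\cC$} would solve \textsc{Partitioned Clique} in time $f(O(k))\cdot n^{o(k)}$, contradicting Theorem~\ref{thm:clique-hardness} under ETH.

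For the anchor-removal step I want to use the promise built into the definition of \APPPM{$\cC$}: the two points of $A$ can be inflated into a monotone sequence (increasing if $A$ is an increasing pair, decreasing otherwise) of any desired length while keeping the pattern in~$\cC$. Choose a threshold $L$ polynomial in $|\tau|$ (concretely $L=|\tau|+1$ already rules out many competing subsequences); blow up each point of $A$ in $\pi$ into a monotone run of length $L$ to form $\pi'\in\cC$, and blow up each point of $B$ in $\tau$ the same way to form $\tau'$. The forward direction of the equivalence is immediate: any anchored embedding $\pi\hookrightarrow\tau$ extends to $\pi'\hookrightarrow\tau'$ by sending each inflated block of $\pi'$ onto the matching inflated block of $\tau'$. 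For the reverse direction I intend to leverage the two structural facts explicitly noted after Lemma~\ref{lem:deep-tree-reduction}: that $A$ and $B$ are pairs of consecutive points in left-to-right order, and that any embedding respecting the anchors is grid-preserving. With $L$ exceeding the length of the longest monotone subsequence of $\tau$ outside its anchors, the inflated blocks of $\pi'$ are forced to sit inside the inflated blocks of $\tau'$; stripping off the inflations then recovers an anchored embedding $\pi\hookrightarrow\tau$ sending $A$ to~$B$.

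To control tree-width, observe that inflating a single vertex of $G_\pi$ into a monotone run of length $L$ replaces that vertex by a path on $L$ vertices whose endpoints inherit the original neighbours; a standard argument (take any tree decomposition of $G_\pi$, and for each bag that contains the inflated vertex, unfold a chain of bags containing consecutive pairs of inflated vertices) shows that $\tw(\pi')\le\tw(\pi)+O(1)\in O(k)$. The sizes of $\pi'$ and $\tau'$ remain polynomial in $n$, and the whole reduction takes time $f(k)\cdot n^{O(1)}$, completing the chain.

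The main obstacle will be the reverse direction of the embedding equivalence: making rigorous the claim that every embedding $\pi'\hookrightarrow\tau'$ sends each inflated block of $\pi'$ entirely inside one inflated block of $\tau'$, rather than letting the image ``spill'' into the rest of~$\tau'$. This requires a careful length argument about the monotone subsequences of $\tau$ constructed by Lemma~\ref{lem:long-path-reduction}, and possibly a small sharpening of that construction — for instance, forcing the anchors to sit at extremal positions of the diagrams, or breaking up long monotone runs elsewhere in $\tau$ — so that the only way to host the long monotone run from $\pi'$ is inside the designated inflated block of $\tau'$. Once this is done, the consecutive-in-$x$ property distinguishes the two anchor-blocks of $\pi'$ and identifies each with its counterpart in $\tau'$, which is enough to reduce the embedding back to an anchored embedding of $\pi$ into $\tau$.
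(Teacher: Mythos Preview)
Your plan is the paper's: reduce from \textsc{Partitioned Clique} via Lemma~\ref{lem:long-path-reduction}, remove the anchors by inflating $A$ and $B$ with long monotone runs, and check that tree-width grows by only a constant.

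The reverse direction of the equivalence, which you flag as the main obstacle and propose to attack via bounds on monotone subsequences of~$\tau$ (possibly after sharpening the construction), is in fact much simpler, and this is where you diverge from the paper. Take the inflation length to be $m=|\tau|$. Then the two inflated anchors of $\pi'$ together contain $2m$ points, whereas $\tau'$ has only $m-2$ points that are \emph{not} in its inflated anchors. By pigeonhole, any embedding of $\pi'$ into $\tau'$ sends at least $m+2$ of those $2m$ points into the inflated anchors of $\tau'$; since each inflated anchor of $\pi'$ has only $m$ points while there are only $m-2$ non-anchor targets, each of the two anchors of $\pi'$ must send at least one point into an anchor of~$\tau'$. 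Because $A$ and $B$ are consecutive pairs in the left-to-right order, the embedding can then be modified so that the inflated anchors of $\pi'$ map entirely into those of $\tau'$, and deflation yields the required anchored embedding. No analysis of monotone runs in $\tau$ and no modification of Lemma~\ref{lem:long-path-reduction} is needed.

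For the tree-width step the paper's argument is also slightly cleaner than bag-unfolding: inflating a single point by a monotone pair raises tree-width by at most one, and any longer monotone inflation is obtained from the length-$2$ case by edge subdivisions of the incidence graph, which do not increase tree-width. Two inflations give $\tw(\pi')\le\tw(\pi)+2$.
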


\begin{proof}
  Let $(\pi,\tau, A,B)$ be the instance of \APPPM{$C$} produced by Lemma~\ref{lem:long-path-reduction} and let $m$ be the length of $\tau$.
  We define $\pi'$ as the permutation obtained from $\pi$ by inflating both of the anchors in $A$ with  either two increasing or decreasing sequences of length $m$ such that $\pi'$ is still contained in $\cC$.
  Recall that one of these inflations is always possible.
  And similarly, we let $\tau'$ be the permutation obtained from $\tau$ by inflating both of the anchors in $B$ with the same type of monotone sequences of length $m$ as in $\pi'$.
  
  We claim that $\pi'$ is contained in $\tau'$ if and only if $(\pi,\tau, A,B)$ is a positive instance of \APPPM{$C$}.
  It is clear that if there is an embedding of $\pi$ into $\tau$ that maps $A$ to $B$, then there is an embedding of $\pi'$ into $\tau'$.
  
  For the other direction, assume there is an embedding $\phi$ of $\pi'$ into $\tau'$.
  The inflated anchors in $\pi'$ contain exactly $2m$ points while $\tau'$ contains only $m-2$ points outside of its inflated anchors.
  Therefore, at least $m+2$ points of the inflated anchors in $\pi'$ are mapped by $\phi$ to the inflated anchors in $\tau'$ and in particular, there must be at least one point in each of the  anchors in $\pi'$ mapped to the corresponding anchor in $\tau'$.
  Since the anchors $A$ and $B$ are pairs of consecutive points, observe that we can, in fact, map the whole inflated anchors in $\pi'$ to the inflated anchors in $\tau'$.
  It follows that we obtain a desired anchored embedding of $\pi$ into $\tau$ by deflating the anchors back to a single point.
  
  Finally, we show that $\tw(\pi') \le \tw(\pi) + 2$.
  The desired bound follows as otherwise, we could use a faster algorithm for \PPPM{$\cC$} to decide the instance $(\pi,\tau, A,B)$ of \APPPM{$C$} and consequently refute ETH by the ``moreover'' part of Lemma~\ref{lem:long-path-reduction}.
  We claim that in general, if $\sigma'$ is obtained from $\sigma$ by inflating one point with a monotone sequence then  $\tw(\sigma') \le \tw(\sigma) + 1$.
  To see that, notice that when we inflate a point of $\sigma$ with a monotone sequence of length 2, we get $\sigma'$ such that  $\tw(\sigma') \le \tw(\sigma) + 1$.
  However, if we inflate the same point by a longer monotone sequence and obtain a permutation $\sigma''$ then $G_{\sigma''}$ can be obtained by edge subdivisions from $G_{\sigma'}$, and it is well-known that subdividing en edge does not increase tree-width.
\end{proof}

In order to show the hardness of \PSPPM{$\cC$}, we first reduce to an intermediate problem called \emph{$\cC$-Pattern Surjective Colored PPM} (\SCPPPM{$\cC$}) whose input consists of a pattern $\pi \in \cC$, a text $\tau$ and a coloring $\chi\colon \tau \to [t]$. We need to decide whether there is an embedding of $\pi$ into $\tau$ that hits all $t$ possible colors.
This intermediate reduction allows us to infer conditional lower bounds for \PSPPM{$\cC$} via the following lemma.

\begin{lemma}[Berendsohn~\cite{BerendsohnMs}]
  \label{lem:berendsohn-reduction}
  Let there be an algorithm that solves \PSPPM{$\cC$} in time $f(k) \cdot n^{O(g(k))}$ for some functions $f$ and $g$. Then \SCPPPM{$\cC$}  can be solved in time $h(k) \cdot n^{O(g(k))}$ for some function $h$.
\end{lemma}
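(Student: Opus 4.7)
The plan is a straightforward inclusion--exclusion over the color palette. Let $(\pi,\tau,\chi)$ be an instance of \SCPPPM{$\cC$} with $|\pi|=k$, $|\tau|=n$ and $\chi\colon\tau\to[t]$. First I would note that we may assume $t\le k$: a surjective embedding uses at most $|\pi|=k$ points of $\tau$, so if $t>k$ we may answer \textsc{No} immediately. This observation is crucial, since it keeps the number of subsets of colors at most $2^k$, which can be absorbed into the function $h$.

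Next, for each $S\subseteq[t]$, let $\tau_S$ denote the subsequence of $\tau$ induced by the points with color in $S$ (reduced to a permutation), and let $N(S)$ be the number of occurrences of $\pi$ in $\tau_S$. Each $N(S)$ is exactly an instance of \PSPPM{$\cC$} on the same pattern $\pi\in\cC$ and a text of length at most $n$, so by the assumed algorithm it is computable in time $f(k)\cdot n^{O(g(k))}$. Writing $\operatorname{surj}$ for the number of embeddings of $\pi$ into $\tau$ whose image hits every color in $[t]$, a standard inclusion--exclusion gives
\[
\operatorname{surj}\;=\;\sum_{S\subseteq[t]}(-1)^{t-|S|}\,N(S),
\]
because every embedding of $\pi$ into $\tau$ is counted in $N(S)$ precisely when $S$ contains the (unique) set of colors hit by that embedding. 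The algorithm then outputs \textsc{Yes} iff this sum is positive.

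The total running time is at most $2^t\le 2^k$ calls to the counting oracle plus polynomial overhead for building each $\tau_S$, i.e.\ $2^k\cdot f(k)\cdot n^{O(g(k))}$, which is of the required form $h(k)\cdot n^{O(g(k))}$ with $h(k)=2^k f(k)\cdot\operatorname{poly}$. The only subtle step is justifying the reduction to $t\le k$ and observing that each restricted text $\tau_S$ is a valid \PSPPM{$\cC$} instance because the pattern $\pi\in\cC$ is unchanged; neither step poses any real obstacle, so the argument is essentially a textbook inclusion--exclusion wrapped around the counting oracle.
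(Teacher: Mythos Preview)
The paper does not actually prove this lemma; it is attributed to Berendsohn's thesis~\cite{BerendsohnMs} and only cited. Your inclusion--exclusion argument is exactly the standard way to derive the result, and it is correct: the key points --- reducing to $t\le k$ so that $2^t$ is a function of $k$ alone, restricting $\tau$ to each color subset $S$, and applying inclusion--exclusion to isolate the embeddings hitting all colors --- are all sound. One very minor remark: the polynomial overhead for constructing each $\tau_S$ contributes a factor of $n^{O(1)}$, which is absorbed into $n^{O(g(k))}$ only if $g$ is bounded away from~$0$; this is harmless in all applications in the paper (where $g(k)$ is $\sqrt{k}$ or $k/\log^2 k$), but you might want to mention it explicitly or fold it into the constant hidden in the $O(\cdot)$.
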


\begin{lemma}
  \label{lem:anch-surj-reduction}
  An instance $(\pi, \tau, A, B)$ of \APPPM{$\cC$} produced by Lemma \ref{lem:long-path-reduction} or \ref{lem:deep-tree-reduction} can be reduced to an instance $(\pi', \tau', \chi)$ of \SCPPPM{$\cC$} where $|\pi'| \in O(|\pi|)$ and $|\tau'| \in O(|\tau|)$ in polynomial time.
\end{lemma}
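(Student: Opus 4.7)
The plan is to replace the anchoring constraint by a surjective-coloring constraint via inflation. I would set $N = |\pi|$ (dispatching $|\pi| > |\tau|$ as a trivial no-instance). Using the anchor property of $(\pi,\tau,A,B)$, I would inflate each of the two points of $A = \{a_1,a_2\}$ in $\pi$ with a monotone sequence of length $N$ matching the type of $A$, obtaining $\pi' \in \cC$ whose inflated anchors I denote $a_i^{(1)}, \ldots, a_i^{(N)}$ for $i\in\{1,2\}$. I would inflate $B = \{b_1,b_2\}$ in $\tau$ in exactly the same way, producing $\tau'$. Then $|\pi'| = 3|\pi|-2 \in O(|\pi|)$ and $|\tau'| = |\tau|+2|\pi|-2 \in O(|\tau|)$, and the construction is clearly polynomial. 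The coloring $\chi\colon \tau' \to [2N+1]$ assigns to each of the $2N$ inflated anchor points of $\tau'$ its own unique color, while every remaining point receives the common color~$2N+1$.

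For the ($\Rightarrow$) direction, I would extend any anchored embedding of $\pi$ into $\tau$ to an embedding of $\pi'$ into $\tau'$ by sending $a_i^{(j)} \mapsto b_i^{(j)}$; this is valid because the two inflations are compatible, and it hits every color as $\pi$ contains at least one non-anchor point.

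For ($\Leftarrow$), given a surjective-on-colors embedding $\phi'$ of $\pi'$ into $\tau'$, the unique colors force every inflated anchor point of $\tau'$ to lie in the image, yielding well-defined preimages $P_i = (\phi')^{-1}(b_1^{(i)})$ and $Q_j = (\phi')^{-1}(b_2^{(j)})$. The first step is an easy observation: whenever two points of $\tau'$ are consecutive in left-to-right order, their $\phi'$-preimages (when both defined) must be consecutive in $\pi'$, since otherwise an intermediate point of $\pi'$ would have to map into the empty interval between them. Because $A$ and $B$ are consecutive pairs in their respective permutations, the inflated anchors of $\tau'$ form a single block of $2N$ consecutive points; iterating the observation, $P_1,\ldots,P_N,Q_1,\ldots,Q_N$ must be a block of $2N$ consecutive points in $\pi'$ realizing the same monotone pattern as the inflated anchors.

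The decisive step is to identify this preimage block with the inflated anchor block of $\pi'$. I would invoke the fact that both Lemma~\ref{lem:long-path-reduction} and Lemma~\ref{lem:deep-tree-reduction} produce $\pi$ as an $\cF$-assembly of tiles of polynomial-in-$k$ size whose neighbouring tiles interleave their contents across shared rows and columns; consequently every monotone run of consecutive points already present in $\pi$ has length $\mathrm{poly}(k) = o(N)$. This makes the inflated anchor region the unique block of $2N$ consecutive points in $\pi'$ forming a monotone sequence of the required type, forcing $P_i = a_1^{(i)}$ and $Q_j = a_2^{(j)}$. The restriction of $\phi'$ to $\pi\setminus A$ then lands in $\tau\setminus B$ by injectivity, and declaring $\phi(a_i) = b_i$ together with $\phi(x)=\phi'(x)$ for $x \in \pi\setminus A$ recovers the desired anchored embedding. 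The hard part will be precisely this uniqueness argument: bounding the length of the longest monotone consecutive run of $\pi$ in each of the two reductions and verifying that no shifted $2N$-block of consecutive points of $\pi'$ can realize the prescribed monotone pattern.
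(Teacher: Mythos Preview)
Your construction, your coloring, and the forward direction coincide with the paper's proof verbatim. The difference is entirely in the ($\Leftarrow$) direction, where you work much harder than necessary.

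After observing (correctly) that the preimages $P_1,\dots,P_N,Q_1,\dots,Q_N$ of the $2N$ uniquely-coloured points of $\tau'$ form a block of $2N$ consecutive points of $\pi'$, you try to prove this block \emph{equals} the inflated anchor block of $\pi'$, via a uniqueness-of-long-monotone-runs argument that you yourself flag as ``the hard part'' and that would require a separate structural analysis of the pattern produced by each of Lemmas~\ref{lem:long-path-reduction} and~\ref{lem:deep-tree-reduction}. The paper sidesteps all of this with a two-line pigeonhole: $\pi'$ has only $|\pi|-2$ non-anchor points, whereas all $2|\pi|$ uniquely-coloured anchor points of $\tau'$ lie in the image, so at least $|\pi|+2$ of them are hit by inflated-anchor points of $\pi'$; since each inflated $a_i$ contains only $|\pi|$ points, each contributes at least one point, and order-preservation sends it to the \emph{corresponding} inflated $b_i$. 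Then, exactly because $A$ and $B$ are consecutive pairs (the fact you already used), one can reroute the whole inflated anchors of $\pi'$ onto the inflated anchors of $\tau'$ and deflate, as in the proof of Theorem~\ref{thm:tw-bound}. No monotone-run analysis of $\pi$ is needed.

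In short: your ``decisive step'' is not needed at all. Your own consecutiveness observation already implies that the preimage block and the inflated-anchor block of $\pi'$, both of size $2N$ inside a permutation of size $3N-2$, overlap in at least $N+2$ points --- which is precisely the paper's counting argument. From there you can finish in two lines instead of embarking on the structural analysis you anticipate.
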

\begin{proof}
  The general idea of the proof is the same as in Theorem~\ref{thm:tw-bound} -- we force matching of the anchors by inflating them with long monotone sequences.
  The \SCPPPM{$\cC$} problem, however, allows us to use sequences with length depending only on $\pi$. 
  Let $k$ be the length of $\pi$ and let $\pi'$ be the permutation obtained by inflating the anchors $A$ with either two increasing or decreasing sequences of length $k$ such that $\pi' \in \cC$, and let $\tau'$ be the permutation obtained by the same inflation of the anchors $B$.
  We define $\chi\colon \tau \to [2k+1]$ by coloring every point added during the inflation with a unique color and using a single additional color for every other point.
  Clearly, $|\pi'| \in O(|\pi|)$ and $|\tau'| \in O(|\tau|)$.

  We need to verify the correctness of our construction. If $(\pi, \tau, A, B)$ is a positive instance of \APPPM{$\cC$} then $(\pi', \tau', \chi)$ is a positive instance of \SCPPPM{$\cC$} as we can simply map the inflated anchors of $\pi'$ to the inflated anchors of $\tau'$.
  For the other direction, assume there is an embedding $\phi$ of $\pi'$ into $\tau'$ that hits all the $2k+1$ colors.
  In other words, the image of $\pi'$ under $\phi$ contains the whole inflated anchors of $\tau'$.
  Since there are only $k-2$ points in $\pi'$ outside of the anchors, at least $k+2$ points of the anchors in $\pi'$ maps to the anchors in $\tau'$.
  In particular, there must be at least one point in each of the two increasing inflated anchors in $\pi'$ that maps to the corresponding anchor in $\tau'$.
  By the same argument as in the proof of Theorem~\ref{thm:tw-bound}, we conclude that the inflated anchors map without loss of generality to the inflated anchors.
\end{proof}

\begin{theorem}\label{thm:main}
Unless ETH fails, \PSPPM{$\cC$} cannot be solved for any function $f$
\begin{itemize}
  \item in time $f(k) \cdot n^{o\left(\sqrt{k}\right)}$ if $\cC$ has the computable LPP, and
  \item in time $f(k) \cdot n^{o\left(k/\log^2 k\right)}$ if $\cC$ has the computable DTP.
\end{itemize}
\end{theorem}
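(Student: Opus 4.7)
The plan is to chain the three reductions developed in this section with the appropriate ETH-based hardness results. In both cases, the strategy is the same: start from a base problem known to be hard under ETH, push the instance through Lemma~\ref{lem:long-path-reduction} (for LPP) or Lemma~\ref{lem:deep-tree-reduction} (for DTP) to obtain an equivalent \APPPM{$\cC$} instance, apply Lemma~\ref{lem:anch-surj-reduction} to get an \SCPPPM{$\cC$} instance of the same asymptotic size, and finally invoke Lemma~\ref{lem:berendsohn-reduction} to convert any hypothetical counting algorithm for \PSPPM{$\cC$} into an algorithm for \SCPPPM{$\cC$} with the same dependence on the text length.

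For the LPP part, I would reduce from \textsc{Partitioned Clique} with $k$ colors and $n = |V_H|$ vertices, which by Theorem~\ref{thm:clique-hardness} admits no $f(k) \cdot n^{o(k)}$ algorithm under ETH. After the three-stage reduction chain, the resulting \SCPPPM{$\cC$} instance has $|\pi'| \in O(k^2)$ and $|\tau'| \in O(n^2)$. A hypothetical \PSPPM{$\cC$} algorithm running in time $f(k') \cdot n'^{o(\sqrt{k'})}$ would therefore decide the original \textsc{Partitioned Clique} instance in time $h(k) \cdot (n^2)^{o(\sqrt{k^2})} = h(k) \cdot n^{o(k)}$, contradicting Theorem~\ref{thm:clique-hardness}.

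For the DTP part, I would reduce from \PSI with $|V_G| = |E_G| = k$ (as permitted by the ``moreover'' clause of Theorem~\ref{thm:psi-hardness}) and $n = |V_H|$; we may assume $n \ge k$, since otherwise some color has no vertex in $V_H$ and the instance is trivially negative. Lemma~\ref{lem:deep-tree-reduction} then produces a pattern of length $|\pi'| \in O(k \log k)$ and a text of length $|\tau'| \in O(n^2 + nk) = O(n^2)$. A hypothetical \PSPPM{$\cC$} algorithm running in time $f(k') \cdot n'^{o(k' / \log^2 k')}$ translates, via $k' = O(k \log k)$ and consequently $k' / \log^2 k' = O(k / \log k)$, into a \PSI algorithm with running time $h(k) \cdot n^{o(k / \log k)}$, contradicting Theorem~\ref{thm:psi-hardness}.

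All the substantive content is already encapsulated in the three reduction lemmas, so this final step is essentially bookkeeping. The main thing to verify is that the polynomial blow-up in the pattern-length parameter through the reduction chain is precisely balanced by the shape of the target lower bound: a quadratic blow-up absorbed by the square root in the LPP case, and a $\log$-factor blow-up absorbed by the $\log^2$ in the denominator in the DTP case. No further ingredients beyond the lemmas of this section are needed.
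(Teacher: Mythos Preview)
Your proposal is correct and follows exactly the same reduction chain as the paper's proof, which simply lists the chain of lemmas without spelling out the arithmetic. Your explicit verification that the parameter blow-ups ($k \mapsto \Theta(k^2)$ in the LPP case and $k \mapsto \Theta(k\log k)$ in the DTP case) are absorbed by the target exponents is a welcome addition that the paper leaves implicit.
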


\begin{proof}
  For $\cC$ with the computable LPP, a faster algorithm would refute ETH via
  \[ \small \text{\textsc{Partitioned Clique}}
  \oto{Lemma~\ref{lem:long-path-reduction}}
  \text{\APPPM{$\cC$}} \oto{Lemma~\ref{lem:anch-surj-reduction}}
  \SCPPPM{$\cC$} \oto{Lemma~\ref{lem:berendsohn-reduction}} \PSPPM{$\cC$}\]
  
  Whereas for $\cC$ with the computable DTP, a faster algorithm would refute ETH via
    \[ \small \text{\PSI}
  \oto{Lemma~\ref{lem:deep-tree-reduction}}
  \text{\APPPM{$\cC$}} \oto{Lemma~\ref{lem:anch-surj-reduction}}
  \SCPPPM{$\cC$} \oto{Lemma~\ref{lem:berendsohn-reduction}} \PSPPM{$\cC$}. \qedhere\]
\end{proof}

\bibliography{bibliography}

\end{document}